\newcommand{\blu}{\color[rgb]{0,0,1}}
\def\lista#1
\newcounter{theorem}
\numberwithin{equation}{section}
\newtheorem{theorem}{Theorem}[section]
\newtheorem{corollary}[theorem]{Corollary}
\newtheorem{definition}[theorem]{Definition}
\newtheorem{lemma}[theorem]{Lemma}
\newtheorem{proposition}[theorem]{Proposition}
\newtheorem{remark}[theorem]{Remark}
\newenvironment{proof}[1][Proof]{\textbf{#1.} }
{\ \Box\smallskip}
\newcommand{\F}{\mathcal{F}}
\newcommand{\cP}{{\cal P}}
\newcommand{\bu}{\mbox{{$u$}}}
\newcommand{\bxi}{\mbox{{$\xi$}}}
\newcommand{\bv}{\mbox{{$v$}}}
\newcommand{\bw}{\mbox{{$w$}}}
\newcommand{\bsigma}{\mbox{{$\sigma$}}}
\newcommand{\sigmael}{\mbox{{$\sigma_{el}$}}}
\newcommand{\btheta}{\mbox{{$\theta$}}}
\newcommand{\bphi}{\mbox{{$\phi$}}}
\newcommand{\bpsi}{\mbox{{$\psi$}}}
\newcommand{\bvarphi}{\mbox{{$\varphi$}}}
\newcommand{\beeta}{\mbox{{$\eta$}}}
\newcommand{\skalar}[1]{\langle #1 \rangle}
\newcommand{\dual}[2]{\langle #1 \rangle_{{#2}'\times #2}}
\def\liminfn{\liminf\limits_{n\rightarrow\infty}}
\def\limsupn{\limsup\limits_{n\rightarrow\infty}}
\title{
{\bf Dynamic thermoviscoelastic thermistor problem with contact  and
nonmonotone friction\footnote{Research supported by the Marie Curie International Research Staff
Exchange Scheme Fellowship within the 7th European Community
Framework Programme under Grant Agreement No. 295118, and
the National Science Center of Poland under the Maestro Advanced
Project no. DEC-2012/06/A/ST1/00262. } \\}}
\author{ Krzysztof Bartosz$^1$, Tomasz Janiczko$^1$, Pawe\l{} Szafraniec$^1$
and Meir Shillor$^2$  \\[5mm]
{\normalsize  $^1$Faculty of Mathematics and Computer Science, Jagiellonian University}\\ 
{\normalsize  ul.\ $\L$ojasiewicza 6, 30--348 Krakow, Poland  } \\
{\normalsize  $^2$Department of Mathematics and Statistics, Oakland University,}\\
{\normalsize  Rochester, MI 48309-4401, USA.} }
\date{}
\begin{document}

\maketitle
\thispagestyle{empty}

\noindent{\small {\bf Abstract}
The paper studies the evolution of the thermomechanical and electric state
of a thermoviscoelastic thermistor that is in frictional contact with a reactive foundation. 
The mechanical process is dynamic, while the electric process is quasistatic. 
Friction is modeled with a nonmonotone relation between the tangential traction
and tangential velocity.  Frictional heat generation is taken into account and so is
the strong dependence of the electric conductivity on the temperature. The 
mathematical model for the process is in the form of a system that consists
of dynamic hyperbolic subdifferential inclusion for the mechanical state coupled 
with a nonlinear parabolic equation for the temperature and an elliptic equation for the 
electric potential. The paper establishes the existence of a weak solution to the problem 
by using time delays, a priori estimates and a convergence method.

\vskip 3mm \noindent{\bf Keywords}: }  thermoviscoelastic thermistor; 
temperature dependent electric conductivity; evolution hemivariational inequality;
 frictional contact; time delay; existence of a weak solution

\section{Introduction}
\label{intro}

The ``Thermistor Problem'' refers to a mathematical model that consists of a nonlinear 
parabolic equation for the temperature coupled with an elliptic equation for the
quasistatic evolution of the electric potential. The coupling is, in part, effected
by the strong dependence of the electrical conductivity on the temperature, which makes the
problem highly nonlinear. The model describes a common device, the {\it thermistor}, 
in which the electrical and thermal effects 
are strongly interdependent. The problem has received considerable attention in the mathematical and 
computational literature, see, e.g., \cite{ALM02, ALZ99, AMMiTo12, AC, Ch,  Cim02, Cim11,
Ho89, HRS93, KBO99, La,  XA1, Xu1, Xu2, ZW97} and the many references therein. 
However, thermistors are solid bodies with thermomechanical properties, a fact 
not taken into account in these references.  The addition of thermomechanical effects, 
which may have considerable implications for the device reliability, leads to the 
`thermoviscoelastic thermistor problem' that was investigated in \cite{KS08}, 
where the existence of weak solutions to the full model was established. Related  
mathematical models were studied in \cite{GSX, SSX} and \cite{XS}.

The original model describes the combined effects of heat conduction,
electrical current and Joule's  heat generation in a device made of a material that has 
strong temperature-dependent electrical conductivity. There are Positive
and Negative Temperature Coefficient thermistors, usually denoted by
PTC  and NTC, respectively; in the former the electrical conductivity
decreases with increasing temperature whereas in the latter it increases with 
the temperature. PTC thermistors may be used in switches or electric surge protection
devices, among other applications. A PTC electric surge device operates as follows: 
when there is a sudden current increase in the circuit the device heats up, which leads 
to a sharp drop in its electrical conductivity, thus shutting down the circuit. Once the 
surge is over the device cools down, its conductivity increases and the circuit again 
becomes fully operational. However, it was found that the sudden temperature increase 
may cause high thermal stresses that affect the integrity
of the device (\cite{HRS93, La}) causing the appearance of cracks and device failure.   
The electro-thermoelastic aspects of the thermistor were studied in \cite{KS08} where the
model was set as a fully coupled system of equations for the temperature, electrical
potential  and (visco)elastic displacements. The  material constitutive behavior 
was assumed to be linear since the nonlinearities in the system 
resided in the electrical conductivity, the Joule heatings and the viscous  heating terms. The 
existence of a weak solution for the problem was established using  
regularization, time-retarding, and a convergences argument.

In this work we extend the model in \cite{KS08} and the main novelty here is the addition 
of dynamic frictional contact between the thermistor and a reactive foundation and the
related frictional heat generation on the contact surface. 
Frictional contact has seen a wealth of mathematical and computational results in the last 
two decades, see, e.g., the monographs \cite{EJK05, SST04, HS02, HMS15, MOSBOOK, SoMa11} 
and the many references therein. Here, we allow the friction condition to be nonmonotone,
which leads to a hemivariational inequality formulation of the mechanical
part of the model. Indeed, whereas the usual friction law, the Coulomb's law 
of dry friction, is monotone and leads to a variational inequality formulation, 
allowing for nonmonotone relation between the surface slip rate and the surface shear stress
leads to a subdifferential condition involving the Clarke subdifferential, which leads to the
hemivariational inequality formulation. A thorough discussion of hemivariational inequalities
and their relationship with contact problems can be found in the monograph \cite{MOSBOOK}.
For the sake of generality, we assume that the material is thermo-viscoelastic, which is
the case in metals and many other materials; and the normal
contact traction is known, which is the case when contact is light or a very heavy
normal traction results on the contact surface. 

The main result in this work is the proof of existence of a weak solution 
to a problem of a thermoviscoelastic thermistor that is in frictional contact with 
a reactive foundation. We note that the uniqueness of the solution remains an open issue. 
The existence proof is based on time delay, a priori estimates and a  convergence technique.  
Thus, this paper extends the mathematical theory of contact mechanics (MTCM) so that 
it includes electrical phenomena. We note that piezoelectric contact has been studied in 
\cite{BS09, LSS07} (see also the references therein) but the thermal effects were 
not included in those works.

In Section \ref{classical} we introduce the classical formulation 
of the model, Problem $\cP_{M}$, which is in the form of a hyperbolic-like system for the 
displacements that is coupled with a parabolic temperature equation and an elliptic
equation for the electric potential. The friction condition
leads to a Clarke subdifferential inclusion. Also, as noted above, 
we include frictional heat generation. The variational formulation of the problem
as a hemivariational inequality is presented in Section \ref{vi}. There, the necessary function spaces 
and operators are developed. The weak or variational formulation is given in Problem $\cP_{V}$.
The assumptions on the problem data are provided, and the existence of a weak solution is stated in 
Theorem \ref{mainthm}, which is the main result of this work. 
The proof of the theorem can be found in Section \ref{proof}.  It is based on time delay in some
of the nonlinear terms, a priori estimates and a convergence argument. The steps of the proof 
are presented in the lemmas. The necessary background material, especially about the Clarke 
subdifferential,  can be found in the Appendix.

Finally, we briefly point out here a few further issues that are of interest in a future study: 
finding conditions that guarantee the uniqueness of the solution--in view of the strong nonlinearities
this seems to be generally unlikely, so possibly special setting and geometry may be needed;
extending the results to thermo-elastic materials, by allowing the viscosity to vanish; using the
normal compliance or Signorini contact conditions instead of the given normal stress;
and making the contact surface exchange coefficients depend on the temperature.

\section{Problem formulation}
\label{classical}
In this section we describe the classical formulation of the dynamic thermoviscoelastic 
thermistor problem with frictional contact. 
Let $\Omega$ be an open bounded domain in $\mathbb{R}^d$ ($d = 2, 3$), with Lipschitz boundary.
The boundary $\Gamma=\partial\Omega$ is composed of three 
sets $\overline{\Gamma}_D$, $\overline{\Gamma}_N$ and
$\overline{\Gamma}_C$, with mutually disjoint relatively open sets
$\Gamma_D$, $\Gamma_N$ and $\Gamma_C$, such that 
${\rm meas} \, (\Gamma_D) >0$.  For $v \in L^2(\Gamma; \mathbb{R}^{\, d})$ 
we denote by $v_\nu$ and $v_\tau$ the usual normal and tangential 
components of $v$ on the boundary $\Gamma$, i.e.,  
$v_\nu = v \cdot \nu$ and $v_\tau = v - v_\nu \nu$  where $\nu$ 
denotes the unit outward normal vector on $\Gamma$. 
Similarly, for a regular tensor field $\sigma \colon \Omega \to \mathbb{S}^{d}$, 
we define its normal and tangential components by 
$\sigma_\nu = (\sigma \nu) \cdot \nu$ and 
$\sigma_\tau = \sigma \nu - \sigma_\nu \nu$, respectively. Here and below, 
summation over repeated indices is implied, and we refer to the Appendix for additional
mathematical terms.

We consider an anisotropic thermoviscoelastic body, which in the reference configuration
occupies the volume $\Omega$ and is stress free and at constant 
ambient temperature, conveniently set as zero. We are interested in a mathematical 
model that describes the evolution of the mechanical state of the body, its temperature 
and electric potential during the time interval $[0, T]$ where $0< T < \infty$.  To that end, 
we denote by $\sigma=\sigma(x,t)=(\sigma_{ij}(x,t))$ the stress tensor, 
$u=u(x,t)=(u_i(x,t))$ the displacement vector, $\dot u=\partial u/\partial t=\dot u(x,t)=(\dot u_i(x,t))$ 
the velocity vector, $\theta = \theta (x, t)$ the temperature field and $\phi = \phi(x,t)$ the 
electric potential, where $x \in \Omega$ and $t \in [0, T]$.
The functions $u \colon {\Omega}\times[0,T] \to \mathbb{R}^d$, 
$\sigma : {\Omega}\times[0,T] \to \mathbb{S}^d$,  
$\theta : \Omega \times [0, T] \to \mathbb{R}$ and $\phi \colon {\Omega}\times [0, T] \to \mathbb{R}$  are
the unknowns of the problem. To simplify somewhat the notation,  
without loosing clarity wherever possible, we suppress the explicit 
dependence of the functions on $x$ or $t$ and we omit the statement `in $\Omega\times(0,T)$'
below. Moreover, everywhere below $i,j,k,l=1,\dots d$, unless specified otherwise.
We suppose that the body is clamped on $\Gamma_D$, 
volume forces of density $f_0 = f_0(x, t)$ act in $\Omega$ and normal
surface tractions of density $f_2 = f_2(x, t)$ are applied on $\Gamma_N$. 
\medskip

We use an anisotropic Fourier-type law for the heat flux vector $q=(q_1,\dots,q_d)$, given by
\[
q_j=-k_{ij}(\btheta)\frac{\partial \btheta}{\partial x_i}, 
\]
where $K=(k_{ij})$ represents the thermal conductivity tensor. 
Assuming small displacements, the system of the equation of motion and 
the energy balance, respectively, are:
\[
\rho \ddot \bu_i-\frac{\partial}{\partial x_j}(\sigma_{ij})=f_{0i},
\]
\[
\rho c_p \dot \btheta +{\rm div}\  q=\sigmael(\btheta)|\nabla\bphi|^2
-m_{ij}\theta_{ref} \frac{\partial \dot \bu_i}{\partial x_j},
\]
where the material density $\rho$ and the heat capacity $c_p$ are assumed to be positive constants. 
Here and below, a dot above a variable indicates partial derivative with respect to time.
The behavior of the material is described by the linear thermoviscoelastic constitutive 
law of Kelvin-Voigt type,  
\[
\sigma_{ij} =a_{ijkl}\frac{\partial \dot \bu_k}{\partial x_l}  
+b_{ijkl}\frac{\partial \bu_k}{\partial x_l}-m_{ij}\btheta,
\]
where $a = (a_{ijkl})$ and $b = (b_{ijkl})$, are the viscosity and elasticity fourth order tensors, 
respectively, and  $(m_{ij})$ are the coefficients of thermal expansion tensor $m$.  
The electric potential satisfies
\[
{\rm div}(\sigmael(\btheta)\nabla\bphi)=0,
\]
which represents conservation of the electric charge, assuming that the only relevant electromagnetic 
effect is the quasistatic evolution of the electric potential without free charge accumulation.
The electric conductivity $\sigma_{el} = \sigma_{el}(\theta)$ is assumed to depend strongly 
on the temperature,  which is the case in ceramics, metals and many other materials. Next, 
we recall that $I = \sigma_{el}(\theta)\nabla \phi$ is the
electric current density, and $J = \sigma_{el}(\theta)|\nabla \phi|^2$ is the Joule heating -- the
power generated by the electric current, which is on the right-hand side of the heat
equation.

Our main interest lies in the contact and friction processes that take place on  
$\Gamma_C$. We assume that the normal contact traction $\sigma_\nu$ satisfies a bilateral 
condition of the form
\[
-\sigma_{\nu} = F
\]
on $\Gamma_C\times(0,T),$ where $F=F(x,t)$ is a given function. 
Friction can be described by a very general subdifferential inclusion
\[
-\sigma_\tau\in \partial j(\dot{u}_\tau), 
\]
which is a multivalued relation between the tangential force $\sigma_\tau$ 
and the tangential velocity $\dot{u}_\tau$ on $\Gamma_C\times(0,T)$.
However, to make it more concrete we model it with a version of
the Coulomb law of dry friction, 
\[
|\bsigma_\tau |  \leq \mu(|\dot{\bu}_\tau|)F \quad {\rm and}\; 
-\bsigma_\tau  = \mu(|\dot{\bu}_\tau|) F\frac{\dot{\bu}_\tau}{|\dot{\bu}_\tau|} 
\quad {\rm when } \quad \dot{\bu}_\tau\neq 0.
\]
Here, $\mu=\mu(|\dot{\bu}_\tau|)$ is the friction coefficient that is assumed to 
depend on the tangential velocity. Since we allow $\mu$ to be a decreasing function,
which is the case in most applications, we obtain a {\it nonmonotone friction condition}.
We note that in this case the {\it friction pseudo-potential} is  given by
$J(\dot{u}_\tau)=F |\dot{\bu}_\tau|$ and the friction condition may be written as
\[
-\sigma_\tau \in \mu(|\dot{\bu}_\tau|) \partial J(\dot{u}_\tau)
\]
on $\Gamma_C\times(0,T)$.  Next,  the power that is 
generated by frictional contact forces is given by
\[
\mu(|\dot{\bu}_\tau|) F |\dot{\bu}_\tau|,
\]
which we add to the heat exchange condition on $\Gamma_{C}$. Moreover, 
we assume that the foundation is electrically conducting and at zero potential, 
and the electric current flux is  proportional to the electric potential drop on the boundary,
\[
H_C(F)\phi,
\]
where the surface conductance coefficient $H_C$  is assumed to depend on the contact traction $F$.
\vskip4pt

Next, for the sake of simplicity, we assume that the temperature vanishes on
$\Gamma_D$. We recall that we scaled
the the temperature with respect to the ambient temperature, which then vanishes. 
An electric potential drop is maintained on $\Gamma_D$.
Finally, we denote by $u_0$, $v_0$ and $\theta_0$ the initial displacements, 
velocity and temperature, respectively. 
\vskip4pt

Collecting the various elements and assumptions above leads to the following {\it classical formulation 
of the problem of frictional contact for the electro-thermoviscoelastic thermistor}.

\smallskip\noindent {\bf Problem $\cP_{M}$}. {\it Find a displacement
$\bu:\Omega\times[0,T]\rightarrow\mathbb{R}^d$, a stress field
$\bsigma:\Omega\times[0,T]\rightarrow\mathbb{S}^d$, a temperature 
$\btheta:\Omega\times[0,T]\rightarrow\mathbb{R}$ and 
an electric potential $\bphi:\Omega\times[0,T]\rightarrow\mathbb{R}$ such that:
\begin{align}
&\label{M1} \rho c_p\dot\btheta-\frac{\partial}{\partial x_j}\left(k_{ij}(\btheta)\frac{\partial \btheta}
{\partial x_i}\right)=\sigmael(\btheta)|\nabla\bphi|^2-m_{ij}\theta_{ref} \frac{\partial \dot\bu_i}{\partial x_j}  
 \quad&{\rm in}\ & \Omega\times(0,T),\\[1mm]
&\label{M2} {\rm div} (\sigmael(\btheta)\nabla\bphi)=0 &{\rm in}\ &\Omega\times(0,T),\\[1mm]
&\label{M3} \rho \ddot{u_i}-\frac{\partial}{\partial x_j}(\sigma_{ij})=f_{0i}  &{\rm in}\ &\Omega\times(0,T), \\[1mm]
&\label{M4} \sigma_{ij}=a_{ijkl}\frac{\partial \dot\bu_k}{\partial x_l}  +b_{ijkl}\frac{\partial \bu_k}{\partial x_l}-m_{ij}\btheta 
 &{\rm in}\ &\Omega\times(0,T), \\[1mm]
&\label{M5} \bu=0, \ \btheta=0,\ \bphi=\phi_b \quad&{\rm on}\ &\Gamma_D\times(0,T),\\[1mm]
&\label{M6} -\left(k_{ij}(\btheta)\frac{\partial\btheta}{\partial x_i}\right)\nu_j = h_N\btheta  \quad&{\rm on}\ &\Gamma_N\times(0,T),\\[1mm]
&\label{M7} -\sigmael(\btheta)\frac{\partial\bphi}{\partial \nu} = H_N\bphi  \quad&{\rm on}\ &\Gamma_N\times(0,T),\\[1mm]
&\label{M8} \bsigma \nu = f_2 \quad&{\rm on}\ &\Gamma_N\times(0,T),\\[1mm]
&\label{M9} -\bsigma_\nu = F \quad&{\rm on}\ &\Gamma_C\times(0,T),\\[1mm]
&\label{M10} 
\left.
\begin{array}{lll}
|\bsigma_\tau |  \le \mu(|\dot{\bu}_\tau|)F \quad & \\[4pt]
-\bsigma_\tau  = \mu(|\dot{\bu}_{\tau}|) F\frac{\dot{\bu}_\tau}{|\dot{\bu}_{\tau}|} 
\quad &{\rm for } \quad \dot{\bu}_\tau\neq 0
\end{array}
\right\}\ \quad & {\rm on}\ &\Gamma_C\times(0,T), \\[1mm]
&\label{M11} -\left(k_{ij}(\btheta)\frac{\partial\btheta}{\partial x_i}\right)\nu_j = h_C(F)\theta-\mu(|\dot{\bu}_{\tau}|)F|
\dot{\bu}_{\tau}|  \quad&{\rm on}\ &\Gamma_C\times(0,T),\\[1mm]
&\label{M12} -\sigmael(\btheta)\frac{\partial \bphi}{\partial \nu}= H_C(F)\bphi \quad&{\rm on}\ &\Gamma_C\times(0,T),\\[1mm]
&\label{M13} \bu(0)=\bu_0, \  \dot\bu(0)=\bv_0, \ 
 \btheta(0)=\btheta_0 \quad&{\rm in}\ &\Omega. 
\end{align} }

The system is coupled and contains an elliptic equation for the electric potential, a parabolic equation 
for the temperature and a hyperbolic system for the displacements. The problem has a number of 
nonlinearities and nonstandard features:  the thermal and electric conductivities depend on 
the temperature, the Joule heating term in (\ref{M1}) is quadratic in the gradient of the electric potential, 
the inclusion that describes friction is multivalued, frictional heat generation and the non-monotone dependence of 
the friction coefficient on the tangential speed. To deal with these nonlinearities and the friction condition
we construct in the next section a variational or weak formulation of the problem, for which 
we prove the existence of a solution, thus, Problem  $\cP_{M}$ has a weak solution.
The uniqueness of the solution remains an open question and seems to be unlikely in view of 
the various strong nonlinearities.

\section{Variational formulation}
\label{vi}
We provide the variational formulation of Problem $\cP_{M}$, state the existence theorem and establish it
under appropriate assumptions on the problem data. We use the notation and concepts 
presented in the Appendix. 

First, we introduce the necessary functional spaces. When there is no ambiguity, we omit the symbol of 
the trace operator and use the same symbol for the function and its trace on the boundary. We let
\begin{align*}\nonumber
 H=L^2(\Omega),\quad Q&=L^2(\Omega;\mathbb{R}^d),\quad V=\{w\in H^1(\Omega)\mid w=0\ \text{on}\ \Gamma_D\},\\[2mm]
 E&=\{\eta\in H^1(\Omega;\mathbb{R}^d) \mid \eta=0\ \text{on}\ \Gamma_D\}. 
\end{align*}
The norms in $V$ and $E$ are defined by $\|w\|_V=\|\nabla w\|_Q$ and $\|\eta\|_E=\|\nabla\eta\|_{L^2(\Omega;\mathbb{R}^{d\times d})}$, respectively.
In the the proof we also use the space
\begin{equation}\nonumber
 U=V\cap W^{1,4}(\Omega), \ \text{with the norm} \ \  \|v\|_U=\|\nabla v\|_{L^4(\Omega;\mathbb{R}^d)}.
\end{equation}
Let us denote $Z=H^{\frac{1}{2}}(\Omega;\mathbb{R}^d)\cap E$ and let $j:E\to Z$ be the embedding operator.  
Let ${\gamma_Z\colon Z\to L^2(\Gamma_C;\mathbb{R}^d)}$ denote the trace operator and let ${\gamma=\gamma_Z\circ j 
\colon E\to L^2(\Gamma_C;\mathbb{R}^d)}$. For the sake of simplicity we let ${\|\gamma\|=\|\gamma\|_{{\cal L}(E,L^2(\Gamma_C;\mathbb{R}^d))}}$. 
Next, we introduce the operator $\tau\colon L^2(\Gamma_C;\mathbb{R}^d)$ $\to L^2(\Gamma_C;\mathbb{R}^d)$,
defined by $\tau(v)=v_\tau$ for all $v\in L^2(\Gamma_C;\mathbb{R}^d)$ and observe, that 
\[
\|\tau\|_{{\cal L}(L^2(\Gamma_C;\mathbb{R}^d),L^2(\Gamma_C;\mathbb{R}^d))}=1. 
\]
Then,  if $v\in E$, we simply write $v_\tau$ instead of $(\gamma\circ \tau)(v) $. We next define the following spaces of time-dependent functions:
\begin{align*}
\mathcal{H}&=L^2(0,T;H), \quad \mathcal{V}=L^2(0,T;V),\quad \mathcal{V}'=L^2(0,T;V') ,\quad \mathcal{E}=L^2(0,T;E),\\[2mm]
\qquad \mathcal{E}'&=L^2(0,T;E'),\quad \mathcal{W}=\{u\in\mathcal{E} \mid \dot u\in\mathcal{E}'\},
\quad
\mathcal{U}=L^4(0,T;U),\quad \mathcal{U}'=L^{4/3}(0,T;U').
\end{align*}
We now define the operators: $A_d, B_d\colon E\rightarrow E'$, $L_d \colon H\rightarrow E'$ and the functional $\F\in E'$ by 
\begin{align*}
\dual{A_du,\eta}{E}&=\int_\Omega a_{ijkl}\frac{\partial u_k}{\partial x_l}\frac{\partial \eta_i}{\partial x_j}\, dx \quad\quad \text{for all} \  u, \eta \in E,\\
\dual{B_du,\eta}{ E}&=\int_\Omega b_{ijkl}\frac{\partial u_k}{\partial x_l}\frac{\partial \eta_i}{\partial x_j}\, dx \quad\quad \text{for all} \  u,\eta \in E,\\
\dual{L_dz,\eta}{ E}&=-\int_\Omega m_{ij} z \frac{\partial \eta_i}{\partial x_j} \, dx \quad\quad \text{for all} \  z\in H,\  \eta\in  E,\\
\dual{\F,\eta}{E}&= \int_\Omega f_0\cdot\eta \, dx + \int_{\Gamma_N} f_2\cdot \eta \, d\Gamma -  \int_{\Gamma_C} F\eta_\nu \, 
d\Gamma \quad\quad \text{for all} \ \eta \in E.
\end{align*}
Let  $J:L^2(\Gamma_C;\mathbb{R}^d)\to \mathbb{R}$ be the functional 
\[
J(v)=F\int_{\Gamma_C} \int_{0}^{|v(x)|}\mu(s) \, ds \, d\Gamma\qquad \text{for all} \ v \in L^2(\Gamma_C;\mathbb{R}^d).
\]

We note that since $F\in L^\infty(\Gamma\times [0,T])$, we can define 
\[
\bar{F} =\text{essup}_{\Gamma_C\times[0,T]}\, F,\qquad \; \bar{H}_C=\text{essup}_{\Gamma_C\times[0,T]}\,  H_C(F).
\]
The following lemma provides the properties of the functional $J$, the proof of which can be found in \cite[Theorem 3.47]{MOSBOOK}.
\begin{lemma}\label{lemma_J1}
If the assumptions $(A7)$ below hold, then the functional $J$ satisfies:
\begin{itemize}
\item[$(a)$] $J$ is well defined and finite on $L^2(\Gamma_C;\mathbb{R}^d)$;
\item[$(b)$] $J$ is Lipschitz continuous on bounded subsets of $L^2(\Gamma_C;\mathbb{R}^d)$;
\item[$(c)$] If $\eta\in\partial J(y)$  then
\begin{align}\nonumber
\|\eta\|_{L^2(\Gamma_C;\mathbb{R}^d)}\leq \bar{F}\bar{\mu} \qquad  \text{for all} \quad y\in L^2(\Gamma_C;\mathbb{R}^d);
\end{align}
\item[$(d)$] If $v_i\in L^2(\Gamma_C;\mathbb{R}^d)$ and $\eta_i\in\partial J(t,v_i)$, for  $i=1,2$, then
\[\skalar{\eta_1-\eta_2,v_1-v_2}_{L^2(\Gamma_C;\mathbb{R}^d)}\geq -\bar{F}d_\mu\|v_1-v_2\|^2_{L^2(\Gamma_C;\mathbb{R}^d)}.
\] 
\end{itemize} 
\end{lemma}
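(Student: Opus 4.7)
The plan is to view $J$ as an integral functional on $L^2(\Gamma_C;\mathbb{R}^d)$ with Carath\'eodory integrand
\[
\psi(x,\xi) := F(x)\,g(|\xi|), \qquad g(r) := \int_0^r \mu(s)\,ds,
\]
and to attack all four assertions by reducing them to pointwise properties of the scalar function $g$ and the composition $g\circ|\cdot|$, then integrating against the essentially bounded weight $F$. The two main tools will be the Aubin--Clarke theorem, which represents $\partial J(y)$ as an integral of pointwise Clarke subdifferentials of $\psi(x,\cdot)$, and the standard chain rule for $\partial(g\circ|\cdot|)$.

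Parts $(a)$ and $(b)$ should follow directly from the boundedness $|\mu|\leq\bar{\mu}$ contained in $(A7)$: this yields the scalar Lipschitz bound $|g(r_1)-g(r_2)|\leq\bar{\mu}|r_1-r_2|$, which combined with the reverse triangle inequality $||\xi_1|-|\xi_2||\leq|\xi_1-\xi_2|$ and $F\in L^\infty$ gives $|\psi(x,\xi_1)-\psi(x,\xi_2)|\leq\bar{F}\bar{\mu}|\xi_1-\xi_2|$; integration over $\Gamma_C$ and one use of the Cauchy--Schwarz inequality deliver finiteness and (in fact global) Lipschitz continuity of $J$. For $(c)$ I would combine the Aubin--Clarke selection $\eta(x)\in F(x)\,\partial(g\circ|\cdot|)(y(x))$ with the inclusion $\partial(g\circ|\cdot|)(\xi)\subset \partial g(|\xi|)\cdot\overline{B(0,1)}$ and the fact that $g$ is $\bar{\mu}$-Lipschitz (so $\partial g\subset[-\bar{\mu},\bar{\mu}]$), arriving at the pointwise bound $|\eta(x)|\leq \bar{F}\bar{\mu}$, from which the stated $L^2$-estimate follows immediately.

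The hard part will be $(d)$: the relaxed monotonicity must be lifted from a scalar one-sided Lipschitz condition on $\mu$ (expected in $(A7)$ in the form $(\mu(r_1)-\mu(r_2))(r_1-r_2)\geq -d_\mu(r_1-r_2)^2$) across both the non-smooth radial composition $\xi\mapsto|\xi|$ and the surface integral. My strategy would be to parametrize every $\eta_i\in\partial(g\circ|\cdot|)(v_i)$ at $v_i\neq 0$ as $\alpha_i\,v_i/|v_i|$ with $\alpha_i\in\partial g(|v_i|)\subset[-\bar{\mu},\bar{\mu}]$, decompose the inner product $\langle\eta_1-\eta_2,v_1-v_2\rangle$ into a radial contribution, to which the one-sided Lipschitz bound for $\mu$ applies directly through $||v_1|-|v_2||\leq|v_1-v_2|$, and a transverse contribution controlled by Cauchy--Schwarz together with the $\bar{\mu}$-bound; the degenerate case $v_i=0$ is absorbed via upper semicontinuity of the Clarke subdifferential. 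Integrating the resulting pointwise inequality against $F\leq\bar{F}$ on $\Gamma_C$ would then produce the claimed estimate with constant $\bar{F}d_\mu$.
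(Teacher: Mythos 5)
Your outline is essentially correct, but note that the paper does not prove this lemma at all: it simply cites \cite[Theorem~3.47]{MOSBOOK}, a general result on integral functionals of locally Lipschitz integrands over the boundary. What you propose is, in effect, a self-contained instantiation of that general theorem for the particular integrand $\psi(x,\xi)=F(x)\int_0^{|\xi|}\mu(s)\,ds$, which is a legitimate and more transparent route. A few points deserve attention. First, since $(A7)(a)$ makes $\mu$ continuous, $g(r)=\int_0^r\mu(s)\,ds$ is $C^1$ with $g'=\mu$, so away from the origin $g\circ|\cdot|$ is classically differentiable and the Aubin--Clarke selection is exactly $\eta_i(x)=F(x)\,\mu(|v_i(x)|)\,v_i(x)/|v_i(x)|$; you do not need the full machinery of the chain rule for Clarke gradients, only the value $\partial(g\circ|\cdot|)(0)=\mu(0)\,\overline{B(0,1)}$ at the origin. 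Second, in part $(d)$ your ``radial plus transverse'' decomposition collapses into a single clean computation: expanding the inner product and applying $v_1\cdot v_2\le|v_1||v_2|$ (which requires $\mu\ge0$ for the inequality to go the right way) yields
\[
\Bigl(\mu(|v_1|)\tfrac{v_1}{|v_1|}-\mu(|v_2|)\tfrac{v_2}{|v_2|}\Bigr)\cdot(v_1-v_2)\;\ge\;\bigl(\mu(|v_1|)-\mu(|v_2|)\bigr)\bigl(|v_1|-|v_2|\bigr)\;\ge\;-d_\mu\,|v_1-v_2|^2,
\]
using $\bigl||v_1|-|v_2|\bigr|\le|v_1-v_2|$; the degenerate case $v_i=0$ is handled directly by the ball description of the subdifferential there, so no semicontinuity argument is needed. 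You should make the use of $\mu\ge 0$ explicit. Third, a minor discrepancy you inherit from the statement itself: the pointwise bound $|\eta(x)|\le\bar F\bar\mu$ gives $\|\eta\|_{L^2(\Gamma_C;\mathbb{R}^d)}\le\bar F\bar\mu\,({\rm meas}\,\Gamma_C)^{1/2}$, not $\le\bar F\bar\mu$; the missing measure factor is a sloppiness of the lemma as printed (only boundedness is used later), but your ``follows immediately'' should acknowledge it.
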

We note that $J$ satisfies assumptions $H(J)$ of Theorem~\ref{MOSTheorem}.

The functional $J$ is constructed so that the frictional 
boundary conditions (\ref{M10}) are equivalent, see \cite{BBHJ} for the details, to
\begin{equation}
\label{eq_KB_3}
-\sigma_\tau\in\partial J(\dot u_\tau).
\end{equation}

Next, we define the following operators
\[
A\colon V\times V\rightarrow V', \; P\colon V \rightarrow V', 
\; L\colon V \times V \to V',\; N\colon V\times V \rightarrow V',
\]
\[
R\colon E\rightarrow V', \; 
G\colon E\rightarrow V',\; M:V \rightarrow V', 
\]
by the formulas:
\begin{align*}
\dual{A(s,z),w}{V}&=\int_\Omega k_{ij}(s)\frac{\partial z}{\partial x_i} \frac{\partial w}{\partial x_j} \, dx \quad \text{for all} \ s,z,w \in V,\\
\dual{P(z),w}{V}&=  \int_{\Gamma_N} h_Nz w \, d\Gamma +  \int_{\Gamma_C} h_C(F)z w \, d\Gamma \quad \text{for all} \ z,w \in V,\\
\dual{L(s,z),w}{V}&=  \int_{\Omega} \sigmael(s)(\nabla z+\nabla\phi_b)\cdot \nabla w \, dx \quad \text{for all} \ s,z,w \in V,\\
\dual{N(s,z),w}{V}&=  \int_{\Omega} \sigmael(s)|\nabla z+\nabla\phi_b|^2  w \, dx \quad \text{for all} \ s,z,w \in V.\\
\dual{G(\eta),w}{V}&=-\int_\Omega m_{ij}\theta_{ref}\frac{\partial \eta_i}{\partial x_j} w\, dx \quad \text{for all} \  \eta\in E, \ w\in V,\\
\dual{M(z),w}{V}&=  \int_{\Gamma_N} H_N(z+\phi_b) w \, d\Gamma +  \int_{\Gamma_C} H_C(F)(z+\phi_b) w \, d\Gamma \quad \text{for all} \ z,w \in V,\\
\dual{R(\eta),w}{V}&=\int_{\Gamma_C} \mu(|\eta_\tau |)F|\eta_\tau | w\, d\Gamma \quad \text{for all} \ \eta\in E, \ w\in V.
\end{align*}
\medskip

Now, we present the variational formulation of Problem $\cP_{M}$.

\noindent {\bf Problem $\cP_{V}$}. {\it Find a displacement field
$\bu\in {\cal E}$, with $\dot\bu\in{\cal W}$, a temperature $\btheta \in {\cal V}$, with $\dot\btheta\in{\cal U'}$, an electric potential $\varphi \in {\cal V}$ 
and a frictional traction $\bxi\in L^2(0,T;L^2(\Gamma_C;\mathbb{R}^d))$ such that}
\begin{align}
& \rho c_p (\dot\btheta(t),\bw)_H + \dual{A(\btheta(t),\btheta(t)),\bw}{V}+ \dual{P(\btheta(t)),\bw}{V}= \dual{N(\btheta(t),\bvarphi(t)),\bw}{V} \label{eq1_PV} \\[2mm]
&\nonumber+ \dual{G(\dot\bu(t)),\bw}{V}+\dual{R(\dot\bu(t)),\bw}{V}\quad \text{for all} \  \bw\in V\ \text{a.e.}\  t\in (0,T),  \\[2mm]
&\dual{L(\theta(t),\bvarphi(t)),\bw}{V}+\dual{M(\bvarphi(t)),\bw}{V}=0 \quad \ \text{for all} \ \bw\in V, \  \text{a.e.} \  t\in (0,T),\label{eq2_PV}\\[2mm]
&\rho c_p(\ddot\bu(t),\beeta)_Q+\dual{A_d\dot\bu(t),\beeta}{E}+\dual{B_d\bu(t),\beeta}{E}+\dual{L_d\btheta(t),\beeta}{E} \label{eq3_PV}\\[2mm]
&\nonumber +\skalar{\bxi(t),\gamma\beeta}_{L^2(\Gamma_C;\mathbb{R}^d)}=\dual{\F,\beeta}{E} \quad \text{for all} \ \beeta \in E, \ \text{a.e.}\  t\in (0,T), \\[2mm]
& \bxi(t)\in \partial J(\gamma\dot\bu_\tau(t)) \quad \text{for a.e.} \  t\in (0,T), \label{eq4_PV}\\[2mm]
& \bu(0)=\bu_0, \quad \dot\bu(0)=\bv_0, \quad \btheta(0)=\btheta_0 \ \  \text{in}\ \Omega. \label{eq5_PV}
\end{align}

Equations (\ref{eq1_PV})-(\ref{eq3_PV}) are obtained by testing (\ref{M1})-(\ref{M2}) with $w\in V$ 
and (\ref{M3}) with $\eta\in E$, using a Green formula, the constitutive law (\ref{M4}) and the 
boundary conditions  (\ref{M5})-(\ref{M12}), and taking $\varphi=\phi-\phi_b$ and $\xi=-\sigma_\tau$.
\medskip

To study  Problem $\cP_{V}$, we make the following assumptions on the problem data:
 \begin{eqnarray}
 \nonumber
 \begin{array}{ll}
  { (A1)\ }\text{There exists} \  \bar{\phi_b} \in W^{1,4}(\Omega)\ \text{such that}\  \gamma^s\bar{\phi_b}=\phi_b\  \text{on}\ \Gamma_D, \ \text{where}
  \gamma^s\colon W^{1,4}(\Omega)\to L^2(\Gamma_C) \\ \qquad \text{denotes the trace operator}. \   \text{For the sake of simplicity we write $\phi_b$  for $\bar{\phi_b}$;} \\ 
 { (A2)\ } \sigmael:\mathbb{R}\to\mathbb{R}\ \text{is Lipschitz continuous and satisfies}\\ 
\qquad 0<\sigma_* \leq \sigmael(s) \leq M  \ \text{for all} \  s\in\mathbb{R}, \ \text{with some constants $\sigma_*$ and $M>0$};\\
 { (A3)\ } k_{ij}: \mathbb{R} \rightarrow \mathbb{R} \ \text{are bounded, Lipschitz continuous and}\\ 
 \qquad\  k_{ij}\xi_i\xi_j \geq \delta |\xi|^2,\ \text{with}\  \delta  >0  \ \text{for all} \ \xi\in\mathbb{R}^d;\\[2mm]
 \end{array}\nonumber
\end{eqnarray}

 \begin{eqnarray}
 \nonumber
 \begin{array}{ll}
 { (A4)\ } a_{ijkl},b_{ijkl}\in L^\infty(\Omega) \ \text{satisfy}\\ 
\qquad\  a_{ijkl}=a_{jikl}=a_{klij},\  b_{ijkl}=b_{jikl}=b_{klij}
\ \text{for}\ i,j,k,l=1,...,d,\\[3pt]
\qquad\ a_{ijkl}\xi_{ij}\xi_{kl} \geq \delta|\xi|^2,\        
b_{ijkl}\xi_{ij}\xi_{kl} \geq \delta|\xi|^2 \ \text{for all symmetric matrices}\ 
(\xi_{ij})_{i,j=1}^d ;\\[3pt] 
 { (A5)\ } f_0\in L^2(0,T;L^2(\Omega;\mathbb{R}^d)),\ f_2 \in L^2(0,T;L^2(\Gamma_N;\mathbb{R}^d)),\ F\in L^\infty(\Gamma_C \times (0,T)), \ F\geq 0;\\ 
 { (A6)\ } h_N, H_N > 0, \ \theta_{ref},m_{ij} \in \mathbb{R}, \ \text{for} \ i,j=1,\ldots,d, \ \text{and} \ h_C, H_C : \mathbb{R} \rightarrow \mathbb{R}_+ \ \text{are bounded};\\[2mm]
 { (A7)\ } \text{The friction coefficient} \  \mu:[0,\infty) \rightarrow \mathbb{R}_+ \ \text{satisfies:} \\ 
  \begin{array}{ll} 
  \qquad (a)\  \mu \ \text{ is continous; } \\ 
 \qquad (b)\  \mu\  \text{is bounded}\  \mu(s)\leq \bar{\mu} \  \text{for all} \
   s \geq 0, \ \text{with} \ \bar{\mu} \ge 0; \\ 
\qquad (c)\  (\mu(s_1)-\mu(s_2))(s_1-s_2)\geq - d_\mu|s_1-s_2|^2 \ \text{for all} \ s_1,s_2\in \mathbb{R} \ \text{with} \ d_\mu>0;\\ 
 \end{array}\\[3pt]
(A8)\  \delta > \bar{F}d_\mu\| \gamma \|^2;  \\
(A9)\  \bu_0 \in E,\ \bv_0 \in E,\ \btheta_0 \in V.
 \end{array}\nonumber
\end{eqnarray}
\medskip

The main result of this work is the following existence theorem.
\begin{theorem}[Existence]
\label{mainthm}
Assume that $(A1)-(A9)$ hold, then problem $\cP_{V}$ has a solution.
\end{theorem}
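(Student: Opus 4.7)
The plan is to regularize Problem~$\cP_{V}$ by introducing a positive time-delay $\tau>0$ in the nonlinear coupling terms, derive $\tau$-independent a priori estimates on the resulting approximation $(\bu^\tau,\btheta^\tau,\bvarphi^\tau,\bxi^\tau)$, and then pass to the limit $\tau\to 0^+$. After insertion of the delay, on each subinterval $[k\tau,(k+1)\tau]$ the system splits into three sequentially solvable subproblems whose data come from the preceding interval: a linear coercive elliptic problem for $\bvarphi^\tau$ (Lax--Milgram using $(A2)$); a second-order hemivariational inequality (\ref{eq3_PV})--(\ref{eq4_PV}) for $\bu^\tau$ whose existence follows from the surjectivity-type result quoted as Theorem~\ref{MOSTheorem}, with the abstract assumptions verified via $(A4)$, Lemma~\ref{lemma_J1}, and the smallness condition $(A8)$ used to absorb $\partial J$ into the viscous coercivity; and finally a linear parabolic equation for $\btheta^\tau$ (Lions theory using $(A3)$). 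Iterating across the finite number of subintervals of $[0,T]$ gives an approximate solution for every $\tau>0$.

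\textbf{A priori estimates.} Testing (\ref{eq3_PV}) with $\dot\bu^\tau$ and using $(A4)$ together with Lemma~\ref{lemma_J1}(c) and $(A8)$ to absorb the friction contribution yields $\dot\bu^\tau$ bounded in $L^\infty(0,T;Q)\cap\mathcal{E}$ and $\bu^\tau$ in $L^\infty(0,T;E)$. Testing (\ref{eq2_PV}) with $\bvarphi^\tau$ gives $\bvarphi^\tau$ bounded in $\mathcal{V}$. Since the Joule term $N(\btheta^\tau,\bvarphi^\tau)$ in (\ref{eq1_PV}) is quadratic in $\nabla\bvarphi^\tau$, one must upgrade $\bvarphi^\tau$ to the space $\mathcal{U}=L^4(0,T;U)$ in order to make it meaningful in $U'$; this is done by a Meyers/Gr\"oger-type regularity argument for the linear mixed-boundary elliptic equation, which applies thanks to the uniform bounds $\sigma_*\le\sigmael\le M$ from $(A2)$. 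Testing (\ref{eq1_PV}) with $\btheta^\tau$ and invoking the thermal coercivity in $(A3)$ then produces a bound for $\btheta^\tau$ in $\mathcal{V}\cap L^\infty(0,T;H)$, and reading (\ref{eq1_PV}) as an identity in $U'$ gives $\dot\btheta^\tau$ bounded in $\mathcal{U}'$. The boundary term $R(\dot\bu^\tau)$ is controlled by the trace inequality and $(A7)(b)$.

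\textbf{Passage to the limit.} From the uniform bounds, Aubin--Lions compactness produces subsequences (not relabeled) with $\bu^\tau\to\bu$ strongly in $C([0,T];Q)$, $\dot\bu^\tau\rightharpoonup\dot\bu$ weakly in $\mathcal{E}$, $\btheta^\tau\to\btheta$ strongly in $\mathcal{H}$, $\bvarphi^\tau\rightharpoonup\bvarphi$ weakly in $\mathcal{U}$, and $\bxi^\tau\rightharpoonup\bxi$ weakly in $L^2(0,T;L^2(\Gamma_C;\mathbb{R}^d))$; compactness of the trace embedding further yields strong convergence of $\gamma\dot\bu^\tau$ in the latter space. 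Lipschitz continuity of $\sigmael$ and $k_{ij}$ together with strong convergence of $\btheta^\tau$ handles all temperature-dependent coefficients; continuity of $\mu$ and strong boundary convergence handle $R(\dot\bu^\tau)$; and the subdifferential inclusion (\ref{eq4_PV}) is recovered from the closedness of the graph of $\partial J$ under weak$\times$strong convergence of $(\bxi^\tau,\gamma\dot\bu^\tau_\tau)$, in the spirit of the Aubin--Cellina convergence theorem recalled in the Appendix.

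\textbf{Main obstacle.} The principal difficulty is the limit passage in the quadratic Joule term $\sigmael(\btheta^\tau)\,|\nabla\bvarphi^\tau+\nabla\phi_b|^2$ appearing on the right-hand side of (\ref{eq1_PV}): weak $\mathcal{U}$-convergence of $\bvarphi^\tau$ is not enough to pass to the limit in the square of its gradient, so one must establish strong convergence of $\nabla\bvarphi^\tau$. This is achieved by subtracting the limit equation (\ref{eq2_PV}) from its $\tau$-approximate counterpart, testing the difference with $\bvarphi^\tau-\bvarphi$, and using the uniform ellipticity $\sigmael\ge\sigma_*$ together with strong $\mathcal{H}$-convergence of $\btheta^\tau$ and Lipschitz continuity of $\sigmael$; this gives $\nabla\bvarphi^\tau\to\nabla\bvarphi$ strongly in $\mathcal{V}$, which interpolates with the uniform $\mathcal{U}$-bound to produce strong convergence in $L^p(\Omega\times(0,T);\mathbb{R}^d)$ for every $p\in[2,4)$, enough to pass to the limit in the quadratic term. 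The limit quadruple $(\bu,\btheta,\bvarphi,\bxi)$ then satisfies (\ref{eq1_PV})--(\ref{eq5_PV}), proving Theorem~\ref{mainthm}.
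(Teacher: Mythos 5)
Your overall architecture (time delay, sequential solvability on subintervals, a priori estimates, strong convergence of $\nabla\bvarphi^\tau$ by subtracting the elliptic equations, limit passage) matches the paper's. But there is one step where you substitute a different — and, as stated, unjustified — tool for the paper's key technical device, and this is precisely the step on which the whole proof hinges.

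You propose to control the quadratic Joule term by upgrading $\bvarphi^\tau$ to $\mathcal{U}=L^4(0,T;U)$ via a Meyers/Gr\"oger regularity argument. Under assumption $(A2)$ alone the coefficient $\sigmael(\btheta^\tau(x,t))$ is merely measurable with ellipticity ratio $\sigma_*/M$, and Meyers' (resp.\ Gr\"oger's, for mixed boundary conditions) theorem only yields $\nabla\bvarphi^\tau\in L^{p_0}$ for some $p_0>2$ that degenerates to $2$ as $\sigma_*/M\to 0$; nothing guarantees $p_0\ge 4$, especially in $d=3$ with a Dirichlet--Robin mixed boundary. Without the uniform $L^4$ gradient bound, your interpolation step collapses, $|\nabla\bvarphi^\tau|^2$ is controlled only in $L^1(\Omega)$, which does not embed into $V'$ for $d\ge 2$, and you can neither formulate the parabolic step as a problem with right-hand side in $\mathcal{V}'$ nor pass to the limit against test functions in $V$. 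The paper avoids elliptic regularity entirely: it tests the potential equation with (a truncation of) $\bvarphi^3/3$ to obtain the weighted estimate $\int_\Omega\sigmael(\btheta)\bvarphi^2|\nabla\bvarphi|^2\,dx\le C$ (Lemma \ref{lemma4.2}), and then rewrites $N(\btheta,\bvarphi)$ by testing the equation with $\bvarphi\bw$ (Lemma \ref{lemma4.3}), which converts the quadratic gradient term into terms bounded in $V'$ and uniformly integrable, so that Vitali's theorem applies in the limit. You need either this device or a genuine replacement for it; the Meyers claim is not one.

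A second, related omission: the paper's delayed problem carries the regularizing term $hF\btheta^h$, where $F$ is the $4$-Laplacian on $U$. This is what makes the delayed heat equation solvable by the pseudomonotone parabolic theorem with $\dot\btheta^h\in\mathcal{U}'$, supplies the bound $h\int_0^T\|\btheta^h\|_U^4\,dt\le C$, and lets the regularization vanish at rate $h^{1/4}$ in the limit. Your proposal treats the delayed heat equation as a linear parabolic problem solvable by Lions' theory, which again presupposes a right-hand side in $\mathcal{V}'$ and so runs into the same difficulty with the Joule term. The remaining parts of your argument (Lax--Milgram for $\bvarphi$, Theorem \ref{MOSTheorem} for the mechanical inclusion with $(A8)$ absorbing the friction, strong convergence of $\nabla\bvarphi^\tau$ by subtracting the elliptic equations and using $(A2)$ and the Lipschitz continuity of $\sigmael$) are consistent with the paper.
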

The proof is provided in the next section.

\section{Proof of Theorem \ref{mainthm}}
\label{proof}

We prove Theorem \ref{mainthm} in steps presented as lemmas.
 \begin{lemma}
 \label{lemma4.1}
Suppose, that $(A1)$ and $(A2)$ hold and the functions $\theta, \bvarphi:[0,T]\to V$ solve (\ref{eq2_PV}). 
Then, there exists a constant $C>0$, depending only on $\sigma_*$, $M$ and $\phi_b$, such that
 \begin{equation}
 \|\bvarphi(t)\|_V  \leq C \quad \text{for all} \ t\in [0,T] \label{lemma1_eq}.
 \end{equation}
 \end{lemma}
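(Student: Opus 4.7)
The plan is to derive the bound by testing equation (\ref{eq2_PV}) with the natural choice $w=\varphi(t)\in V$ and exploiting the uniform lower bound on $\sigma_{el}$ together with the non-negativity of $H_N$ and $H_C(F)$. Writing out the equation explicitly gives
\begin{equation*}
\int_\Omega \sigma_{el}(\theta)|\nabla\varphi|^2\,dx+\int_{\Gamma_N}H_N\varphi^2\,d\Gamma+\int_{\Gamma_C}H_C(F)\varphi^2\,d\Gamma
=-\int_\Omega \sigma_{el}(\theta)\nabla\phi_b\cdot\nabla\varphi\,dx-\int_{\Gamma_N}H_N\phi_b\varphi\,d\Gamma-\int_{\Gamma_C}H_C(F)\phi_b\varphi\,d\Gamma.
\end{equation*}
By assumption (A2) the left-hand side is bounded below by $\sigma_*\|\nabla\varphi\|_Q^2=\sigma_*\|\varphi\|_V^2$, so the task reduces to controlling the three terms on the right by a small multiple of $\|\varphi\|_V^2$ plus a constant that depends only on $\sigma_*$, $M$ and $\phi_b$.

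For the volume term I would apply Cauchy--Schwarz together with the upper bound $\sigma_{el}\le M$ from (A2), followed by Young's inequality, producing a bound of the form $\tfrac{\sigma_*}{4}\|\varphi\|_V^2+\frac{M^2}{\sigma_*}\|\nabla\phi_b\|_Q^2$; here the second summand is finite by (A1) since $\phi_b\in W^{1,4}(\Omega)\hookrightarrow H^1(\Omega)$. For the two boundary integrals I would use the boundedness of $H_N$ and $H_C(F)$ (the latter via the essential supremum $\bar H_C$ introduced before Lemma~\ref{lemma_J1}), combined with the continuity of the trace operator $H^1(\Omega)\to L^2(\Gamma)$ and the Poincaré inequality, which is available because $\mathrm{meas}(\Gamma_D)>0$ and elements of $V$ vanish there. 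Thus each boundary term is bounded by $\tfrac{\sigma_*}{4}\|\varphi\|_V^2$ plus a constant involving $\|\phi_b\|_{L^2(\Gamma_N\cup\Gamma_C)}$ and the trace norm.

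Absorbing the three $\tfrac{\sigma_*}{4}\|\varphi\|_V^2$ contributions into the left-hand side yields
\begin{equation*}
\tfrac{\sigma_*}{4}\|\varphi(t)\|_V^2\le C(\sigma_*,M,\phi_b),
\end{equation*}
from which the pointwise-in-$t$ estimate (\ref{lemma1_eq}) follows. I do not foresee any serious obstacle here since the argument is a standard Lax--Milgram-type energy estimate; the only detail to be careful about is keeping track of the fact that the bound is uniform in $t$ (which it is, because all data appearing on the right-hand side are themselves uniformly bounded in $t$) and that the constant is independent of the particular $\theta(t)$, which is why one must avoid using the upper bound $M$ in the coercivity step and rely on $\sigma_*$ alone there.
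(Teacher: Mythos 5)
Your proposal is correct and follows essentially the same route as the paper: test (\ref{eq2_PV}) with $w=\varphi(t)$, use the lower bound $\sigma_*$ from $(A2)$ and the nonnegativity of $H_N$ and $H_C(F)$ to keep only $\sigma_*\|\varphi\|_V^2$ on the left, and bound the remaining $\phi_b$-terms via Cauchy--Schwarz and the trace theorem. The only cosmetic difference is that you close the estimate with Young's inequality while the paper divides through by $\sigma_*\|\varphi\|_V$ to get a linear bound; both yield the same uniform-in-$t$ constant depending only on $\sigma_*$, $M$ and $\phi_b$.
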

 \begin{proof}
 Fix $t\in [0,T]$. We choose $\bw=\bvarphi$ as a test function in (\ref{eq2_PV}) and use 
 assumption $(A2)$ and the trace theorem, thus, 
 \[
\sigma_*\|\varphi\|_V^2 \le M \|\phi_b\|_{H^1(\Omega)} \|\varphi\|_V + H_N\|\phi_b\|_{L^2(\Gamma_N)} \|\varphi \|_{L^2(\Gamma_N)}+\bar{H}_C\|\phi_b\|_{L^2(\Gamma_C)} 
\|\varphi \|_{L^2(\Gamma_C)}.
\]
Dividing both sides by $\sigma_*\|\varphi\|_V$, we find that (\ref{lemma1_eq}) holds with
\[
C=M \|\phi_b\|_{H^1(\Omega)}  + H_N\|\phi_b\|_{L^2(\Gamma_N)} \|\gamma \|_{H^1(\Omega)}
+\bar{H}_C\|\phi_b\|_{L^2(\Gamma_C)} \|\gamma\|_{H^1(\Omega)}.
\]
\end{proof}
\begin{lemma}
\label{lemma4.2}
Suppose that $(A1)$ and $(A2)$ hold and the functions $\theta, \bvarphi:[0,T]\to V$ solve (\ref{eq2_PV}). Then, there exists a constant $C>0$, that depends only on 
$\sigma_*$, $M$ and $\phi_b$, such that
\begin{equation}
\int_\Omega \sigmael(\btheta)\bvarphi^2|\nabla\bvarphi|^2 \, dx \leq C \quad \text{for all} \ t\in [0,T].
\end{equation}
\end{lemma}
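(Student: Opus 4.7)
The plan is to test the electric equation (\ref{eq2_PV}) with $w = \bvarphi^3$, because $\nabla w = 3\bvarphi^2\nabla\bvarphi$ reproduces the target integrand on the left-hand side. Since $\bvarphi \in V$ alone does not ensure $\bvarphi^3 \in H^1(\Omega)$ (in dimension $d \le 3$ one only gets $\bvarphi \in L^6$, so $\bvarphi^2\nabla\bvarphi \in L^{6/5}$), I would instead work with the truncation $\bvarphi_k = T_k(\bvarphi) := \max(-k,\min(k,\bvarphi))$, which lies in $V \cap L^\infty(\Omega)$ and for which $\bvarphi_k^3 \in V$ is an admissible test function (noting also that $\bvarphi=0$ on $\Gamma_D$ forces $\bvarphi_k^3=0$ there). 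After deriving a bound uniform in $k$, I would pass to the limit via Fatou's lemma, using $|\nabla\bvarphi_k|\le|\nabla\bvarphi|$ a.e., $\bvarphi_k^2|\nabla\bvarphi_k|^2 \nearrow \bvarphi^2|\nabla\bvarphi|^2$ pointwise, and the key pointwise identity $\nabla\bvarphi\cdot\nabla\bvarphi_k = |\nabla\bvarphi_k|^2$.

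Substituting $w = \bvarphi_k^3$ in (\ref{eq2_PV}) gives
\[
3\int_\Omega \sigmael(\btheta)\bvarphi_k^2|\nabla\bvarphi_k|^2\,dx + 3\int_\Omega \sigmael(\btheta)\bvarphi_k^2\nabla\phi_b\cdot\nabla\bvarphi_k\,dx + B_k = 0,
\]
where $B_k = \dual{M(\bvarphi),\bvarphi_k^3}{V}$ collects the boundary contributions. Young's inequality in the form
\[
\bigl|\sigmael(\btheta)\bvarphi_k^2\nabla\phi_b\cdot\nabla\bvarphi_k\bigr| \le \tfrac{1}{2}\sigmael(\btheta)\bvarphi_k^2|\nabla\bvarphi_k|^2 + \tfrac{M}{2}\bvarphi_k^2|\nabla\phi_b|^2
\]
absorbs half of the main term and leaves the residual $\tfrac{3M}{2}\int_\Omega \bvarphi_k^2|\nabla\phi_b|^2\,dx$. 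For $B_k$, a simple case analysis on $\{|\bvarphi|\lessgtr k\}$ yields the pointwise inequality $\bvarphi\bvarphi_k^3 \ge \bvarphi_k^4 \ge 0$, so the diagonal $H_N\int_{\Gamma_N}\bvarphi\bvarphi_k^3 + \int_{\Gamma_C}H_C(F)\bvarphi\bvarphi_k^3$ pieces are nonnegative and can be discarded on the left, while the off-diagonal $\phi_b\bvarphi_k^3$ pieces are estimated directly by $\|\phi_b\|_{L^4(\Gamma)}\|\bvarphi_k\|_{L^4(\Gamma)}^3$.

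To close the estimate I would invoke the Sobolev embedding $V\hookrightarrow L^4(\Omega)$ and its trace version (valid for $d\le 3$) together with the non-expansiveness of $T_k$ in $H^1$, giving $\|\bvarphi_k\|_{L^4(\Omega)} + \|\bvarphi_k\|_{L^4(\Gamma)} \le C\|\bvarphi_k\|_V \le C\|\bvarphi\|_V \le C$, where the last inequality is Lemma \ref{lemma4.1}; assumption $(A1)$ then bounds $\|\nabla\phi_b\|_{L^4(\Omega)}$ and $\|\phi_b\|_{L^4(\Gamma)}$. Assembling the pieces delivers $\int_\Omega \sigmael(\btheta)\bvarphi_k^2|\nabla\bvarphi_k|^2\,dx \le C$ uniformly in $k$, with a constant depending only on $\sigma_*$, $M$, and $\phi_b$ (modulo fixed embedding constants of $\Omega$); Fatou's lemma then yields the claim. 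The principal obstacle is precisely the low regularity of $\bvarphi$ which obstructs taking $\bvarphi^3$ as a test function directly, forcing the truncation-plus-Fatou route; the secondary technical point is the sign bookkeeping in $B_k$, which relies on the elementary pointwise identity $\bvarphi\bvarphi_k^3 \ge \bvarphi_k^4$ holding in all four sign/size regimes of $\bvarphi$ relative to $\pm k$.
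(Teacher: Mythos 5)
Your proposal is correct and follows essentially the same route as the paper: the paper tests (\ref{eq2_PV}) with smooth bounded approximations $\psi_n(\varphi)$ of $\varphi^3/3$ (with $\psi_n'\nearrow r^2$) in place of your hard truncation $T_k(\varphi)^3$, then likewise absorbs the $\nabla\phi_b$ cross term by Young's inequality, controls the boundary contributions through $L^4$ trace bounds combined with Lemma \ref{lemma4.1}, and passes to the limit by monotone convergence rather than Fatou. The only substantive (and slightly cleaner) deviation is your observation that the diagonal boundary terms satisfy $\varphi\varphi_k^3\ge\varphi_k^4\ge 0$ and can simply be discarded, where the paper instead estimates them via $\|\varphi\|_{L^4(\partial\Omega)}^4\le C$.
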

\begin{proof}
Fix $t\in [0,T]$. Let $\bpsi_n:\mathbb{R}\to\mathbb{R}$ be a strictly increasing sequence of bounded and smooth functions that satisfy $\bpsi_n(r)=r^3/3$ for $|r|\leq n$ and $\bpsi_n(r)'\nearrow r^2$. 
We choose $\bw=\bpsi_n(\bvarphi)$ in (\ref{eq2_PV}) and observe that $\nabla(\bpsi_n(\bvarphi))=\bpsi_n(\bvarphi)' \nabla \bvarphi$, $|\bpsi_n(r)| \leq  |r^3/3|$ and $\bpsi_n(r)'\leq \dot\bpsi_{n+1}(r)$  for all $r \in \mathbb{R}$. We calculate,
\[
\int_\Omega \sigmael(\btheta)|\nabla\bvarphi|^2 \bpsi_n '(\bvarphi) \, dx \leq 
\int_\Omega \sigmael(\btheta)|\nabla\phi_b||\nabla\bvarphi| \bpsi_n'(\bvarphi) \, dx 
\]
\[
+\int_{\Gamma_N} H_N|\bvarphi||\bpsi_n(\bvarphi)| \, d\Gamma +  \int_{\Gamma_C}H_C(F)|\bvarphi||\bpsi_n(\bvarphi)|\, d\Gamma 
\]
\[
+
\int_{\Gamma_N} H_N|\phi_b| |\bpsi_n(\bvarphi)| \, d\Gamma + \int_{\Gamma_C}H_C(F)|\phi_b| |\bpsi_n(\bvarphi)| \, d\Gamma 
\]
\[
\leq\left(\int_{\Omega} \sigmael(\btheta) \bpsi_n'(\bvarphi)  |\nabla\phi_b|^2 \, dx\right)^{1/2}\left(\int_{\Omega} \sigmael(\btheta) \bpsi_n'(\bvarphi) |\nabla\bvarphi|^2 \, dx\right)^{1/2}
\]
\[
+\frac{1}{3}\int_{\Gamma_N}H_N\bvarphi^4\, d\Gamma+\frac{1}{3}\int_{\Gamma_C}H_C(F)\bvarphi^4\, d\Gamma+\frac{1}{3}\int_{\Gamma_N}H_N|\phi_b ||\bvarphi|^3\, d\Gamma+\frac{1}{3}\int_{\Gamma_C}H_C(F)|\phi_b||\bvarphi|^3\, d\Gamma
\]
\[
\leq \frac{1}{2} \int_{\Omega} \sigmael(\btheta) \bpsi_n'(\bvarphi)  |\nabla\phi_b|^2 \, dx +\frac{1}{2}  \int_{\Omega} \sigmael(\btheta) \bpsi_n'(\bvarphi) |\nabla\bvarphi|^2 \, dx 
\]
\[
+C\|\bvarphi\|^4_{L^4(\partial\Omega)}+C\|\phi_b\|_{L^4(\partial\Omega)}\|\bvarphi\|^3_{L^4(\partial\Omega)}.
\]
From the last inequality and the assumption $\bpsi_n'(r)\le r^2$, we find
\begin{eqnarray*}\label{eq1}
\frac{1}{2}  \int_{\Omega} \sigmael(\btheta) \bpsi_n'(\bvarphi) |\nabla\bvarphi|^2 \, dx&\leq& C\|\bvarphi\|^4_{L^4(\partial\Omega)}+C\|\phi_b\|_{L^4(\partial\Omega)}\|\bvarphi\|^3_{L^4(\partial\Omega)} \\[2mm] &+& \frac{1}{2} \int_{\Omega} \sigmael(\btheta) (\bvarphi)^2 |\nabla\phi_b|^2 \, dx.
\end{eqnarray*}
Using  Corollary~\ref{corol_1} and Lemma \ref{lemma4.1}, we obtain
\[
\|\bvarphi\|_{L^4(\partial\Omega)}\leq c_1\|\bvarphi\|_{H^1(\Omega)}\leq c_2 \|\bvarphi\|_V \leq c_3.
\] 
By the classical trace theorem, we have 
$\|\phi_b\|_{L^4(\partial\Omega)}\leq c\|\phi_b\|_{W^{1,4}(\Omega)}.$
Thus,  Lemma \ref{lemma4.1} implies
\begin{align}
\label{eq_KB_1}
&\frac{1}{2}  \int_{\Omega} \sigmael(\btheta) \bpsi_n'(\bvarphi) |\nabla\bvarphi|^2 \, dx\leq C+\frac{1}{2}M\|\bvarphi\|_{L^4(\Omega)}^2\|\phi_b\|^2_{W^{1,4}(\Omega)} \\
&\qquad \leq C+\frac{1}{2}M\|\bvarphi\|_{V}^2\|\phi_b\|^2_{W^{1,4}(\Omega)}\leq C\nonumber.
\end{align}
We next define the sequence
$f_n(x)= \sigmael(\btheta(x)) \bpsi_n'(\bvarphi(x))  |\nabla\bvarphi(x)|^2$. We observe, that for all 
$x \in \Omega$, there exists $n \in \mathbb{N}$ such that $\ |\bvarphi(x)|<n$, so $\bpsi_n(\bvarphi(x))\le\frac{\bvarphi^3(x)}{3}$ and $ \bpsi_n'(\bvarphi(x))\le\bvarphi^2(x)$. It follows that
\[
f_n(x) \rightarrow \sigmael(\btheta(x))\bvarphi^2(x)  |\nabla\bvarphi(x)|^2 \equiv f(x)\quad  \mbox{for a.e.}\ x\in \Omega,
\]
and since $\bpsi_n'(r)\leq \bpsi_{n+1}'(r)$, we have
$f_n(x) \leq f_{n+1}(x)$ for all $x\in \Omega$.
Using now the Lebesgue monotone convergence theorem yields 
\[
\int_{\Omega}f(x)\, dx=\lim_{n\rightarrow \infty} \int_{\Omega}f_n(x)\, dx.
\]
On the other hand, it follows from (\ref{eq_KB_1}) that 
$\lim_{n\to\infty} \int_{\Omega}f_n(x)\, dx<C$.   
This completes the proof.
\end{proof}

\begin{lemma}
\label{lemma4.3}
Suppose, that $(A1)$ and $(A2)$ hold and the functions $\theta, \bvarphi:[0,T]\to V$ solve (\ref{eq2_PV}). 
Then, for all $\bw \in V \cap L^\infty(\Omega)$, there holds 
\begin{align}\label{lemma4.3a}
&\dual{N(\btheta,\bvarphi),w}{V}=\int_\Omega\sigmael(\btheta)\nabla\bvarphi\cdot\nabla\bphi_b\bw \, dx+ \int_\Omega\sigmael(\btheta)|\nabla\bphi_b|^2 \bw \, dx\\
\nonumber&-\int_\Omega\sigmael(\btheta)\bvarphi\nabla\bvarphi\cdot\nabla\bw \, dx- \int_\Omega\sigmael(\btheta)\bvarphi\nabla\bphi_b\cdot\nabla\bw \, dx- \int_{\Gamma_N}H_N\bvarphi^2\bw \, d\Gamma \\
\nonumber&-\int_{\Gamma_N}H_N\bvarphi \bphi_b\bw \, d\Gamma-\int_{\Gamma_C}H_C(F)\bvarphi^2\bw \, d\Gamma 
-\int_{\Gamma_C}H_C(F)\bvarphi \bphi_b\bw \, d\Gamma.
\end{align}
Moreover, there exists a constant $C$, depending only on $\sigma_*,M$ and $\bphi_b$ such that
\begin{equation}\label{lemma4.3b}
|\dual{N(\btheta,\bvarphi),w}{V}|\leq C\|w\|_V.
\end{equation}
\end{lemma}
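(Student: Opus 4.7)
The plan is to establish (\ref{lemma4.3a}) by choosing a test function of the form $\bvarphi \bw$ in (\ref{eq2_PV}), which allows the quadratic term $\int_\Omega\sigmael(\btheta)|\nabla\bvarphi|^2 \bw\,dx$ to be eliminated; the bound (\ref{lemma4.3b}) then follows by estimating each of the eight terms on the right-hand side of (\ref{lemma4.3a}) separately, using Lemmas \ref{lemma4.1} and \ref{lemma4.2} together with Sobolev and trace embeddings.

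\textbf{Step 1 (Identity).} First, expand the square in the definition of $N$:
\[
\dual{N(\btheta,\bvarphi),\bw}{V}=\int_\Omega\sigmael(\btheta)|\nabla\bvarphi|^2 \bw\,dx+2\int_\Omega\sigmael(\btheta)\nabla\bvarphi\cdot\nabla\bphi_b\,\bw\,dx+\int_\Omega\sigmael(\btheta)|\nabla\bphi_b|^2 \bw\,dx.
\]
To rewrite the first (quadratic) term, substitute $\bvarphi \bw$ (or, more precisely, a bounded truncation $T_n(\bvarphi)\bw$) as a test function in (\ref{eq2_PV}). Using the product rule $\nabla(\bvarphi \bw)=\bw\nabla\bvarphi+\bvarphi\nabla \bw$ and expanding gives
\begin{align*}
0 =&\int_\Omega\sigmael(\btheta)|\nabla\bvarphi|^2 \bw\,dx+\int_\Omega\sigmael(\btheta)\nabla\bvarphi\cdot\nabla\bphi_b\,\bw\,dx+\int_\Omega\sigmael(\btheta)\bvarphi\nabla\bvarphi\cdot\nabla \bw\,dx\\
&+\int_\Omega\sigmael(\btheta)\bvarphi\nabla\bphi_b\cdot\nabla \bw\,dx+\int_{\Gamma_N}H_N(\bvarphi+\bphi_b)\bvarphi \bw\,d\Gamma+\int_{\Gamma_C}H_C(F)(\bvarphi+\bphi_b)\bvarphi \bw\,d\Gamma.
\end{align*}
Solving this for $\int_\Omega\sigmael(\btheta)|\nabla\bvarphi|^2 \bw\,dx$ and inserting the result into the expansion of $\dual{N(\btheta,\bvarphi),\bw}{V}$ above yields (\ref{lemma4.3a}) after the two copies of $\int\sigmael(\btheta)\nabla\bvarphi\cdot\nabla\bphi_b\,\bw\,dx$ combine into the single term shown.

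\textbf{Step 2 (Estimate).} For (\ref{lemma4.3b}), each of the eight terms in (\ref{lemma4.3a}) is bounded by $C\|\bw\|_V$. The most delicate interior term is controlled by Cauchy-Schwarz together with Lemma \ref{lemma4.2}:
\[
\left|\int_\Omega\sigmael(\btheta)\bvarphi\nabla\bvarphi\cdot\nabla \bw\,dx\right|\le\left(\int_\Omega\sigmael(\btheta)\bvarphi^2|\nabla\bvarphi|^2\,dx\right)^{1/2}\left(\int_\Omega\sigmael(\btheta)|\nabla \bw|^2\,dx\right)^{1/2}\le C\sqrt{M}\,\|\bw\|_V.
\]
The remaining interior terms are handled by H\"older's inequality with exponents $(2,4,4)$, using $\bphi_b\in W^{1,4}(\Omega)$ from $(A1)$, the Sobolev embedding $V\hookrightarrow L^4(\Omega)$ (valid since $d\le 3$), and the bound $\|\bvarphi\|_V\le C$ from Lemma \ref{lemma4.1}. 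The four boundary terms are estimated by the trace embedding $V\hookrightarrow L^4(\partial\Omega)$, the boundedness of $H_N$ and $H_C$ from $(A6)$, and again Lemma \ref{lemma4.1} together with the boundary regularity of $\bphi_b$.

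\textbf{Main obstacle.} The main technical subtlety is the justification of $\bvarphi \bw$ as a test function in (\ref{eq2_PV}). Since $\bvarphi$ is only in $H^1(\Omega)$ and not a priori in $L^\infty(\Omega)$, the term $\bvarphi\nabla \bw$ is at best in $L^{3/2}(\Omega)$ for $d=3$, so $\bvarphi \bw$ need not lie in $V$. This is resolved by the standard truncation argument: since $T_n(\bvarphi)$ is bounded, $T_n(\bvarphi)\bw\in V$ is a legitimate test function; one then passes to the limit $n\to\infty$ by dominated convergence, the crucial $L^1$-domination on the cross-term $\sigmael(\btheta)\bvarphi\nabla\bvarphi\cdot\nabla \bw$ being provided precisely by Lemma \ref{lemma4.2} via Cauchy-Schwarz. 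Once this step is settled, the remainder of the proof consists of routine H\"older and Sobolev estimates.
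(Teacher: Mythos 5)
Your proof is correct and follows essentially the same route as the paper's: expand the square in the definition of $N$, test (\ref{eq2_PV}) with $\bvarphi\bw$ to eliminate the quadratic term, substitute back to get (\ref{lemma4.3a}), and then estimate the eight terms using Lemma \ref{lemma4.1}, Lemma \ref{lemma4.2} and the Sobolev/trace embeddings. The one point where you genuinely add something is the justification of the test function: the paper simply asserts that $\bvarphi\bw\in V$, which is not immediate since $\bvarphi\nabla\bw$ is a priori only in $L^{3/2}(\Omega)$ for $d=3$, whereas your truncation argument with passage to the limit via dominated convergence (the domination of the cross-term coming from Lemma \ref{lemma4.2}) closes this small gap cleanly.
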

\begin{proof}
Fix $t\in[0,T]$. It is straightforward to see that 
\begin{align}
\label{eq2}
\dual{N(\btheta,\bvarphi),w}{V}&=\int_\Omega\sigmael(\btheta)|\nabla\bvarphi|^2 \bw \, dx + 2\int_\Omega\sigmael(\btheta)\nabla\bvarphi\nabla\bphi_b \bw \, dx \\[2mm]
&+ \int_\Omega\sigmael(\btheta)|\nabla\bphi_b|^2 \bw \, dx \quad \text{for all} \ w\in V\cap L^\infty(\Omega).
\end{align}
Since $\bw \in V \cap L^\infty(\Omega)$, it follows, that $\varphi w\in V$. Thus, we can take $\varphi w$ as a test function in (\ref{eq2_PV}) and, using the formula $\nabla(\bvarphi\bw)=\nabla\bvarphi\bw+\bvarphi \nabla \bw$, we get  
\begin{align}\label{eq3}
&\int_\Omega\sigmael(\btheta)|\nabla\bvarphi|^2 \bw \, dx = - \int_\Omega\sigmael(\btheta)\bvarphi\nabla\bvarphi \cdot \nabla\bw \, dx-\int_\Omega\sigmael(\btheta)\nabla\bphi_b \cdot\nabla(\bvarphi \bw)\, dx\\
\nonumber&-\int_{\Gamma_N}H_N\bvarphi^2\bw \, d\Gamma-\int_{\Gamma_N}H_N\bvarphi \bphi_b\bw \, d\Gamma -\int_{\Gamma_C}H_C(F)\bvarphi^2\bw \, d\Gamma 
-\int_{\Gamma_C}H_C(F)\bvarphi \bphi_b\bw \, d\Gamma.
\end{align}
Inserting ($\ref{eq3}$) into ($\ref{eq2}$) yields ($\ref{lemma4.3a}$).
To obtain ($\ref{lemma4.3b}$), we estimate each one of the terms on the right-hand side 
of ($\ref{lemma4.3a}$).
\[
\int_\Omega\sigmael(\btheta)|\nabla\bvarphi||\nabla\bphi_b||\bw| \, dx \leq M \|\bvarphi\|_V \|\bphi_b\|_{W^{1,4}(\Omega)}\|\bw\|_{L^4(\Omega)},
\]
\[
\int_\Omega\sigmael(\btheta)|\nabla\bphi_b|^2 |\bw|\, dx \leq M\|\bphi_b\|^2_{W^{1,4}(\Omega)}\|\bw\|_{L^2(\Omega)},
\]
\[
\int_\Omega\sigmael(\btheta)|\bvarphi||\nabla\bvarphi||\nabla\bw| \, dx \leq M \left(\int_\Omega|\bvarphi|^2|\nabla\bvarphi|^2 \, dx\right)^{1/2} \|\bw \|_V,
\]
\[
\int_\Omega\sigmael(\btheta)|\bvarphi||\nabla\bphi_b||\nabla\bw| \, dx \leq M \|\bvarphi\|_{L^4(\Omega)} \|\bphi_b\|_{W^{1,4}(\Omega)}\|\bw\|_V,
\]
\[
\int_{\Gamma_N}H_N\bvarphi^2\bw \, d\Gamma+\int_{\Gamma_C}H_C(F)\bvarphi^2\bw \, d\Gamma  \leq C\| \bvarphi\|^2_{L^4(\partial\Omega)} \|\bw\|_{L^2(\partial\Omega)},
\]
\[
\int_{\Gamma_N}H_N\bvarphi \bphi_b\bw \, d\Gamma
+ \int_{\Gamma_C}H_C(F)\bvarphi \bphi_b\bw \, d\Gamma \leq C\|\bphi_b\|_{L^4(\partial\Omega)}\| \bvarphi\|_{L^4(\partial\Omega)} \|\bw\|_{L^2(\partial\Omega)}.
\]
Using Lemma $\ref{lemma4.1}$, Lemma $\ref{lemma4.2}$, the trace theorem, Corollary~\ref{corol_1} and Corollary~\ref{corol_2}, we obtain (\ref{lemma4.3b}).
\end{proof}
\begin{corollary}
\label{corollary1}
Suppose that $(A1)$ and $(A2)$ hold and the functions $\theta, \bvarphi:[0,T]\to V$ solve (\ref{eq2_PV}). By Lemma \ref{lemma4.3} and by the density of $V\cap L^\infty(\Omega)$ in $V$, we find that ($\ref{lemma4.3a}$), and consequently, ($\ref{lemma4.3b}$) hold for all $v\in V$.
\end{corollary}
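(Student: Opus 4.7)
The argument is a density/continuity extension based on the fact that $V\cap L^\infty(\Omega)$ is dense in $V$ (pick, for instance, the truncations $T_n \bw = \max(-n,\min(n,\bw))$, which lie in $V\cap L^\infty(\Omega)$ and converge to $\bw$ in $V$ by standard properties of Sobolev functions). Given any $\bw \in V$, I select such a sequence $\{\bw_n\}\subset V\cap L^\infty(\Omega)$ with $\bw_n \to \bw$ in $V$. Applying Lemma \ref{lemma4.3} to each $\bw_n$ yields identity (\ref{lemma4.3a}) with $\bw_n$ in place of $\bw$, and the plan is then simply to pass $n\to\infty$ term by term on the right-hand side.

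Each of the seven convergences follows from the estimates already displayed in the proof of Lemma \ref{lemma4.3}, whose key inputs are $\|\bvarphi\|_V \le C$ (Lemma \ref{lemma4.1}) and $\bvarphi\nabla\bvarphi \in L^2(\Omega;\mathbb{R}^d)$ (Lemma \ref{lemma4.2}, via $\sigmael \ge \sigma_*>0$). The most delicate term is $\int_\Omega \sigmael(\btheta)\bvarphi\nabla\bvarphi\cdot\nabla\bw_n\,dx$: here Lemma \ref{lemma4.2} gives $\sigmael(\btheta)\bvarphi\nabla\bvarphi \in L^2(\Omega;\mathbb{R}^d)$, while $\nabla\bw_n \to \nabla\bw$ in $L^2(\Omega;\mathbb{R}^d)$, so a direct Hölder argument yields convergence. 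The remaining volume terms are handled using $\nabla\bphi_b \in L^4(\Omega;\mathbb{R}^d)$ together with the embedding $V\hookrightarrow L^4(\Omega)$, and the four boundary terms are handled with the trace embeddings into $L^2(\Gamma_N)$, $L^4(\Gamma_N)$, $L^2(\Gamma_C)$, $L^4(\Gamma_C)$, combined with $\|\bvarphi\|_{L^4(\partial\Omega)}\le C$, which was already established inside the proof of Lemma \ref{lemma4.2}. Passing to the limit in each of the seven terms thus yields identity (\ref{lemma4.3a}) for an arbitrary $\bw\in V$; the bound (\ref{lemma4.3b}) is then immediate, since the Hölder-type estimates from the proof of Lemma \ref{lemma4.3} apply uniformly to every $\bw \in V$ (they never used $\bw\in L^\infty$).

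I do not anticipate any genuinely hard step; the main (mild) subtlety is interpretational. The original integral defining $\dual{N(\btheta,\bvarphi),\bw}{V}$, namely $\int_\Omega \sigmael(\btheta)|\nabla\bvarphi+\nabla\bphi_b|^2 \bw\,dx$, requires $\bw\in L^\infty(\Omega)$, because $|\nabla\bvarphi|^2$ is only known to lie in $L^1(\Omega)$. The corollary should therefore be read as simultaneously extending $N(\btheta,\bvarphi)$ to a well-defined element of $V'$ through the equivalent representation (\ref{lemma4.3a}), whose right-hand side is continuous on $V$ and coincides on the dense subspace $V\cap L^\infty(\Omega)$ with the original integral; uniqueness of continuous extension then makes (\ref{lemma4.3a}) and (\ref{lemma4.3b}) valid for every $\bw\in V$.
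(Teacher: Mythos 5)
Your proposal is correct and follows essentially the same route as the paper: the corollary's justification in the text is precisely the density of $V\cap L^\infty(\Omega)$ in $V$ combined with the fact that every term on the right-hand side of (\ref{lemma4.3a}) is continuous in $\bw$ with respect to the $V$-norm (via Lemmas \ref{lemma4.1} and \ref{lemma4.2} and the embeddings used in the proof of Lemma \ref{lemma4.3}), which is exactly what you spell out. Your additional remark that the identity (\ref{lemma4.3a}) serves as the definition of $\dual{N(\btheta,\bvarphi),\bw}{V}$ for general $\bw\in V$, since the original integral requires $\bw\in L^\infty(\Omega)$, is a correct and worthwhile clarification of what the paper leaves implicit.
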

\medskip

{\blu We next introduce an auxiliary problem in which some of the terms in Problem $\cP_{V}$ 
are delayed in time. The method we use is the so-called time-retardation (see e.g., \cite{KS08} 
and the references therein). The idea is to divide the time interval into finite number of 
intervals of length $h$ and use backward translation in time. We then observe that on 
any such interval all the elements with subscript $h$ are known from the previous time. 
This allows us to decouple the problem and treat the three problems 
(\ref{eq1_Ph})--(\ref{eq1a_Ph}), (\ref{eq2_Ph}) and (\ref{eq3_Ph})--(\ref{eq6_Ph}) 
independently. 

To that end we need the following delay notion.} Given a function $g\colon [0,T]\to X$, where $X$ is a reflexive Banach space, and
$h\in (0, T)$, we denote by $g_h$ the delayed function
\[
g_h(t)= \begin{cases}
g(t-h), & t> h\\
g(0), & t\in[0,h],
\end{cases}
\]
for $t\in (0,T)$. We observe that 
\begin{equation}
\|g_h\|_{L^2(0,T;X)}^2\le h \|g(0)\|_X^2 + \|g\|_{L^2(0,T;X)}^2.\label{nierow}
\end{equation}
\noindent
Indeed, we have
\begin{align*}
&
\int_0^T \|g_h(t)\|_X^2 \, dt  = \int_0^h \|g(0)\|_X^2\, dt + \int_h^T \|g(t-h)\|_X^2 \, dt = \\
&
=h\|g(0)\|_X^2 + \int_0^{T-h} \|g(s)\|_X^2 \, ds \le h \|g(0)\|_X^2 + \|g\|_{L^2(0,T;X)}^2.
\end{align*}
\medskip

{\blu To proceed, we need the operator $F:U\rightarrow U'$ by
\[
\dual{F\bu,\bw}{U}=\int_\Omega|\nabla\bu|^2\nabla\bu\cdot\nabla\bw \, dx \quad\text{for all}\; u,w\in U.
\]
It is noted that the operator $F$ is a $4$-Laplacian, therefore it is pseudomonotone, coercive and bounded on $U$, see  \cite[Lemma 2.111]{CARL}.}

 We now introduce, for a fixed $h\in (0,T)$, the following regularized and time delayed problem. We
 recover the original problem when $h\to 0$, as we show below.

\noindent {\bf Problem $\cP_{h}$}. 
{\it Find $\btheta^h \in {\cal V}$ with $\dot\btheta^h\in{\cal U'}$, $\bvarphi^h\in {\cal V}$,
$\bv^h\in {\cal E}$ with $\dot\bv^h\in{\cal E}'$, $\bxi^h\in L^2(0,T;L^2(\Gamma_C;\mathbb{R}^d))$ such that }
\begin{align}\label{eq1_Ph}
& \rho c_p \dot\btheta^h + A(\btheta^h_h,\btheta^h)+ P(\btheta^h) + h F\theta^h=
 N(\btheta^h_h,\bvarphi^h_h)+ G(\bv^h_h)+R(\bv^h_h) \quad \mbox{in} \,\,\,  \mathcal{U'},\\[2mm]
& \btheta^h(0)=\btheta_0, \label{eq1a_Ph}\\[2mm]
 & L(\theta^h(t),\bvarphi^h(t))+M(\bvarphi^h(t))=0 \quad  \mbox{in}\,\,\,  V', \ \text{for all}  \ t \in [0,T], \label{eq2_Ph}\\[2mm]
&\rho c_p\dot\bv^h+A_d\bv^h+B_d\bu^h+
L_d\btheta^h_h+\bar\gamma^*\bxi^h=\F \quad \mbox{in}\,\,\, \mathcal{E'}, \label{eq3_Ph}\\[2mm]
&\label{eq4_Ph} \bxi^h(t)\in \partial J(\gamma\bv^h_\tau(t)) \quad \text{for a.e.} \  t \in [0,T],\\[2mm]
& u^h(t)=u_0+ \int_0^tv^h(s) \, ds, \label{eq5_Ph}\\[2mm]
& \bv^h(0)=\bv_0. \label{eq6_Ph}
\end{align}

The next theorem establishes the existence of a solution to Problem $\cP_{h}$. 
\begin{theorem}
Assume that $(A1)-(A9)$ hold, then Problem $(\cP_{h})$ has a solution.
\end{theorem}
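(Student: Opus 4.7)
The structure of Problem $\cP_{h}$ is crafted so that the time-retarded quantities $\btheta^h_h$, $\bvarphi^h_h$, $\bv^h_h$ are known from the past whenever we work on a subinterval of length $h$, which decouples the three subproblems. The plan is to partition $[0,T]$ into intervals $I_k=[kh,(k+1)h]$ and proceed by induction on $k$, assuming $\btheta^h,\bvarphi^h,\bv^h,\bu^h,\bxi^h$ have been constructed on $[0,kh]$ (on $I_0=[0,h]$ the delayed data reduces to the prescribed initial values). On each $I_k$ the three subproblems are then solved sequentially.

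First, I would solve $(\ref{eq1_Ph})$ for $\btheta^h$ with known initial condition $\btheta^h(kh)$. In this equation, $A(\btheta^h_h,\cdot)$ is linear in its second argument and, by $(A3)$, uniformly elliptic with measurable-in-$t$ coefficients; $P$ is linear, continuous and monotone by $(A6)$; and the right-hand side $N(\btheta^h_h,\bvarphi^h_h)+G(\bv^h_h)+R(\bv^h_h)$ is a fixed element of $\mathcal{U}'$ depending only on data from the preceding interval. Together with the regularizing $4$-Laplacian $hF$, which by \cite[Lemma 2.111]{CARL} is bounded, pseudomonotone and coercive on $U$ (this is precisely what provides coercivity on the evolution triple $U\subset H\subset U'$), the full left-hand-side operator is pseudomonotone, bounded and coercive on $U$. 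The classical surjectivity theorem for first-order evolution problems driven by pseudomonotone operators (on $U\subset H\subset U'$) then yields $\btheta^h\in L^4(I_k;U)$ with $\dot\btheta^h\in L^{4/3}(I_k;U')$ satisfying $(\ref{eq1_Ph})$--$(\ref{eq1a_Ph})$.

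Next, at each $t\in I_k$ the temperature $\btheta^h(t)\in V$ is now known, and $(\ref{eq2_Ph})$ becomes an affine equation in $\bvarphi^h(t)$ whose principal part is linear and $V$-coercive with constant at least $\sigma_\ast>0$ by $(A2)$, augmented by the nonnegative boundary contributions from $M$ thanks to $(A6)$; the affine source, coming from the $\bphi_b$-terms in $L$ and $M$, lies in $V'$ by $(A1)$. The Lax--Milgram lemma produces a unique $\bvarphi^h(t)\in V$, with measurability of $t\mapsto\bvarphi^h(t)$ inherited from that of $t\mapsto\btheta^h(t)$ via continuous dependence.

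Finally, $(\ref{eq3_Ph})$--$(\ref{eq4_Ph})$ form a first-order evolutionary hemivariational inequality for $\bv^h$ on $I_k$ with initial data $\bv^h(kh)$, once one substitutes $\bu^h(t)=\bu_0+\int_0^t\bv^h(s)\,ds$ (which introduces a Volterra-type lower-order perturbation through $B_d$) and regards $L_d\btheta^h_h$ as a known forcing term in $\mathcal{E}'$. By $(A4)$ the operator $A_d$ is linear, bounded, symmetric and coercive on $E$; by Lemma \ref{lemma_J1}, the functional $J$ satisfies the hypothesis $H(J)$; and the smallness assumption $(A8)$ supplies the balance between the nonmonotonicity of $\partial J$ and the coercivity of $A_d$ required by the existence theory of hemivariational inequalities. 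Theorem \ref{MOSTheorem} then delivers $\bv^h\in\mathcal{E}$ with $\dot\bv^h\in\mathcal{E}'$ and a selection $\bxi^h\in L^2(I_k;L^2(\Gamma_C;\mathbb{R}^d))$ realising $(\ref{eq4_Ph})$, and $\bu^h$ is defined by $(\ref{eq5_Ph})$. Concatenation over the intervals $I_k$ produces the global solution. I expect the main obstacle to lie in Step 1: one must check pseudomonotonicity, boundedness and coercivity of the non-autonomous operator $A(\btheta^h_h(\cdot),\cdot)+P+hF$ on $U$ and the measurability of the Nemytskii composition $t\mapsto A(\btheta^h_h(t),\cdot)$ built from the Lipschitz $k_{ij}$ of $(A3)$, all of which are needed to invoke the abstract evolution result.
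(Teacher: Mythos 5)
Your proposal is correct and follows essentially the same route as the paper: decouple via the time delay, solve the temperature equation on each subinterval by the pseudomonotone evolution theorem (with the $hF$ regularization supplying coercivity on $U$), then the potential by Lax--Milgram pointwise in $t$, then the velocity by the hemivariational inequality existence theorem, and iterate over subintervals. The only detail you gloss over is that on $[0,h]$ the delayed potential $\bvarphi^h_h$ is not a prescribed initial value but must first be obtained by solving $(\ref{eq2_Ph})$ at $t=0$ with $\btheta_0$ via Lax--Milgram, which is how the paper begins its argument.
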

\begin{proof}
First, consider equation ($\ref{eq2_Ph}$) at $t=0$. Since  $\btheta^h(0)=\btheta_0$ it translates to
\begin{equation}\label{phi_0}
L(\btheta_0,\bvarphi^h(0))+ M(\bvarphi^h(0))=0.
\end{equation}
Using the Lax-Milgram Lemma, it follows that there exists $\bvarphi^h(0)\in V$ that is a solution of (\ref{phi_0}). Using this $\bvarphi^h(0)$, we construct a function $\bvarphi^h_h$ on the interval $[0,h]$, by $\bvarphi^h_h(t)=\bvarphi^h(0)$ for all $t\in [0,h]$. Next, we define the functions $\bv^h_h(t)=\bv_0$, $\btheta^h_h(t)=\btheta_0$ for all $t \in[0,h]$. We use the fact that $\btheta^h_h$, $\bvarphi^h_h$ and $\bv^h_h$ are given on $[0,h]$ and observe that the operator $A(\theta_h^h,\cdot) + P + hF\colon \mathcal{U}\to \mathcal{U}'$ is pseudomonotone, as a sum of pseudomonotone operators  (see Proposition~\ref{prop:sum_pseudo} and the note that $F$ is pseudomonotone). Moreover, it is straightforward to show that it is coercive.  Applying Theorem~\ref{ROUB}, we conclude that there exists a solution $\btheta^h \in L^4(0,h;U)$ with $\dot{\btheta^h}\in L^{4/3}(0,h;U')$ of ($\ref{eq1_Ph}$) and(\ref{eq1a_Ph}). Now we substitute $\btheta^h(t)$ into ($\ref{eq2_Ph}$) and solve it on $[0,h]$, again using the Lax-Milgram Lemma. 
Thus, we obtain the function $\bvarphi^h\colon[0,h]\to V$. In fact, the function $\bvarphi^h$ is well defined for all  $t \in [0,h]$. Next, for a given $\btheta^h_h(t)=\theta_0$ for all $t \in [0,h]$ we find a solution $\bv^h$ of ($\ref{eq3_Ph}$)--(\ref{eq6_Ph}), satisfying $\bv^h\in L^2(0,h;E)$ with $\dot\bv^h\in L^2(0,h;E')$ by applying Theorem~\ref{MOSTheorem}. Using the fact that $\{\btheta^h \in L^4(0,h;U), \dot{\btheta^h}\in L^{4/3}(0,h;U')\}\subset C(0,h;H)$ and $\{\bv^h\in L^2(0,h;E), \dot\bv^h\in L^2(0,h;E')\}\subset C(0,h;Q)$ we conclude, that $\btheta^h \in C(0,h;H)$, $\bv^h\in C(0,h;Q)$, so the values $\btheta^h(h)\in H$ and $\bv^h(h)\in Q$ are well defined. 
Thus, we have proven the existence of a solution to (\ref{eq1_Ph})--(\ref{eq6_Ph}) on the interval $[0,h]$.

In the second step, we consider problem  $\cP_{h}$ on interval $[h,2h]$, using $\btheta^h(h)$ and $\bv^h(h)$, obtained in the first time step, as the initial conditions. Note, that the functions $\theta^h_h$, $\varphi^h_h$ and $v^h_h$ are already known on the interval $[h,2h]$. Namely, they are given by $\theta^h_h(t)=\theta^h(t-h),\,\,\varphi^h_h(t)=\varphi^h(t-h),\,\,v^h_h(t)=v^h(t-h)\,\,\,t\in(h,2h]$. {\blu The existence of a solution on the interval $[h, 2h]$ follows from the same considerations as those
for the solution on $[0,h]$.

We continue inductively the same process and prove the existence of a solution $\btheta^h\in L^4(h,2h;U)$, with $\dot{\theta}\in L^{4/3}(h,2h;U'), \ \bvarphi^h\colon [0,h]\to V, \ \bv^h\in L^2(h,2h;E)$ with $\dot{v} ^h\in L^2(h,2h;E')$. Continuing step by step in this way, we obtain the solution to Problem $\cP_h$ on the whole interval $[0,T]$.}
\end{proof}

\begin{remark}\label{remark_varphi_h^h}
It follows from (\ref{eq2_Ph}) and the definition of  $\theta^h_h$ and $\varphi^h_h$ that the functions 
$\theta_h^h$ and $\varphi_h^h$ satisfy equation (\ref{eq2_PV}). Thus, the conclusions of Lemmas 
$\ref{lemma4.1}$-- $\ref{lemma4.3}$, as well as Corollary $\ref{corollary1}$,  apply to $\varphi_h^h$.
\end{remark}

The next lemma deals with a-priori estimates for the solution of  Problem~$\cP_h$. 
\begin{lemma}\label
{lemma_a_priori}
Let the assumptions $(A1)-(A7)$ hold and let $(\theta^h, \varphi^h, v^h)$ be a solution of Problem $\cP_h$. Then, the following bounds hold true
\begin{align}
\label{eq15}\bv^h &\ \mbox{is bounded in }\, L^\infty(0,T;Q)\,\mbox{ and  in}\,\, {\cal E},\\
\label{eq17}\btheta^h &\ \mbox{is bounded in }\, L^\infty(0,T;H)\,\mbox{ and in}\ {\cal V},\\
\label{eq18}\dot{\btheta}^h &\ \mbox{is bounded in }\, \mathcal{U}',\\
\label{eq20}\bvarphi^h  &\ \mbox{is bounded in }\,L^\infty(0,T;V).
\end{align}
\end{lemma}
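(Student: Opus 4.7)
The plan is to prove the four bounds in the order they appear, starting with the easiest and using each bound to fuel the next.

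First, the bound on $\bvarphi^h$. Observe that for every fixed $t$, equation (\ref{eq2_Ph}) is precisely (\ref{eq2_PV}) with $\btheta=\btheta^h(t)$ and $\bvarphi=\bvarphi^h(t)$. Lemma~\ref{lemma4.1} is therefore applicable pointwise in $t$ and yields $\|\bvarphi^h(t)\|_V\le C$ for all $t$, which gives (\ref{eq20}) immediately. I also record that, by Remark~\ref{remark_varphi_h^h}, the same applies to $\bvarphi_h^h$, so Corollary~\ref{corollary1} can be used later.

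Next, I would test the mechanical equation (\ref{eq3_Ph}) with $\bv^h(t)\in E$ and integrate from $0$ to $t$. Coercivity of $A_d$ and $B_d$ from $(A4)$, together with the symmetry $\langle B_d\bu^h,\bv^h\rangle=\tfrac{d}{dt}\tfrac12\langle B_d\bu^h,\bu^h\rangle$ and $\dot\bu^h=\bv^h$, produce the positive quantities $\tfrac{\rho c_p}{2}|\bv^h(t)|_Q^2+\tfrac12\langle B_d\bu^h(t),\bu^h(t)\rangle+\delta\int_0^t\|\bv^h\|_E^2\,ds$. The coupling term $\langle L_d\btheta^h_h,\bv^h\rangle$ is bounded by $C\|\btheta^h_h\|_H\|\bv^h\|_E$, the friction contribution $\langle\bxi^h,\gamma\bv^h\rangle$ by $\bar F\bar\mu\|\gamma\|\|\bv^h\|_E$ (using Lemma~\ref{lemma_J1}(c)), and $\langle\F,\bv^h\rangle$ by $(A5)$. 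Young's inequality absorbs the $\|\bv^h\|_E$ factors into the $\delta$ term on the left. In parallel, test (\ref{eq1_Ph}) with $\btheta^h(t)\in V$ and integrate. Coercivity of $A(\btheta^h_h,\cdot)$ from $(A3)$ holds uniformly in the first argument, $P(\btheta^h)$ and $hF\btheta^h$ contribute nonnegative terms on the left, and the right-hand side is controlled using Corollary~\ref{corollary1} for $N$ (giving $C\|\btheta^h\|_V$) and the operator norms of $G$ and $R$ (giving $C\|\bv^h_h\|_E\|\btheta^h\|_V$); again Young absorbs $\|\btheta^h\|_V$.

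Adding the two resulting inequalities and applying the delay estimate (\ref{nierow}) to replace $\int_0^t\|\btheta^h_h\|_H^2\,ds$ and $\int_0^t\|\bv^h_h\|_E^2\,ds$ by the corresponding integrals of $\btheta^h$ and $\bv^h$ plus $h(\|\btheta_0\|_H^2+\|\bv_0\|_E^2)$, I obtain
\[
|\bv^h(t)|_Q^2+|\btheta^h(t)|_H^2+\int_0^t\bigl(\|\bv^h\|_E^2+\|\btheta^h\|_V^2\bigr)ds\le C+C\int_0^t\bigl(|\bv^h|_Q^2+|\btheta^h|_H^2\bigr)ds,
\]
with $C$ independent of $h$, after using $(A9)$ on the initial data. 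Gronwall's inequality then yields simultaneously (\ref{eq15}) and (\ref{eq17}).

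Finally, for (\ref{eq18}), I read $\dot\btheta^h$ off from (\ref{eq1_Ph}) and estimate each term in the dual norm $U'$. The terms $A(\btheta^h_h,\btheta^h)$, $P\btheta^h$, $N(\btheta^h_h,\bvarphi^h_h)$, $G(\bv^h_h)$ and $R(\bv^h_h)$ all live in $V'$ with norms controlled by the already established bounds, and the embedding $U\hookrightarrow V$ gives $\|\cdot\|_{V'}\le C\|\cdot\|_{U'}$-dual estimates; the $4$-Laplacian term satisfies $\|hF\btheta^h\|_{U'}\le h\|\btheta^h\|_U^3$, and the energy estimate above furnishes $h\int_0^T\|\btheta^h\|_U^4\,dt\le C$, so raising to the $4/3$-power and integrating gives $\|hF\btheta^h\|_{\mathcal U'}^{4/3}\le Ch^{1/3}$, which is uniformly bounded. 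The main obstacle I anticipate is bookkeeping the coupling: verifying that the Gronwall constants are independent of $h$ requires careful handling of the delayed arguments via (\ref{nierow}), and the Joule heating term $N(\btheta^h_h,\bvarphi^h_h)$ is the only one that is genuinely quadratic in the gradient of $\bvarphi$, so the nontrivial $L^\infty(0,T;V)$ bound from Lemma~\ref{lemma4.1} together with Corollary~\ref{corollary1} is essential to close the estimate.
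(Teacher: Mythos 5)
Your overall strategy---pointwise use of Lemma \ref{lemma4.1} for the bound (\ref{eq20}), energy estimates obtained by testing (\ref{eq1_Ph}) with $\btheta^h$ and (\ref{eq3_Ph}) with $\bv^h$, the delay inequality (\ref{nierow}), and reading $\dot\btheta^h$ off the equation (including the correct treatment of the $hF\btheta^h$ term via $h\int_0^T\|\btheta^h\|_U^4\,dt\le C$)---is the same as the paper's. The gap is in the step where you add the two energy inequalities and assert
\[
\|\bv^h(t)\|_Q^2+\|\btheta^h(t)\|_H^2+\int_0^t\bigl(\|\bv^h\|_E^2+\|\btheta^h\|_V^2\bigr)\,ds\le C+C\int_0^t\bigl(\|\bv^h\|_Q^2+\|\btheta^h\|_H^2\bigr)\,ds.
\]
This does not follow from what precedes it. The frictional-heating term in the heat equation is a boundary integral, so it only admits the bound $|\dual{R(\bv^h_h),\btheta^h}{V}|\le\bar{\mu}\bar{F}\|\gamma\|\,\|\bv^h_h\|_E\,\|\btheta^h\|_{L^2(\Gamma_C)}\le C\|\bv^h_h\|_E\|\btheta^h\|_V$; the $H$-norm of $\btheta^h$ is not enough here. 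After Young's inequality and (\ref{nierow}) this leaves $C_\varepsilon\int_0^t\|\bv^h\|_E^2\,ds$ on the right-hand side, where $C_\varepsilon$ must be taken large (of order $C^2/(4\delta)$) once you insist that the companion term $\varepsilon\|\btheta^h\|_V^2$ be absorbed by the $\delta$-dissipation of $A(\btheta^h_h,\cdot)$. The coefficient of $\int_0^t\|\bv^h\|_E^2\,ds$ available on the left after treating the friction functional is only about $\delta-\bar{F}d_\mu\|\gamma\|^2-\varepsilon'$, and nothing in $(A1)$--$(A9)$ guarantees that $C_\varepsilon$ is smaller than this. So the implicit absorption of $C_\varepsilon\int_0^t\|\bv^h\|_E^2\,ds$ into the left-hand side is unjustified, and the single Gronwall application collapses.

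The paper closes the loop sequentially, and this is the repair you need: apply Gronwall to the $\btheta^h$-inequality \emph{alone}, keeping $\int_0^t\|\bv^h\|_E^2\,ds$ as an undetermined quantity on the right; this gives (\ref{eq7}), namely $\|\btheta^h(t)\|_H^2\le C+C\int_0^t\|\bv^h\|_E^2\,ds$, with no smallness condition on the constants. Substituting this into the mechanical estimate (\ref{eq10}) produces the iterated integral in (\ref{eq11}), and a second Gronwall applied to $y(t)=\int_0^t\|\bv^h(s)\|_E^2\,ds$ yields (\ref{eq12}) regardless of the size of the constants; the bounds (\ref{eq15})--(\ref{eq17}) then follow as in (\ref{eq13})--(\ref{eq14}). (Your one-shot version could alternatively be rescued by interpolating the compact trace, $\|\btheta^h\|_{L^2(\Gamma_C)}\le\varepsilon\|\btheta^h\|_V+C_\varepsilon\|\btheta^h\|_H$, so that the $E$-norm of $\bv^h_h$ appears only with an arbitrarily small coefficient, but you would have to say so explicitly.) The remaining parts of your argument, in particular the $\mathcal{U}'$ bound on $\dot\btheta^h$, are correct.
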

\begin{proof}
By Remark \ref{remark_varphi_h^h} and Corollary $\ref{corollary1}$, we have
\begin{align}\label{eq_KB_2}
\dual{N(\theta_h^h,\varphi_h^h),\theta_h^h}{V}\leq C\|\theta^h_h\|_V.
\end{align}
In the proof we use $C$ for a generic constant that depends only on the problem data
but not on $h$, and the value of which may change from line to line.

  We test ($\ref{eq1_Ph}$) with $\btheta^h$, integrate over $(0,t)$, for $t\in (0,T)$, then using the hypotheses $(A3)$--$(A7)$ and (\ref{eq_KB_2}), we obtain 
\begin{align}\label{eq4}
&\|\btheta^h(t)\|^2_H +\int_0^t\|\btheta^h(s)\|^2_V\, ds + h\int^t_0\|\btheta^h(s)\|^4_U\, ds\nonumber\\[2mm]
&\leq C + C\int^t_0\|\bv^h_h(s)\|^2_E\, ds + C\int_0^t \|\btheta^h(s)\|^2_H \, ds.
\end{align}
\noindent
Combining $(\ref{eq4})$ with (\ref{nierow}) applied to $v_h^h$, we have 
\begin{align}\label{eq6}
&\|\btheta^h(t)\|^2_H +\int_0^t\|\btheta^h(s)\|^2_V\, ds + h\int^t_0\|\btheta^h(s)\|^4_U\, ds\nonumber\\[2mm] 
& \leq C + C\int^t_0\|\bv^h(s)\|^2_E\, ds + C\int_0^t \|\btheta^h(s)\|^2_H \, ds.
\end{align}
Applying the Gronwall lemma to (\ref{eq6}), we obtain 
\begin{equation}
\label{eq7}
\|\btheta^h(t)\|^2_H\leq C+C\int^t_0\|\bv^h(s)\|^2_E\, ds.
\end{equation}
Next, we use (\ref{eq7}) in the right-hand side of (\ref{eq6}) and get 
\begin{equation}\label{eq8}
\int_0^t\|\btheta^h(s)\|^2_V\, ds \leq  \int_0^t\left(C + C \int^s_0\|\bv^h(r)\|^2_E\, dr\right)\, ds \leq C + C \int^t_0\|\bv^h(s)\|^2_E\, ds.
\end{equation}
To summarize, it follows from (\ref{eq6}) and (\ref{eq8}) that 
\begin{equation}\label{eq9}
\|\btheta^h(t)\|^2_H +\int_0^t\|\btheta^h(s)\|^2_V\, ds+h\int^t_0\|\btheta^h(s)\|^4_U \, ds \leq C + C \int^t_0\|\bv^h(s)\|^2_E\, ds.
\end{equation}
Next, we test (\ref{eq3_Ph}) with $\bv^h$, integrate over $[0,t]$ and use (\ref{nierow}) to obtain 
\begin{equation}\label{eq10}
\|\bv^h(t)\|^2_Q+\int^t_0\|\bv^h(s)\|^2_E \, ds + \|u^h(t)\|^2_E \leq C+ C\int_0^t \|
\btheta^h(s)\|^2_H\, ds.
\end{equation}
We use (\ref{eq7}) in (\ref{eq10}) and find
\begin{equation}
\label{eq11}
\|\bv^h(t)\|^2_Q+\int^t_0\|\bv^h(s)\|^2_E \, ds + \|u^h(t)\|^2_E \leq C+C\int^t_0\left(\int^s_0\|\bv^h(r)\|^2_E\, dr\right)\, ds.
\end{equation}
Now, using the Gronwall inequality to estimate $\int^t_0\|\bv^h(s)\|^2_E\, ds$, we get
\begin{equation}\label{eq12}
\int^t_0\|\bv^h(s)\|^2_E\, ds\leq C.
\end{equation}
Consequently, combining (\ref{eq12}) with $(\ref{eq11})$, it follows that
\begin{equation}\label{eq13}
\|\bv^h(t)\|^2_Q+\int^t_0\|\bv^h(s)\|^2_E \, ds + \|u^h(t)\|^2_E \leq C \quad \text{for a.e} \  t \in (0,T).
\end{equation}
Using (\ref{eq12}) in (\ref{eq9}) yields 
\begin{equation}\label{eq14}
\|\btheta^h(t)\|^2_H +\int_0^t\|\btheta^h(s)\|^2_V\, ds+h\int^t_0\|\btheta^h(s)\|^4_U \, ds \leq C.
\end{equation}
Now, (\ref{eq13}) and (\ref{eq14}) imply (\ref{eq15})--(\ref{eq17}), and we obtain (\ref{eq18})
from (\ref{eq1_Ph}). Also, it follows from Remark \ref{remark_varphi_h^h} and 
Lemma \ref{lemma4.1} that (\ref{eq20}) holds true. This completes the proof.
\end{proof}

We turn to study the convergence of the solutions of Problem $\cP_h$ to limit elements that will be candidates for be a solution to Problem $\cP_{V}$.

Estimates (\ref{eq17}), (\ref{eq18}) and the Aubin-Lions Theorem (see e.g., \cite[Theorem 2.25]{MOSBOOK}) imply that there exists a subsequence such that as $h\to 0$ via a sequence of values,
\begin{equation}\label{eq22}
\btheta^h \rightarrow \btheta \ \mbox{strongly in }\,\mathcal{H}.
\end{equation}
Let $\theta\in\mathcal{H}$ denote the limit and let $\varphi\colon[0,T]\to V$ satisfy
\begin{equation}
\label{eq25}
L(\theta(t),\bvarphi(t))+M(\bvarphi(t))=0 \quad\text{for all} \  t\in (0,T).
\end{equation}
The existence of a solution $\varphi$ follows from Lax-Milgram lemma.

Let the function $v\in {\cal W}$ satisfy
\begin{equation}
\label{eq26}
\begin{cases}
 & \rho c_p\dot\bv+A_d\bv+B_d\bu+
L_d\btheta+\gamma^*\bxi=\F \ \mbox{in}\ \mathcal{E'}, \\
 & \bxi(t)\in \partial J(\gamma\bv_\tau(t)) \quad a.e.\  t \in (0,T),\\
 &  u(t)=u_0+ \int_0^tv(s) \, ds , \, v(0)=v_0 .
\end{cases}
\end{equation}

The existence of solution to (\ref{eq26}) is a consequence of Theorem~\ref{MOSTheorem}. 
Indeed, taking $V_0=E$, $Z_0=Z$, $H_0=Q$, $M=\tau\circ\gamma_Z$, $A=A_d$, $B=B_d$, $f={\cal F}-L_d\btheta$, we see, that (\ref{eq26}) is equivalent to Problem ${\cal P}_0$. Moreover, the operator $A_d$ satisfies assumptions $H(A)$ with the constants $m_1=\alpha=\delta$, $a_0=0$ and $a_1=\|A_d\|_{{\cal L}(E,E')}$. The operator $B_d$ satisfies assumption $H(B)$ and the functional $J$ satisfies assumptions $H(J)$ with constants $c_0=\bar{F}\bar{\mu}$, $c_1=0$ and $m_2=\bar{F}d_\mu$. Moreover, we see that $c_e=\|j\|_{{\cal L}(E,Z)}$, so $c_e\|M\|_{{\cal L}(Z,X)}=\|\tau\circ\gamma\|=\|\gamma\|$. Therefore,  assumption $(A8)$ implies the inequality (\ref{E2.NEWCOND3}). Next, since $\delta>0$ and $c_1=0$, the inequality (\ref{E2.NEWCOND2}) clearly holds. Finally, assumptions $(A5)$ and $(A9)$ imply $H_0$.

We have the following convergence results as $h\to 0$ via a sequence of values.
\begin{lemma} Let the assumptions $(A1)$--$(A9)$ hold and let $\varphi^h$, $v^h$ be solutions to Problem~$\cP_h$. Then, passing to a subsequence if necessary, we have
\begin{align}
\label{lemma4.4a}
\bvarphi^h \rightarrow \bvarphi &\, \mbox{strongly in }\, \mathcal{V} \, \mbox{and pointwise a.e. in  }\, V,\\
\label{lemma4.4b}\bv^h \rightarrow \bv &\, \mbox{strongly in }\, \mathcal{E} \, \mbox{and pointwise a.e. in  }\, E,
\end{align}
where the functions $\varphi$ and $v$ are solutions of (\ref{eq25}) and (\ref{eq26}), respectively.
\end{lemma}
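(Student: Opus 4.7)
The plan is to combine the a priori bounds from Lemma \ref{lemma_a_priori} with the strong convergence $\theta^h \to \theta$ in $\mathcal{H}$ already established in (\ref{eq22}), extract weakly convergent subsequences of $\varphi^h$ and $v^h$, identify their weak limits with the solutions of (\ref{eq25}) and (\ref{eq26}), and finally upgrade to strong convergence by testing with the natural differences. First, (\ref{eq15}) together with a uniform $\mathcal{E}'$-bound on $\dot v^h$ (obtained from (\ref{eq3_Ph}) using the boundedness of $A_d, B_d, L_d$ and the estimate $\|\xi^h\|_{L^2(\Gamma_C;\mathbb{R}^d)} \le \bar{F}\bar{\mu}$ from Lemma \ref{lemma_J1}(c)) allows the Aubin-Lions theorem to extract a subsequence with $v^h \rightharpoonup v$ in $\mathcal{E}$, $v^h \to v$ strongly in $L^2(0,T;Q)$, and consequently $\gamma v^h \to \gamma v$ in $L^2(0,T;L^2(\Gamma_C;\mathbb{R}^d))$ by trace compactness. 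By (\ref{eq20}), $\varphi^h \rightharpoonup \varphi$ in $\mathcal{V}$. The delayed sequences $\theta^h_h, v^h_h, \varphi^h_h$ share the same limits as their non-delayed counterparts: using (\ref{nierow}) and translation-continuity of $L^2$-Bochner integrals, one has $\theta^h_h \to \theta$ in $\mathcal{H}$, $v^h_h \to v$ in $L^2(0,T;Q)$, with analogous statements for the traces.

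The identification of the limits proceeds as follows. For (\ref{eq2_Ph}), Lipschitz continuity of $\sigma_{el}$ from (A2) together with strong convergence $\theta^h \to \theta$ yields $\sigma_{el}(\theta^h) \to \sigma_{el}(\theta)$ in $\mathcal{H}$ with uniform $L^\infty$ bound; combined with the weak convergence of $\nabla \varphi^h$ this gives convergence of $L(\theta^h, \varphi^h)$ to $L(\theta, \varphi)$, while the linear boundary operator $M$ passes to the limit trivially, thereby producing (\ref{eq25}). For (\ref{eq3_Ph}), the linear operators $A_d, B_d, L_d$ pass to the limit by weak convergence of $v^h$ and strong convergence of $\theta^h_h$; the bound from Lemma \ref{lemma_J1}(c) yields a weak limit $\xi^h \rightharpoonup \xi$ in $L^2(0,T;L^2(\Gamma_C;\mathbb{R}^d))$, and the inclusion $\xi \in \partial J(\gamma v_\tau)$ follows from the strong convergence $\gamma v^h_\tau \to \gamma v_\tau$ together with the closedness of the graph of the Clarke subdifferential in the strong-weak topology. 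Thus $v$ satisfies (\ref{eq26}).

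To promote the convergences to strong ones, I subtract (\ref{eq25}) from (\ref{eq2_Ph}) and test with $\varphi^h - \varphi$. The coercivity $\sigma_{el} \ge \sigma_*$ gives
\[
\sigma_* \|\nabla(\varphi^h - \varphi)\|_Q^2 \le \left|\int_\Omega (\sigma_{el}(\theta) - \sigma_{el}(\theta^h))(\nabla\varphi + \nabla\phi_b) \cdot \nabla(\varphi^h - \varphi)\,dx\right| + (\text{boundary terms}),
\]
and the right-hand side tends to zero by dominated convergence (using Lipschitz continuity of $\sigma_{el}$, strong $\theta^h \to \theta$, and the uniform $V$-bound on $\varphi^h$), producing $\varphi^h \to \varphi$ in $\mathcal{V}$. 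Analogously, testing the difference of (\ref{eq3_Ph}) and the first line of (\ref{eq26}) with $v^h - v$ and integrating over $[0,T]$, the coercivity of $A_d$ with constant $\delta$ dominates $\|v^h - v\|_\mathcal{E}^2$ from below; Lemma \ref{lemma_J1}(d) contributes
\[
\int_0^T \langle \xi^h - \xi, \gamma(v^h_\tau - v_\tau)\rangle\,dt \ge -\bar{F} d_\mu \int_0^T \|\gamma(v^h - v)\|_{L^2(\Gamma_C;\mathbb{R}^d)}^2\,dt,
\]
and assumption (A8), namely $\delta > \bar{F} d_\mu \|\gamma\|^2$, absorbs this defect into the left-hand side, while the remaining source terms vanish by strong convergence of $\theta^h_h$ and $v^h$, yielding $v^h \to v$ strongly in $\mathcal{E}$.

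The hard part will be the passage to the limit in the subdifferential inclusion $\xi^h \in \partial J(\gamma v^h_\tau)$: because the Clarke subdifferential is only upper semicontinuous, identifying the weak limit $\xi$ as an element of $\partial J(\gamma v_\tau)$ requires strong convergence of the arguments (supplied here by trace compactness for $v^h$) together with a time-integrated graph-closedness argument of the kind recalled in the Appendix. A second delicate point, essential for upgrading the weak convergence $v^h \rightharpoonup v$ in $\mathcal{E}$ to strong convergence, is the quantitative balance in (A8): the almost-monotonicity defect $\bar{F} d_\mu$ from Lemma \ref{lemma_J1}(d) must be strictly dominated by $\delta/\|\gamma\|^2$ so that the subdifferential contribution can be absorbed into the coercive part of $A_d$ when testing with $v^h - v$.
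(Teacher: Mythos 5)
Your proposal is essentially correct, and its decisive estimates (the coercivity argument for $\varphi^h-\varphi$ with dominated convergence driven by $\sigma_{el}(\theta^h)\to\sigma_{el}(\theta)$ a.e., and the energy estimate for $v^h-v$ in which the relaxed-monotonicity defect $\bar F d_\mu\|\gamma\|^2$ from Lemma \ref{lemma_J1}(d) is absorbed by the viscous coercivity $\delta$ via (A8)) are exactly the ones the paper uses. The route differs in one structural respect: you first extract weak limits of $\varphi^h$, $v^h$, $\xi^h$ and identify them as solutions of (\ref{eq25})--(\ref{eq26}), which forces you into the strong--weak graph-closedness argument for $\partial J$ that you flag as ``the hard part.'' The paper sidesteps this entirely: it \emph{defines} $\varphi$ and $v$ up front as solutions of the limit problems --- existence coming from Lax--Milgram for (\ref{eq25}) and from Theorem \ref{MOSTheorem} for (\ref{eq26}), with no reference to the sequence --- and then proves strong convergence of $\varphi^h-\varphi$ and $v^h-v$ directly by the difference estimates. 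Since those estimates hold for \emph{any} solution of the limit problems, the identification step is redundant (and, as a by-product, the estimate shows the solution of (\ref{eq26}) is unique under (A8)). Two small technical points in your version: strong convergence of $v^h$ in $L^2(0,T;Q)$ does not by itself give convergence of the traces --- you need Aubin--Lions with the intermediate space $Z=H^{1/2}(\Omega;\mathbb{R}^d)\cap E$ on which $\gamma_Z$ is continuous (the paper never needs this, as it never passes to the limit in the inclusion); and in the $\varphi$ estimate the boundary terms $\int_{\Gamma_N}H_N(\varphi-\varphi^h)^2$ and $\int_{\Gamma_C}H_C(F)(\varphi-\varphi^h)^2$ are nonnegative and sit on the coercive side, so they should simply be dropped rather than carried to the right-hand side. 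Neither point is a genuine gap; your simpler and more economical move is to adopt the paper's ordering and delete the identification step.
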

\begin{proof}
Subtracting $(\ref{eq2_Ph})$ from $(\ref{eq25})$, we have for $\bw \in V$
\begin{align}
& \int_\Omega \sigmael(\btheta^h) (\nabla\bvarphi-\nabla\bvarphi^h)\cdot\nabla\bw \, dx +
 \int_{\Gamma_N} H_N(\bvarphi-\bvarphi^h)\bw \, d\Gamma\label{eqn_1}\\ 
 & +\int_{\Gamma_C}H_C(F)(\bvarphi-\bvarphi^h)\bw \, d\Gamma=\int_\Omega[\sigmael(\btheta^h)-\sigmael(\btheta)] \left(\nabla\bphi_b+\nabla\varphi\right)\cdot\nabla\bw \, dx. \nonumber
\end{align}
Now, taking $\bw=\bvarphi-\bvarphi^h$ as a test function in (\ref{eqn_1}) and using $(A2)$ and $(A6)$ we get
\begin{align*}
\sigma_*\|\bvarphi-\bvarphi^h \|^2_V&\leq \int_\Omega \sigmael(\btheta^h)|\nabla\bvarphi-\nabla\bvarphi^h|^2 \, dx \leq \int_\Omega [\sigmael(\btheta)-\sigmael(\btheta^h)] |\nabla\bphi_b+\nabla\bvarphi|(\nabla\bvarphi-\nabla\bvarphi^h) \, dx\\
& \leq
 \left(\int_\Omega |\sigmael(\btheta)-\sigmael(\btheta^h)|^2 |\nabla\bphi_b+\nabla\bvarphi|^2\, dx\right)^{1/2}\|\nabla\bvarphi-\nabla\bvarphi^h\|_Q\\
 & \leq
\varepsilon\|\bvarphi-\bvarphi^h\|^2_V+\frac{C}{\varepsilon} \int_\Omega |\sigmael(\btheta)-\sigmael(\btheta^h)|^2 (|\nabla\bphi_b|^2+|\nabla\bvarphi|^2)\, dx.
\end{align*}
After some manipulations and integration over (0,T), we have 
\begin{equation}\label{eq27}
(\sigma_*-\varepsilon)\|\bvarphi-\bvarphi^h\|_{\mathcal{V}} \leq
 C\int_0^T\int_\Omega |\sigmael(\btheta)-\sigmael(\btheta^h)|^2 (|\nabla\bphi_b|^2+|\nabla\bvarphi|^2)\, dx\, dt.
\end{equation}
From $(\ref{eq22})$ we have, for a subsequence, 
\begin{equation}\label{eq28}
\btheta^h(x,t) \rightarrow \btheta(x,t) \ \mbox{a.e. in }\,\Omega\times(0,T).
\end{equation}
Since $\sigmael$ is Lipschitz continuous, we get
\begin{equation}\label{eq29}
\sigmael(\btheta^h(x,t)) \rightarrow \sigmael(\btheta(x,t)) \, \ \mbox{a.e. in }\,\Omega\times(0,T)
\end{equation}
and thus the sequence $f_h$ defined by
\begin{equation*}
f_h(x,t)=|\sigmael(\btheta)-\sigmael(\btheta^h)|^2 (|\nabla\bphi_b|^2+|\nabla\bvarphi|^2)\ \mbox{ for a.e.} \ (x,t) \in \Omega\times[0,T],
\end{equation*}
converges to zero, as $h\to 0$.
Moreover, from $(A2)$ the following bound holds.
\begin{equation*}
f_h(x,t)\leq 4M^2(|\nabla\bphi_b|^2+|\nabla\bvarphi|^2),
\end{equation*}
where the function on the right-hand side is integrable. Thus, using the Lebesgue dominated convergence theorem, we find that
\begin{equation}
\int_0^T\int_\Omega f_h(x,t) \, dx \, dt \rightarrow 0 \quad  \mbox{as }\  h\rightarrow 0.\label{eqn_2}
\end{equation}
Combining (\ref{eqn_2}) with (\ref{eq27}) completes the proof of (\ref{lemma4.4a}).

We turn now to the proof of (\ref{lemma4.4b}). We subtract $(\ref{eq3_Ph})$ from the first equation in $(\ref{eq26})$ and test it with $\bv-\bv^h$. Thus, using $(A4)$, $(A7c)$, we get for $\varepsilon >0$ 
\begin{equation}\label{eq30}
\frac{1}{2}\rho C_p|\bv(T)-\bv^h(T)|^2 + (\delta-d_\mu\|\gamma\|^2-\varepsilon)\int_0^T\|\bv(s)-\bv^h(s)\|^2_H \, ds
\end{equation}
\begin{equation*}
+\frac{1}{2}\delta\|\bu(T)-\bu^h(T)\|^2_E \leq C\int_0^T\|\btheta(s)-\btheta^h_h(s) \|^2_E \, ds.
\end{equation*}
We estimate the right-hand side of $(\ref{eq30})$. Observe that
\begin{equation}\label{eq31}
\|\btheta^h_h(s)-\btheta(s) \|_\mathcal{H} \leq \|\btheta^h_h(s)-\btheta_h(s) \|_\mathcal{H} +\|\btheta_h(s)-\btheta(s) \|_\mathcal{H} =
\end{equation}
\begin{equation*}
\|(\btheta^h(s)-\btheta(s))_h \|_\mathcal{H}+ \|\btheta_h(s)-\btheta(s) \|_\mathcal{H}.
\end{equation*}
Analogously to $(\ref{nierow})$, we get 
\begin{equation}\label{eq32}
\|(\btheta^h(s)-\btheta(s))_h \|_\mathcal{H} \leq \|\btheta^h(s)-\btheta(s) \|_\mathcal{H}.
\end{equation}
Using $(\ref{eq22})$, $(\ref{eq31})$, $(\ref{eq32})$ and the continuity of translations in $L^2$ with respect to $h$, we obtain 
\begin{equation}\label{eq33}
\btheta^h_h \rightarrow \btheta \  \mbox{strongly in }\, \mathcal{H} \, \mbox{and pointwise a.e. in}\  H.
\end{equation}
It follows from assumption $(A8)$ that there exists $\varepsilon>0$ such that $\delta-d_\mu\|\gamma\|-\varepsilon>0$. Thus, by  (\ref{eq30}) and (\ref{eq33}) we obtain (\ref{lemma4.4b}), which 
completes the proof.
\end{proof}

The next lemma deals with the limit behaviour of the components of equation (\ref{eq1_Ph}).   
\begin{lemma}
Let assumptions $(A1)$--$(A9)$ hold and let $\theta^h$, $\varphi^h$, $v^h$ be solutions to Problem~$\cP_h$. For a subsequence, the following convergences hold,
\begin{align}
\label{eq35}\dot\btheta^h &\rightarrow \dot\btheta  \,\,\,\mbox{weakly in }\, \mathcal{U}',\\
\label{lemma4.5a} A(\btheta^h_h,\btheta^h) &\rightarrow A(\btheta,\btheta)  \,\,\,\mbox{weakly in }\mathcal{V}',\\[2mm]
\label{lemma4.5b} hF\btheta^h &\rightarrow 0   \,\,\,\mbox{strongly in }\, \mathcal{U}' ,\\[2mm]
\label{lemma4.5c} N(\btheta^h_h,\bvarphi^h_h) &\rightarrow N(\btheta,\bvarphi)  \,\,\,\mbox{weakly in }\, \mathcal{U}' ,\\[2mm]
\label{lemma4.5f} R(\bv^h_h)&\rightarrow R(\bv )  \,\,\,\mbox{weakly in }\, \mathcal{V}'.
\end{align}
\end{lemma}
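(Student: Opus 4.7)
The strategy is to exploit the a-priori bounds in Lemma~\ref{lemma_a_priori} to extract weakly convergent subsequences, and then use the strong convergences (\ref{eq22}), (\ref{lemma4.4a}), (\ref{lemma4.4b})---together with the delay argument already used in (\ref{eq31})--(\ref{eq33})---to pass to the limit inside the nonlinearities. First, (\ref{eq35}) is immediate: by (\ref{eq18}) the sequence $\dot\btheta^h$ is bounded in $\mathcal{U}'$, so along a subsequence it converges weakly to some limit which one identifies with $\dot\btheta$ by testing against smooth compactly supported functions and invoking (\ref{eq22}). For the penalty term (\ref{lemma4.5b}) one uses that $F$ is a $4$-Laplacian, hence $\|Fz\|_{U'}\le \|z\|_U^3$, so
\[
\|hF\btheta^h\|_{\mathcal{U}'}^{4/3} \le h^{4/3}\int_0^T \|\btheta^h(t)\|_U^4\, dt = h^{1/3}\Big(h\int_0^T \|\btheta^h(t)\|_U^4\, dt\Big) \le C\,h^{1/3} \to 0,
\]
by (\ref{eq14}).

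For (\ref{lemma4.5a}) I would first upgrade (\ref{eq22}) to $\btheta^h_h \to \btheta$ strongly in $\mathcal{H}$ by exactly the argument of (\ref{eq31})--(\ref{eq33}). Since $k_{ij}$ is bounded and Lipschitz by $(A3)$, the Lebesgue dominated convergence theorem yields $k_{ij}(\btheta^h_h) \to k_{ij}(\btheta)$ strongly in every $L^p(\Omega\times(0,T))$, $p<\infty$. Combining this with the weak convergence $\nabla\btheta^h \rightharpoonup \nabla\btheta$ in $L^2(\Omega\times(0,T);\mathbb{R}^d)$, which follows from (\ref{eq17}), and splitting
\[
k_{ij}(\btheta^h_h)\frac{\partial \btheta^h}{\partial x_i} = \big[k_{ij}(\btheta^h_h)-k_{ij}(\btheta)\big]\frac{\partial \btheta^h}{\partial x_i} + k_{ij}(\btheta)\frac{\partial \btheta^h}{\partial x_i},
\]
the first piece vanishes by a weak-strong argument (test with a smooth $\bw$, then use density of smooth functions in $\mathcal{V}$) and the second converges weakly to $k_{ij}(\btheta)\partial_i\btheta$, producing (\ref{lemma4.5a}).

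The nonlinear source term (\ref{lemma4.5c}) is the most delicate step. I would use the representation of Lemma~\ref{lemma4.3}, extended by Corollary~\ref{corollary1} to all $\bw\in V$, to rewrite $\dual{N(\btheta^h_h,\bvarphi^h_h),\bw}{V}$ as the sum of the seven integrals in (\ref{lemma4.3a}). The essential gain is that each term now contains at most one factor of $\nabla\bvarphi^h_h$, the troublesome quadratic piece $|\nabla\bvarphi^h_h|^2$ having been traded (via integration by parts implicit in the $\bvarphi_h^h\bw$ test function) for a product with $\nabla\bw$. The delayed convergence $\bvarphi^h_h \to \bvarphi$ strongly in $\mathcal{V}$ follows from (\ref{lemma4.4a}) again by (\ref{eq31})--(\ref{eq33}); combined with $\sigmael(\btheta^h_h)\to\sigmael(\btheta)$ strongly (Lipschitz continuity plus boundedness), the uniform bound (\ref{eq20}), and the compact trace embeddings on $\Gamma_N$ and $\Gamma_C$, each of the seven terms passes to its limit by routine weak-strong pairings.

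Finally, (\ref{lemma4.5f}) is handled by first deducing $\bv^h_h \to \bv$ strongly in $\mathcal{E}$ from (\ref{lemma4.4b}) via the same delay argument. The trace theorem then yields $(\bv^h_h)_\tau \to \bv_\tau$ strongly in $L^2(0,T;L^2(\Gamma_C;\mathbb{R}^d))$, hence pointwise a.e.\ along a subsequence. Since $\mu$ is continuous and uniformly bounded by $\bar\mu$ from $(A7)$, the Lebesgue dominated convergence theorem gives $\mu(|(\bv^h_h)_\tau|)\,|(\bv^h_h)_\tau| \to \mu(|\bv_\tau|)\,|\bv_\tau|$ in $L^2(\Gamma_C\times(0,T))$; pairing with $F\in L^\infty$ and the trace of any $\bw\in \mathcal{V}$ delivers (\ref{lemma4.5f}). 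I expect the main obstacle to be the bookkeeping required for (\ref{lemma4.5c}), because without the rewriting afforded by Lemma~\ref{lemma4.3} one would need strong convergence of $\nabla\bvarphi^h_h$ in $L^4$, which is not provided by Lemma~\ref{lemma_a_priori}.
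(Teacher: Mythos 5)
Your proposal is correct and follows essentially the same route as the paper: the a priori bounds of Lemma \ref{lemma_a_priori} give the weak limits, the delay argument of (\ref{eq31})--(\ref{eq32}) upgrades (\ref{eq22}), (\ref{lemma4.4a}), (\ref{lemma4.4b}) to strong convergence of $\btheta^h_h$, $\bvarphi^h_h$, $\bv^h_h$, the $h^{1/4}$ (equivalently $h^{1/3}$ in the $4/3$-power) estimate kills the $4$-Laplacian term, and the representation of Lemma \ref{lemma4.3}/Corollary \ref{corollary1} is used to remove the quadratic gradient term before passing to the limit in $N$. The only cosmetic differences are your splitting $[k_{ij}(\btheta^h_h)-k_{ij}(\btheta)]\partial_i\btheta^h + k_{ij}(\btheta)\partial_i\btheta^h$ in place of the paper's $k_{ij}(\btheta^h_h)(\partial_i\btheta^h-\partial_i\btheta) + [k_{ij}(\btheta^h_h)-k_{ij}(\btheta)]\partial_i\btheta$, and your use of (generalized) dominated convergence where the paper invokes the Vitali theorem.
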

\begin{proof}
It follows from (\ref{eq17}), (\ref{eq22}) and the uniqueness of the weak limit that
\begin{align}
\label{eq34}
\btheta^h &\rightarrow \btheta\,\,\, \mbox{weakly in }\, \mathcal{V}.
\end{align}
From (\ref{eq18}), (\ref{eq34}) and uniqueness of weak limit, we get (\ref{eq35}). Using (\ref{lemma4.4a})--(\ref{lemma4.4b}), after applying the same method as in 
(\ref{eq31})--(\ref{eq32}), we can show that
\begin{align}
\label{eq37}\bvarphi^h_h& \rightarrow \bvarphi \, \ \mbox{strongly in }\, \mathcal{V} \, \mbox{and pointwise a.e. in  }\, V,\\
\label{eq38}\bv^h_h& \rightarrow \bv \,\, \ \mbox{strongly in }\, \mathcal{E} \, \mbox{and pointwise a.e. in  }\, E.
\end{align}
Next, we prove $(\ref{lemma4.5a})$. To that end, we take $\bw \in \mathcal{V}$ and calculate
\begin{align}\label{eq40}
&\dual{A(\btheta^h_h,\btheta^h) - A(\btheta,\btheta),w}{\mathcal{V}}=
\dual{A(\btheta^h_h,\btheta^h) - A(\btheta^h_h,\btheta),w}{\mathcal{V}}\\
&\nonumber +\dual{A(\btheta^h_h,\btheta) - A(\btheta,\btheta),w}{\mathcal{V}}=
\int^T_0\int_{\Omega}k_{ij}(\btheta^h_h)\left(\frac{\partial \btheta^h }{\partial x_i}-\frac{\partial \btheta }{\partial x_i}\right) \frac{\partial \bw }{\partial x_j}\, dx\, dt\\
&\nonumber +
\int^T_0\int_{\Omega}\left(k_{ij}(\btheta^h_h)-k_{ij}(\btheta)\right)\frac{\partial \btheta }{\partial x_i} \frac{\partial \bw }{\partial x_j}\, dx\, dt.
\end{align}
From $(\ref{eq34})$ it follows that for a fixed $i$, where and everywhere below $i,j=1,\dots,d$,
\begin{equation*}
\frac{\partial \btheta^h }{\partial x_i}-\frac{\partial \btheta }{\partial x_i} \rightarrow 0\  \mbox{weakly in }\, L^2(\Omega\times(0,T)).
\end{equation*}
Since the $k_{ij}$ are bounded, then $k_{ij}(\btheta^h_h) \in L^\infty (\Omega\times(0,T))$ and in consequence 
\begin{equation*}
k_{ij}(\btheta^h_h)\left(\frac{\partial \btheta^h }{\partial x_i}-\frac{\partial \btheta }{\partial x_i}\right) \rightarrow 0\, \ \mbox{weakly in }\, L^2(\Omega\times(0,T)).
\end{equation*}
Thus,  the first integral in $(\ref{eq40})$ converges to $0$. It follows from $(\ref{eq33})$ and $(A3)$ that
\begin{equation*}
(k_{ij}(\btheta^h_h)-k_{ij}(\btheta))\frac{\partial \btheta }{\partial x_i} \rightarrow 0\ \mbox{pointwise a.e on  }\, \Omega\times(0,T).
\end{equation*}
Since we have, for $w\in \mathcal{V}$
\begin{equation*}
|k_{ij}(\btheta^h_h)-k_{ij}(\btheta)|\frac{\partial \btheta }{\partial x_i} \frac{\partial \bw }{\partial x_j} \leq 2K\left(\sum_{i=1}^d\frac{\partial \btheta }{\partial x_i}\right)\left(\sum_{j=1}^d\frac{\partial \bw }{\partial x_j}\right),
\end{equation*}
therefore,
\begin{equation*}
 \int_0^T\int_{\Omega}|k_{ij}(\btheta^h_h)-k_{ij}(\btheta)|\frac{\partial \btheta }{\partial x_i} \frac{\partial \bw }{\partial x_j} \, dx \, dt \leq
 \end{equation*}
 \begin{equation*}
  2K\left(\int_0^T\int_\Omega\left(\sum_{i=1}^d\frac{\partial \btheta }{\partial x_i}\right)^2\, dx \, dt\right)^{1/2}
  \left(\int_0^T\int_\Omega\left(\sum_{j=1}^d\frac{\partial \bw }{\partial x_j}\right)^2\, dx \, dt\right)^{1/2}\leq
\end{equation*} 
\begin{equation*}
\le 2Kd\|\btheta\|_\mathcal{V}\|\bw\|_\mathcal{V}.
\end{equation*}
By the Lebesgue dominated convergence theorem, the second integral in (\ref{eq40} converges to $0$, 
and this completes the proof of (\ref{lemma4.5a}).

Now, to prove (\ref{lemma4.5b}), we consider $\bw \in \mathcal{U}$ and estimate
\begin{equation}\label{eqn_3}
\left|\int_0^T\dual{hF(\btheta^h),\bw}{U}\,dt\right| \leq \int_0^T h \int_\Omega|\nabla\btheta^h|^3|\nabla\bw|\, dx \, dt \leq
\end{equation}
\begin{equation*}
\int_0^T h \left(\int_\Omega|\nabla\theta^h|^4\, dx\right)^{3/4}\left(\int_\Omega|\nabla\bw|^4\, dx\right)^{1/4}\, dt\leq 
h^{1/4}\left(h\int_0^T  \int_\Omega|\nabla\theta^h|^4\, dx\right)^{3/4}  \| \bw \|_\mathcal{U}.
\end{equation*}
Combining (\ref{eqn_3}) with $(\ref{eq14})$, we obtain $(\ref{lemma4.5b})$.

Next we choose $\bw\in\mathcal{U}$ and use Lemma~\ref{lemma4.3} and obtain
\begin{align}\label{eq41}
&\dual{N(\btheta^h_h,\bvarphi^h_h),\bw}{\mathcal{U}}=\int_0^T \int_\Omega \sigmael(\btheta_h^h)\nabla\bvarphi^h_h \cdot\nabla\bphi_b  \bw \, dx \, dt \\
&\nonumber+\int_0^T\int_\Omega \sigmael(\btheta_h^h)|\nabla\bphi_b|^2 \bw \, dx \, dt-  \int_0^T\int_\Omega \sigmael(\btheta_h^h)\bvarphi^h_h \nabla\bvarphi^h_h \cdot \nabla\bw \, dx \, dt\\
\nonumber&-\int_0^T\int_\Omega \sigmael(\btheta_h^h)\bvarphi^h_h\nabla\bphi_b \cdot \nabla\bw \, dx \, dt- \int_0^T\int_{\Gamma_N}H_N ({\bvarphi_h^h})^2 \bw \, d\Gamma \, dt\\
&\nonumber  
- \int_0^T\int_{\Gamma_N} H_N\bvarphi_h^h\bphi_b  \bw \, d\Gamma \, dt - \int_0^T\int_{\Gamma_C}H_C(F)({\bvarphi_h^h})^2 \bw \, d\Gamma \, dt\\ 
&\nonumber- \int_0^T\int_{\Gamma_C} H_C(F)\bvarphi_h^h\bphi_b  \bw \, d\Gamma \, dt.
\end{align}
From $(\ref{eq33})$, the Lipshitz continuity of $\sigma_{el}$, (\ref{eq37}) and the classical trace theorem and Corollary~\ref{corol_1}, we obtain the following pointwise convergences
\begin{align*}
&\sigmael(\btheta_h^h)\nabla\bvarphi^h_h \cdot\nabla\bphi_b  \bw \rightarrow \sigmael(\btheta)\nabla\bvarphi\cdot\nabla\bphi_b w \quad \text{for a.e.} \ (x,t)\in \Omega\times (0,T), \\[2mm]
&\sigmael(\btheta_h^h)|\nabla\bphi_b|^2  \bw \rightarrow 
\sigmael(\btheta)|\nabla\bphi_b|^2 \bw\quad \text{for a.e.} \ (x,t)\in \Omega\times (0,T), \\[2mm]
&\sigmael(\btheta_h^h)\bvarphi^h_h \nabla\bvarphi^h_h \cdot \nabla\bw \rightarrow \sigmael(\btheta)\bvarphi \nabla\bvarphi \cdot \nabla\bw \quad \text{for a.e.} \ (x,t)\in \Omega\times (0,T), \\[2mm]
&\sigmael(\btheta_h^h)\bvarphi^h_h\nabla\bphi_b \cdot \nabla\bw 
\rightarrow \sigmael(\btheta)\bvarphi\nabla\bphi_b \cdot \nabla\bw \quad \text{for a.e.} \ (x,t)\in \Omega\times (0,T), \\[2mm]
& H_N({\bvarphi_h^h})^2 \bw \rightarrow H_N{\bvarphi}^2 \bw \quad \text{for a.e.} \  (x,t)\in \Gamma_N\times (0,T),\\[2mm]
&H_N\bphi_b \bvarphi_h^h \bw \rightarrow \bphi_b \bvarphi \bw, \quad \text{for a.e.} \  (x,t)\in \Gamma_N\times (0,T),\\[2mm]
& H_C(F)(\bvarphi_h^h)^2 \bw \rightarrow H_C(F){\bvarphi}^2 \bw \quad \text{for a.e.} \  (x,t)\in \Gamma_C\times (0,T),\\[2mm]
& H_C(F)\bphi_b \bvarphi_h^h \bw \rightarrow H_C(F)\bphi_b \bvarphi \bw \quad \text{for a.e.} \  (x,t)\in \Gamma_C\times (0,T).
\end{align*}
Moreover, we have
\begin{equation*}
\sigmael(\btheta_h^h)|\nabla\bphi_b|^2  \bw \leq M|\nabla\bphi_b|^2  \bw,
\end{equation*}
where the function on the right-hand side is integrable. This, together with the pointwise convergence and the Lebesgue dominated convergence theorem imply that the second integral in (\ref{eq41}) converges 
to $\int_0^T \int_\Omega\sigmael(\btheta)|\nabla\bphi_b|^2 \bw$.
For all the other integrals, we use Theorem~\ref{VITALI}. To this end, we will show that the corresponding integrals are uniformly integrable. Indeed, from Remark \ref{remark_varphi_h^h}, Lemma~$\ref{lemma4.1}$, Lemma~$\ref{lemma4.2}$ and Corollary~\ref{corol_1}, we obtain the following estimates
\begin{align*}
&\int_0^T \int_\Omega \sigmael(\btheta_h^h)\nabla\bvarphi^h_h\cdot\nabla\bphi_b  \bw \, dx \, dt \leq C\|\bvarphi^h_h\|_\mathcal{V} \| \bphi_b\|_{W^{1,4}(\Omega)} \|\bw\|_{L^4(0,T;L^4(\Omega))}\leq C\|w\|_{\cal U},\\
&\int_0^T\int_\Omega \sigmael(\btheta_h^h)\bvarphi^h_h \nabla\bvarphi^h_h \cdot \nabla\bw \, dx \, dt\leq C \int_0^T\left(\int_\Omega|\bvarphi^h_h|^2|\nabla\bvarphi^h_h|^2\, dx\right)^{1/2}\|\bw(t)\|_V\, dt \le C\|w\|_{\mathcal{V}},\\
& \int_0^T\int_\Omega \sigmael(\btheta_h^h)\bvarphi^h_h\nabla\bphi_b \cdot \nabla\bw \, dx \, dt\leq C\int_0^T\|\varphi_h^h\|_{L^2(\Omega)}\|\bphi_b\|_{W^{1,4}(\Omega)}\|w\|_U\,dt\\
 &\,\qquad \leq C\int_0^T\|\varphi_h^h\|_{V}\|\bphi_b\|_{W^{1,4}(\Omega)}\|w\|_U\,dt\leq
 C\| \bphi_b\|_{U} \|\bw\|_\mathcal{U},\\
&\int_0^T\int_{\Gamma_N}H_N({\bvarphi_h^h})^2 \bw \, d\Gamma \, dt  \leq C \int_0^T \|\bvarphi^h_h \|^2_{L^4(\Gamma_N)} \|\bw\|_{L^2(\Gamma_N)} \, dt \\
&\,\qquad \leq C\int_0^T \|\bvarphi^h_h \|^2_{V} \|\bw\|_{V} \, dt\leq C\|w\|_{\cal V},\\
&\int_0^T\int_{\Gamma_N} H_N\bphi_b \bvarphi_h^h \bw \, d\Gamma \, dt  \leq C\int^T_0 \|\bvarphi^h_h\|_{V} \| \bphi_b\|_{W^{1,4}(\Omega)} \|\bw\|_V \, dt \\
&\,\qquad \leq C\|\bvarphi^h_h\|_\mathcal{V} \| \bphi\|_{W^{1,4}(\Omega)} \|\bw\|_\mathcal{V}\leq C\|\bw\|_\mathcal{V}.
\end{align*}
The last two integrals in $(\ref{eq41})$ are estimated in a similar way. This completes the proof of (\ref{lemma4.5c}).

Finally, we turn to prove (\ref{lemma4.5f}) and note that $(\ref{eq38})$ and the trace theorem imply
that
\begin{equation*}
\bv^h_{h\tau}(t,x)\rightarrow \bv_{\tau}(t,x)\, \quad \mbox{for a.e.}\ (x,t)\in \Gamma_C\times(0,T).
\end{equation*}
Thus, for all $v\in\mathcal{V}$
\begin{equation*}
\mu(\|\bv^h_{h\tau}\|)F\|\bv^h_{h\tau}\|\bw \rightarrow 
\mu(\|\bv_{\tau}\|)F\|\bv_{\tau}\|\bw \quad \mbox{for a.e.} \ (x,t)\in \Gamma_C\times (0,T).
\end{equation*}
In order to pass to the limit with $\dual{R(\bv^h_h),\bw}{\mathcal{V}}$ we use again Theorem~\ref{VITALI}. It is enough that the following estimate holds true,
\begin{equation*}
\int_0^T\int_{\Gamma_C} \mu(\|\bv^h_{h\tau}\|)F\|\bv^h_{h\tau}\|\bw \, d\Gamma \, dt\leq
\bar{\mu}\bar{F}\int_0^T\int_{\Gamma_C}\|\bv^h_{h\tau} \| \bw \, d\Gamma \, dt\leq \bar{\mu}\bar{F} \|\bv^h_{h} \|_\mathcal{E}\|\bw\|_\mathcal{V} \leq C.
\end{equation*} 
This completes the proof of the Lemma.
\end{proof}

We are now in a position to prove our main theorem. 

\noindent
\begin{proof}  Theorem \ref{mainthm}.
Let $(\btheta^h,\bvarphi^h,\bv^h)$ be a solution of problem $\cP_{h}$. Let $\btheta$ be defined by (\ref{eq22}), $\bvarphi$ be a solution of ($\ref{eq25}$) and $\bv$ be a solution of ($\ref{eq26}$). 
It remains to show that $\btheta$ satisfies ($\ref{eq1_PV}$) with initial condition $\btheta(0)=\btheta_0$. To this end, we pass to the limit with ($\ref{eq1_Ph}$) in $\mathcal{U}'$, using Lemma 4.5  and the fact, that operators $P$ and $G$ are linear and continuous. In order to establish the initial condition on  $\btheta$ we use  (\ref{eq17}), (\ref{eq18}) and Theorem~\ref{Simon}. Therefore, we find that for a subsequence $\btheta^h\rightarrow\btheta$ strongly in $C(0,T;\mathcal{U}')$. It follows that $\btheta^h\rightarrow\btheta$ also weakly in $C(0,T;\mathcal{U}')$ and, in particular, $\btheta^h(0)\rightarrow\btheta(0)$ weakly in $\mathcal{U}'$. By the uniqueness of the weak limit we have $\btheta(0)=\btheta_0$, which completes the proof.
\end{proof}

\begin{remark}
We note that the uniqueness of the solution to Problem~$\cP_V$ remains an open question. In view
of the nonlinearities, it is unlikely, except in very special cases.
\end{remark}

\begin{appendix}
\section{Background material}
\label{not_pre}
We recall definitions, notations and a theorem used in the paper.

We denote by $|\cdot|$ the absolute value in $\mathbb{R}$ and the euclidean 
norm in $\mathbb{R}^d$, $d=2,3$. As usual, ${\mathcal L}(X, Y)$ denotes the space of 
linear continuous mappings from $X$ to $Y$. 
Given a reflexive Banach space $Y$, we denote by
${\langle \cdot, \cdot \rangle}_{Y' \times Y}$ 
the duality pairing between the dual space $Y'$ and $Y$.

The generalized directional derivative and the generalized gradient of Clarke 
for a locally Lipschitz function $\varphi \colon X \to \mathbb{R}$, where $X$ 
is a Banach space (see \cite{CLARKE}) are next.
\begin{definition}
The generalized directional derivative of $\varphi$ at $x \in X$ in the
direction $v \in X$, denoted by $\varphi^{0}(x; v)$, is defined by
\[
\varphi^{0}(x; v) =
\limsup_{y \to x, \ t\downarrow 0}
\frac{\varphi(y+tv) - \varphi(y)}{t}.
\]
\end{definition}

\begin{definition}
The generalized gradient of $\varphi$ at $x$, denoted by
$\partial \varphi(x)$, is a subset of a dual space $X'$ given by
$\partial \varphi(x) = \{ \zeta \in X' \mid \varphi^{0}(x; v) \ge 
{\langle \zeta, v \rangle}_{X' \times X}$  
for all $v \in X \}$. 
\end{definition}

We also give the definition of pseudomonotone operator.

\begin{definition}  Let $X$ be a real Banach space. A single valued operator $A\colon X\to X'$ is called pseudomonotone, if
  for any sequence $\{v_n\}_{n=1}^\infty\subset X$ such that $v_n\to v$ weakly in $X$ and $\limsupn\langle Av_n, v_n-v\rangle_{X'\times X}\leq 0$ we have
  $\langle Av, v-y\rangle_{X'\times X} \leq \liminfn\langle Av_n, v_n-y\rangle_{X'\times X}$ 
  for every $y\in X$.
\end{definition}

Now we recall important results concerning properties of pseudomonotone operators. For their proofs, 
we refer to  Proposition 27.6 in \cite{ZEIDLER} and Theorem 8.9 in \cite{ROUBICEK}, respectively.
\begin{proposition}\label{prop:sum_pseudo}
  Assume that  $X$ is a reflexive Banach space and
  $A_1, A_2\colon X\to 2^{X'}$ are pseudomonotone operators.
  Then  $A_1+A_2\colon X\to 2^{X'}$
  is a pseudomonotone operator.
\end{proposition}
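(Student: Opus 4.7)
The plan is to verify the pseudomonotonicity condition for $A_1+A_2$ directly from the definition. Fix a sequence $\{v_n\}\subset X$ with $v_n\to v$ weakly in $X$ and
\[
\limsupn \dual{(A_1+A_2)v_n,\,v_n-v}{X}\le 0.
\]
The goal is to show that for every $y\in X$,
\[
\dual{(A_1+A_2)v,\,v-y}{X}\le \liminfn \dual{(A_1+A_2)v_n,\,v_n-y}{X}.
\]

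The first key step I would do is a reduction: showing that the limsup condition separates, i.e.\ that $\limsupn\dual{A_i v_n,\,v_n-v}{X}\le 0$ for each $i=1,2$. I would argue by contradiction. Suppose, say, $\limsupn\dual{A_1 v_n,\,v_n-v}{X}>0$. Extract a subsequence (still denoted $v_n$) along which this limit equals some $L>0$. Then, from the hypothesis on the sum,
\[
\limsupn \dual{A_2 v_n,\,v_n-v}{X}\le -L<0.
\]
But $A_2$ is pseudomonotone, and since its own limsup along $\{v_n\}$ is now $\le 0$, its defining property applied with the test element $y=v$ gives
\[
0=\dual{A_2 v,\,v-v}{X}\le \liminfn \dual{A_2 v_n,\,v_n-v}{X},
\]
which contradicts the strictly negative upper bound just derived. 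Hence the separate limsup estimates hold, and symmetrically for $A_1$.

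The second step is straightforward. With both $\limsupn\dual{A_i v_n,\,v_n-v}{X}\le 0$ established, I apply the pseudomonotonicity of each $A_i$ individually: for every $y\in X$,
\[
\dual{A_i v,\,v-y}{X}\le \liminfn \dual{A_i v_n,\,v_n-y}{X},\qquad i=1,2.
\]
Summing these two inequalities and invoking the elementary fact that $\liminfn a_n+\liminfn b_n\le \liminfn(a_n+b_n)$ for real sequences yields precisely the desired inequality for $A_1+A_2$.

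The only real subtlety is the first step: one must be careful to exploit that the pseudomonotonicity definition's conclusion, specialized to $y=v$, forces the nonnegativity of $\liminf \dual{A_i v_n, v_n-v}{X}$ whenever the corresponding limsup is nonpositive. This is what prevents one summand from compensating a positive limsup of the other. Once this reduction is in hand, the rest is bookkeeping with weak lower semicontinuity of $\liminf$.
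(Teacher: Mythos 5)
Your proof is correct. The paper does not actually prove this proposition itself but refers to Proposition 27.6 in Zeidler, and your argument is exactly that standard proof: the contradiction step (if $\limsup\langle A_1v_n,v_n-v\rangle>0$ along a subsequence, then $\limsup\langle A_2v_n,v_n-v\rangle<0$ there, contradicting the conclusion of $A_2$'s pseudomonotonicity tested at $y=v$), followed by applying each operator's pseudomonotonicity and the superadditivity of $\liminf$. The only cosmetic mismatch is that the proposition is stated for multivalued operators $X\to 2^{X'}$ while the paper's definition (and your argument) is for single-valued ones; within the single-valued setting the paper actually uses, your proof is complete.
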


\begin{theorem}\label{ROUB}
Let $X$ be a reflexive Banach space, and $H$ be a Hilbert space, such that $X\subset H$ 
is compact. Let $A \colon X \to X'$ be
pseudomonotone and coercive, satisfies the growth condition of the type: there exists an increasing function $\mathcal{C}\colon \mathbb{R} \to \mathbb{R}$ increasing such that for all $v\in X$, with $\frac{1}{p}+\frac{1}{p'} =1$, $p>1$
\[
\|A(v)\|_{X'}\le \mathcal{C}(v)(1+ \|v\|_X^{p/p'}).
\]
Then, for $f\in L^{p'}(0,T;X')$, $u_0\in H$ and finite $T>0$ the following problem has at least one solution. Find $u\in W^{1,p}(0,T;X)$ with $\dot{u}\in W^{1,p'}(0,T;X')$ such that 
\begin{align*}
&\dot{u}(t) + A(u(t)) = f(t) \quad \text{for a.e.} \ t\in(0,T), \\
& u(0)=u_0.
\end{align*}
\end{theorem}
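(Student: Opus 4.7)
My plan is to use the classical Galerkin approximation scheme combined with the pseudomonotonicity hypothesis on $A$. Since $X$ is reflexive and embeds compactly into $H$, it is separable, so fix a countable linearly independent dense family $\{e_i\}_{i\ge 1}\subset X$ and set $X_n=\mathrm{span}\{e_1,\dots,e_n\}$. For each $n$ I would seek $u_n(t)=\sum_{i=1}^n c_i^n(t)\,e_i$ satisfying the finite-dimensional projected equation
\[
(\dot u_n(t),e_i)_H+\dual{A(u_n(t)),e_i}{X}=\dual{f(t),e_i}{X}\qquad (i=1,\dots,n),
\]
with $u_n(0)$ chosen as an $H$-projection of $u_0$ onto $X_n$, so that $u_n(0)\to u_0$ in $H$. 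Local existence for the resulting Carathéodory ODE system in the coefficients $c_i^n$ follows from the growth condition and the demicontinuity of finite-dimensional pseudomonotone maps; global existence on $[0,T]$ is obtained once a priori bounds are in hand. Testing with $u_n(t)$ and integrating on $(0,t)$, then invoking coercivity of $A$, Young's inequality and Gronwall, yields $\|u_n\|_{L^\infty(0,T;H)}+\|u_n\|_{L^p(0,T;X)}\le C$, whence the growth condition bounds $A(u_n)$ in $L^{p'}(0,T;X')$ and, reading off the Galerkin equation, also $\dot u_n$ in that space.

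Passing to a subsequence I would obtain $u_n\rightharpoonup u$ weakly in $L^p(0,T;X)\cap W^{1,p'}(0,T;X')$, $A(u_n)\rightharpoonup\chi$ weakly in $L^{p'}(0,T;X')$, and—thanks to the compact embedding $X\subset H$ and the Aubin--Lions lemma—$u_n\to u$ strongly in $L^p(0,T;H)$ with $u_n(t)\to u(t)$ in $H$ for a.e.\ $t$. Passing to the limit in the Galerkin equation yields $\dot u+\chi=f$ in $L^{p'}(0,T;X')$, and the continuous inclusion $\{w\in L^p(0,T;X):\dot w\in L^{p'}(0,T;X')\}\hookrightarrow C([0,T];H)$ together with the strong convergence $u_n(0)\to u_0$ identifies $u(0)=u_0$. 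It remains to identify $\chi=A(u)$.

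The main obstacle, and the step where pseudomonotonicity is used essentially, is identifying $\chi=A(u)$. The plan is to establish
\[
\limsup_{n\to\infty}\int_0^T\dual{A(u_n(t)),u_n(t)-u(t)}{X}\,dt\le 0,
\]
by testing the Galerkin equation against $u_n$, integrating, and applying the chain rule $\int_0^T\dual{\dot u_n,u_n}{X}\,dt=\tfrac12(|u_n(T)|_H^2-|u_n(0)|_H^2)$, then exploiting the weak lower semicontinuity $|u(T)|_H\le\liminf|u_n(T)|_H$, the strong convergence $u_n(0)\to u_0$, and the energy identity obtained from the limit equation tested with $u$. From this integrated $\limsup$-condition, pseudomonotonicity gives $\dual{\chi,u-y}{X'}\le\liminf \int_0^T\dual{A(u_n),u_n-y}{X}\,dt$ for every admissible $y$, from which $\chi=A(u)$ follows by standard arguments (taking $y=u\pm\lambda w$ and letting $\lambda\downarrow 0$). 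The delicate point is that plain pseudomonotonicity is stated pointwise in $X$, while here it must be used in the evolutionary, integrated setting; this is usually handled either by strengthening to L-pseudomonotonicity / the type $(M)$ property, or by a measurable-selection/density argument to transfer the integrated $\limsup$ to a pointwise statement a.e.\ in $t$—this is the technical heart of the proof.
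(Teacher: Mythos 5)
This theorem is quoted background material: the paper offers no proof of it and simply refers the reader to Theorem~8.9 in Roub\'{\i}\v{c}ek's monograph (where the result is obtained by an implicit time-discretization, i.e.\ Rothe, argument). So there is no in-paper proof to compare against, and your Galerkin scheme is a genuinely different, though equally classical, route. Your outline of that route is correct in its main lines: separability of $X$, solvability of the projected ODE system (a bounded pseudomonotone map is demicontinuous, hence continuous in finite dimensions), the energy estimate via coercivity, Young and Gronwall, the resulting bounds on $u_n$ in $L^p(0,T;X)\cap L^\infty(0,T;H)$ and on $A(u_n)$, $\dot u_n$ in $L^{p'}(0,T;X')$, Aubin--Lions, and the $\limsup$ trick combined with a Minty-type argument. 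One small caution: for Gronwall to close, ``coercive'' must be read quantitatively (a semicoercivity of the form $\langle A(v),v\rangle\ge c\|v\|_X^p-C(1+\|v\|_H^2)$), and the growth constant should depend on $\|v\|_H$ rather than on $v$ itself; these are imprecisions of the statement as transcribed, not of your argument.

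The genuine gap is exactly where you place it, and it is not optional: the identification $\chi=A(u)$ does \emph{not} follow from the integrated inequality $\limsup_n\int_0^T\langle A(u_n(t)),u_n(t)-u(t)\rangle\,dt\le 0$ together with pointwise pseudomonotonicity of $A\colon X\to X'$ alone, because pseudomonotonicity is not in general inherited by the Nemytskii realization on $L^p(0,T;X)$. The missing ingredient is the lemma (Lemma~8.8 in Roub\'{\i}\v{c}ek, or the Berkovits--Mustonen/Papageorgiou result) that under the compact embedding $X\subset H$ and the uniform bound on $\dot u_n$ the realization is pseudomonotone \emph{with respect to} $W=\{w\in L^p(0,T;X):\dot w\in L^{p'}(0,T;X')\}$. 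Its proof requires (i) upgrading $u_n\to u$ to convergence a.e.\ in $t$ in $H$, (ii) a uniform integrable lower bound on $t\mapsto\langle A(u_n(t)),u_n(t)-u(t)\rangle$, extracted from semicoercivity and the growth condition, so that Fatou's lemma localizes the integrated $\limsup$ to $\limsup_n\langle A(u_n(t)),u_n(t)-u(t)\rangle\le 0$ for a.e.\ $t$ (along a subsequence), and (iii) a second application of Fatou to reassemble the pointwise conclusion of pseudomonotonicity into the integrated $\liminf$ inequality. You name the right tools but do not execute this step; as written, the proposal is therefore an accurate roadmap whose most delicate leg is left unbuilt.
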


In what follows, we deal with a class of dynamic subdifferential inclusions and 
recall the theorem providing its unique solvability.
First we introduce the following notations. 
Assume that  $V_0$ and $Z_0$ are
separable and reflexive Banach spaces and $H_0$ is a~separable Hilbert
space such that
\begin{equation*}
V_0 \subset Z_0 \subset H_0 \subset Z'_0 \subset V'_0
\end{equation*}
\noindent
with continuous embeddings.  Assume also that the embedding $V_0 \subset Z_0$ is compact
and denote by $c_e > 0$ the embedding constant of $V_0$ into $Z_0$.
Given $0 < T < \infty$, we introduce the spaces
\begin{eqnarray*}
&&{\mathcal V_0} =
L^2(0,T; V_0),\quad{\mathcal V_0}' = L^2(0,T; V_0'), \quad
{\mathcal W_0} = \{\, v \in {\mathcal V_0}\, \mid\, \dot{v} \in {\mathcal V_0}'\, \}.
\end{eqnarray*}

Let $X_0$ be a given separable reflexive Banach
spaces, $A \colon (0, T) \times V_0 \to V_0'$ a nonlinear
operator,  and let $B \colon V_0 \to V_0'$ and $M \colon Z_0 \to X_0$  be linear
operators. Also, denote by $\partial J$  the Clarke
generalized subdifferential of a prescribed functional $J \colon
(0, T) \times X_0 \to \mathbb{R}$, with respect to its
second variable, and let $M^*$ be the adjoint operator to $M$. 
With these data we consider the following problem.\\

\noindent {\bf  Problem}  ${\cal P}_0$. {\it
Find $u \in {\mathcal V_0}$ such that $\dot{u} \in {\mathcal W_0}$
and
\begin{eqnarray*}
& \ddot{u}(t) + A(t, \dot{u}(t)) + B u(t) + M^* \partial J (t, M \dot{u}(t)) \ni
f(t) \quad  \mbox{\rm a.e.} \ \ t \in (0,T), \\[2mm] 
&u(0) = u_0, \qquad u'(0) = v_0.
\end{eqnarray*}
}
We now formulate the following  hypotheses. 

\noindent
$ {\underline{H( A)}}:$ the operator $A \colon (0, T) \times V_0 \to V_0'$  satisfies the properties:

\lista{
\item[(a)]
$\displaystyle A(\cdot, v)$ is measurable on $(0, T)$ for all $v \in V_0$; 
\smallskip
\item[(b)]
$\| A(t, v) \|_{V_0'} \le {a}_0(t) + {a}_1 \| v \|_{V_0}$ 
for all $v \in V$, a.e. $t \in (0,T)$ with ${a}_0 \in L^2(0,T)$, 
${a}_0 \ge 0$ and ${a}_1 >0$; 
\smallskip
\item[(c)]
$\langle A(t, v), v \rangle_{V_0' \times V_0} \ge \alpha \| v \|_{V_0}^2$ 
for all $v \in V_0$, a.e. $t \in (0,T)$; 
\smallskip
\item[(d)]
$\displaystyle A(t, \cdot)$ is pseudomonotone for a.e. $t \in (0,T)$.
}

\noindent
$\underline {H(B)}:$ $B \in {\mathcal L}(V_0, V_0') \ \ \mbox{is symmetric and monotone}.$

\noindent
$\underline {H(M)}:$ $\ M \in {\mathcal L}(Z_0, X_0)$.

\noindent
${\underline {H(J)}}$ $J \colon (0, T) \times X_0 \to \mathbb{R} \ \mbox{is such that}$

\lista{
\item[(a)]
$J(\cdot, v) \ \mbox{is measurable on} \ (0, T) \ \mbox{for all} \ v \in X_0$; \smallskip
\item[(b)]
$J(t, \cdot) \ \mbox{is locally Lipschitz on} \ X_0 \ \mbox{for a.e.} \ t \in (0, T)$; \smallskip
\item[(c)] 
$\| \partial J(t, v) \|_{X'_0} \le  c_0 (t) + c_1 \| v \|_{X_0} \ \mbox{for all} \ v \in X_0, \ \mbox{a.e.} \ t \in (0, T) 
\ \mbox{with} \ c_0 \in L^2(0, T), c_0, c_1~\ge~0;$
}

\noindent
${\underline{H_0}}$ 
 $f \in {\mathcal V_0}', \ u_0 \in V_0, \ v_0 \in H_0.$

\begin{theorem}\label{MOSTheorem}
Let the assumptions  $H(A)$, $H(B)$, $H(M)$, $H(J)$, $H_0$ hold. Moreover, assume that
\begin{equation}\label{E2.NEWCOND2}
\alpha > 2 \sqrt{3} \, c_1 \, c_e^2 \, \| M \|^2_{{\mathcal L}(Z_0, X_0)}.
\end{equation}

Then Problem ${\cal P}_0$ has a solution.

\end{theorem}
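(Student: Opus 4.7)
The strategy is to reduce Problem $\mathcal{P}_0$ to a first-order evolution inclusion in the velocity $v = \dot{u}$. Writing $u(t) = u_0 + \int_0^t v(s)\,ds$, the second-order inclusion is equivalent to finding $v \in \mathcal{W}_0$ with $v(0) = v_0$ satisfying
$$
\dot v(t) + A(t, v(t)) + B u_0 + B\!\int_0^t v(s)\,ds + M^*\partial J(t, Mv(t)) \ni f(t),\quad\text{a.e. } t \in (0,T).
$$
I would then apply a surjectivity theorem for operators of the form $L + \mathcal{N}\colon D(L)\subset\mathcal{V}_0 \to 2^{\mathcal{V}_0'}$, where $L = d/dt$ with domain $\{v \in \mathcal{W}_0 : v(0) = v_0\}$ is maximal monotone (after a translation by an extension of $v_0$), and $\mathcal{N}$ is the multivalued Nemytskii operator collecting the remaining terms on the left-hand side. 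Once such a $v$ is produced, setting $u(t) = u_0 + \int_0^t v(s)\,ds$ recovers $u \in \mathcal{V}_0$ with $\dot u = v \in \mathcal{W}_0$ and the correct initial data, and yields a solution of Problem $\mathcal{P}_0$.

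The core technical work is to verify that $\mathcal{N}$ is bounded, coercive, and pseudomonotone with nonempty weakly-$*$ compact convex values in $\mathcal{V}_0'$. Boundedness follows immediately from H(A)(b), H(B), and H(J)(c). Pseudomonotonicity of the Nemytskii operator induced by $A(t,\cdot)$ follows from H(A)(a),(d) by a standard measurable-selection argument. The Volterra term $v \mapsto B\int_0^\cdot v(s)\,ds$ is completely continuous from $\mathcal{V}_0$ to $\mathcal{V}_0'$, since integration in time maps $L^2(0,T; V_0')$ compactly into $C([0,T]; V_0')$; it is therefore a pseudomonotone perturbation. The most delicate piece is the composition $v \mapsto M^*\partial J(t, Mv(t))$: here I would use the compact embedding $V_0 \subset Z_0$ together with $M \in \mathcal{L}(Z_0, X_0)$ to argue that $v \mapsto Mv$ is compact from $V_0$ into $X_0$, and combine this compactness with the upper semicontinuity from the strong topology of $X_0$ into the weak-$*$ topology of $X_0'$ of Clarke's gradient to conclude that $v \mapsto M^*\partial J(t, Mv(t))$ is a pseudomonotone multifunction.

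For coercivity, H(A)(c), H(J)(c), and the estimate $\|Mv\|_{X_0} \leq c_e \|M\|_{\mathcal{L}(Z_0, X_0)} \|v\|_{V_0}$ give, for any selection $\zeta(t) \in \partial J(t, Mv(t))$,
$$
\langle A(t, v) + M^*\zeta, v\rangle_{V_0'\times V_0} \geq \bigl(\alpha - c_1 c_e^2 \|M\|_{\mathcal{L}(Z_0,X_0)}^2\bigr)\|v\|_{V_0}^2 - c_0(t)\|v\|_{V_0} - \text{(l.o.t.)}.
$$
Integrating in time against $v$ and treating the Volterra contribution $B\int_0^t v(s)\,ds$ via integration by parts (so that the monotone symmetric $B$ from H(B) produces an energy-type term $\tfrac{1}{2}\langle B U(T), U(T)\rangle \geq 0$ for $U(t) = \int_0^t v$) leaves cross-terms that, when handled by Young's inequality with optimal weights, are exactly what produces the numerical factor $2\sqrt{3}$ in (\ref{E2.NEWCOND2}). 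The hard part of the argument is precisely this sharp coercivity estimate: the Volterra term is not sign-definite after integration in time, and its interaction with the quadratic subdifferential lower bound $-c_1\|Mv\|_{X_0}^2$ has to be balanced very carefully against the ellipticity $\alpha$ of $A$. Under (\ref{E2.NEWCOND2}) the resulting quadratic form on $\mathcal{V}_0$ is strictly positive, yielding coercivity. With pseudomonotonicity, boundedness, and coercivity of $\mathcal{N}$ in hand, a standard surjectivity theorem for pseudomonotone perturbations of maximal monotone operators (see, e.g., \cite{MOSBOOK}) produces the required $v \in \mathcal{W}_0$, completing the proof.
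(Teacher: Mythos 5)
This theorem is not proved in the paper at all: the authors state it and refer to Theorem 5.13 of \cite{MOSBOOK} for the proof. Your architecture --- rewriting the second-order inclusion as a first-order inclusion for the velocity $v=\dot u$ with the Volterra term $B\bigl(u_0+\int_0^t v\bigr)$, and then applying a surjectivity theorem for $L+\mathcal{N}$ with $L=d/dt$ on $\{v\in\mathcal{W}_0: v(0)=v_0\}$ and $\mathcal{N}$ a bounded, coercive, $L$-pseudomonotone multivalued Nemytskii operator --- is exactly the strategy of that cited proof, so at the level of the overall route you are reconstructing the right argument.

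There is, however, a genuine gap at the one place where the statement is quantitative. Your own coercivity computation, using $H(A)(c)$, $H(J)(c)$ and $\|Mv\|_{X_0}\le c_e\|M\|_{\mathcal{L}(Z_0,X_0)}\|v\|_{V_0}$, yields the condition $\alpha>c_1c_e^2\|M\|^2_{\mathcal{L}(Z_0,X_0)}$, not (\ref{E2.NEWCOND2}); you then assert that the factor $2\sqrt{3}$ ``is exactly what is produced'' by balancing the Volterra term against the subdifferential bound, without deriving it. That attribution is doubtful: by $H(B)$ the operator $B$ is symmetric and monotone, so after integration in time the Volterra contribution is $\tfrac12\langle B(u(t)-u_0),u(t)-u_0\rangle+\langle Bu_0,u(t)-u_0\rangle\ge -C\|u_0\|_{V_0}\bigl(1+\int_0^t\|v\|_{V_0}\,ds\bigr)$, i.e.\ it is sign-controlled and costs only lower-order terms; it does not require a sharp balance against $\alpha$. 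In the source the constant $2\sqrt{3}$ enters through the a priori estimates for the velocity inclusion, where the growth bound $\|\partial J(t,w)\|_{X_0'}\le c_0(t)+c_1\|w\|_{X_0}$ is squared (producing a factor $3$ from a three-term elementary inequality) and combined with Young's inequality against the coercivity of $A$. Since the whole content of hypothesis (\ref{E2.NEWCOND2}) is this specific constant, a proof that neither derives it nor locates its true origin has not established the theorem. A secondary, smaller gap: pointwise pseudomonotonicity of $v\mapsto M^*\partial J(t,Mv)$ on $V_0$ is not the property you need; you need generalized pseudomonotonicity with respect to $D(L)$ of its realization on $\mathcal{V}_0$, which requires convex weakly compact values, a measurable selection argument, and an Aubin--Lions step exploiting the boundedness of $\dot v_n$ in $\mathcal{V}_0'$ to get $Mv_n\to Mv$ strongly in $L^2(0,T;X_0)$ before invoking the closedness of the graph of the Clarke subdifferential.
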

For the proof of Theorem \ref{MOSTheorem}, we refer to Theorem 5.13 in \cite{MOSBOOK}.

We recall several embedding theorems and Corollary 4 found in \cite{SIMON}.

\begin{theorem}\label{Adams_1}
Let $\Omega$ be a domain in $\mathbb{R}^n$  with Lipschitz boundary,  and suppose that $mp < n$ and
$p \le q \le p^* = \frac{(n - 1 ) p}{( n - mp)}$. Then, the following embedding is continuous.
\[
W^{m,p}(\Omega) \to L^q(\partial \Omega).
\]
If $mp=n$, the above embedding is continuous for $p\leq q\leq \infty$.
\end{theorem}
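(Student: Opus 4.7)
The plan is to deduce this Sobolev trace embedding via a two-step procedure: first a volume Sobolev embedding that drops the order of differentiability to one, and then a trace inequality from $W^{1,r}$ to an $L^s$ space on the boundary. Concretely, I would apply the standard Gagliardo--Nirenberg--Sobolev embedding $W^{m,p}(\Omega)\hookrightarrow W^{1,r}(\Omega)$ with $\frac{1}{r}=\frac{1}{p}-\frac{m-1}{n}$, which is valid precisely because $mp<n$. This reduces the problem to proving the trace inequality $W^{1,r}(\Omega)\hookrightarrow L^{s}(\partial\Omega)$, where $s=\frac{(n-1)r}{n-r}$. A direct computation gives $s=\frac{(n-1)p}{n-mp}=p^{*}$, which produces the sharp critical exponent. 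For $p\le q<p^{*}$ one interpolates with $L^{p}(\partial\Omega)$ (using boundedness of $\Omega$, so that $\partial\Omega$ has finite surface measure) and H\"older's inequality to recover the full scale.

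The main subroutine is therefore the $W^{1,r}(\Omega)\to L^{s}(\partial\Omega)$ trace theorem on a Lipschitz domain. I would prove this by the standard local-flattening argument: cover $\partial\Omega$ by a finite family of bi-Lipschitz charts, choose a subordinate partition of unity, and reduce matters to a half-space $\mathbb{R}^{n}_{+}$. On the half-space, for a smooth function $u\in C^{\infty}_{c}(\overline{\mathbb{R}^{n}_{+}})$, I would use the pointwise identity
\[
|u(x',0)|^{s}=-s\int_{0}^{\infty}|u(x',t)|^{s-1}\,\mathrm{sgn}(u)\,\partial_{n}u(x',t)\,dt,
\]
integrate over $x'\in\mathbb{R}^{n-1}$, and apply H\"older's inequality combined with the tangential Sobolev embedding to match the exponents; density of smooth functions then extends the estimate to all of $W^{1,r}$. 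For the borderline case $mp=n$, the same scheme yields any finite $q$ by choosing $r$ arbitrarily large after absorbing the extra derivative; the $L^{\infty}(\partial\Omega)$ component would instead follow from a Morrey-type embedding (using, for instance, that $W^{m,p}\hookrightarrow C(\overline\Omega)$ when $mp>n$, applied after a small perturbation of exponents).

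The principal obstacle is the Lipschitz, rather than smooth, character of the boundary: the local flattening maps are only bi-Lipschitz, so their Jacobians are merely bounded and measurable. This is routine but requires some care --- one must invoke the invariance of $W^{1,r}$ under bi-Lipschitz change of variables and the fact that the induced surface measure transforms by a bounded density. The other place where a slip is easy is the bookkeeping of Sobolev conjugate exponents: the identity $\tfrac{1}{r}=\tfrac{1}{p}-\tfrac{m-1}{n}$ must combine with $s=\tfrac{(n-1)r}{n-r}$ to yield exactly $p^{*}=\tfrac{(n-1)p}{n-mp}$, and I would verify this arithmetic before committing to the strategy.
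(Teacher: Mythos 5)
The paper does not prove this statement at all: it is recalled in the Appendix as standard background (it is the classical Sobolev trace embedding, essentially Theorem 5.4 in Adams' book), and only its consequences (Corollaries \ref{corol_1} and \ref{corol_2}) are used. So there is no ``paper proof'' to compare against; your proposal has to be judged on its own. For the main case $mp<n$ your route is the standard textbook one and it is sound: the reduction $W^{m,p}\hookrightarrow W^{1,r}$ with $\frac1r=\frac1p-\frac{m-1}{n}$ is legitimate because $(m-1)p<mp<n$, one checks $r<n$ exactly when $mp<n$, and the exponent bookkeeping does close up, since $s-1=\frac{n(r-1)}{n-r}$ gives $(s-1)r'=\frac{nr}{n-r}=r^{*}$, so the H\"older step in the half-space identity is matched by the \emph{volume} Sobolev embedding $W^{1,r}\hookrightarrow L^{r^*}$ (calling it ``tangential'' is a slip of terminology, not of substance). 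The interpolation down to $q\in[p,p^*)$ needs $\partial\Omega$ to have finite measure, i.e.\ boundedness of $\Omega$, which the statement as quoted omits but which holds in the paper's setting; and the bi-Lipschitz flattening argument is routine, as you say.

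The genuine gap is in the borderline case $mp=n$. There the reduction gives $r=n$, and $W^{1,n}(\Omega)\to L^{q}(\partial\Omega)$ holds for every finite $q$ but \emph{fails} for $q=\infty$ (e.g.\ $u(x)=\log\log(e^2/|x|)$ lies in $W^{1,2}$ of a planar domain and has unbounded trace). Your proposed fix --- ``a small perturbation of exponents'' to land in the Morrey regime $mp>n$ --- cannot work: perturbing $p$ upward changes the space you are embedding, not the function you are given, and no continuity argument bridges $L^q$ for all $q<\infty$ to $L^\infty$. In fact the endpoint $q=\infty$ in the statement is an error of the quoted theorem itself (the correct range is $p\le q<\infty$), and since the paper only ever uses $q=4$, nothing downstream is affected. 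You should either correct the statement to $q<\infty$ or acknowledge that the $L^\infty$ endpoint is not provable; as written, that step of your argument would fail.
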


\begin{theorem}\label{Adams_2}
Let $\Omega$ be a domain in $\mathbb{R}^n$, $m\geq 1$ and $1\leq p<\infty$. If $mp<n$ then the embedding
\begin{equation}
W^{m,p}(\Omega) \to L^q(\Omega) \label{embedding_1} 
\end{equation}
is continuous for $p\leq q\leq p^*=\frac{np}{n-mp}$. If $mp=2$, then the above embedding is continuous for $p\leq q <\infty$.
\end{theorem}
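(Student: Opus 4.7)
The strategy is the classical one: reduce to the whole space via an extension operator, prove the endpoint (Gagliardo--Nirenberg--Sobolev) inequality there, and then interpolate to cover the full range of exponents. Since $\Omega$ is a domain with Lipschitz boundary, there exists a bounded linear extension operator $\mathrm{E}\colon W^{m,p}(\Omega)\to W^{m,p}(\mathbb{R}^n)$ (Calder\'on--Stein). So it suffices to establish the corresponding embedding on $\mathbb{R}^n$, restrict, and compose with $\mathrm{E}$.

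\textbf{The endpoint on $\mathbb{R}^n$, case $m=1$.} First I would treat $p=1$ via the Gagliardo--Nirenberg--Sobolev inequality
\[
\|u\|_{L^{n/(n-1)}(\mathbb{R}^n)} \le \prod_{i=1}^{n} \|\partial_i u\|_{L^1(\mathbb{R}^n)}^{1/n}
\le \tfrac{1}{n}\|\nabla u\|_{L^1(\mathbb{R}^n)}, \qquad u\in C_c^\infty(\mathbb{R}^n),
\]
which follows from writing $|u(x)|\le \int_{-\infty}^{x_i}|\partial_i u|\,dt$ for each $i$ and combining the $n$ bounds with the Loomis--Whitney (iterated H\"older) inequality. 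For $1<p<n$ I would apply this inequality to $v=|u|^{\gamma}$ with $\gamma=\frac{(n-1)p}{n-p}$ chosen so that both sides become natural powers of $\|u\|_{L^{p^\ast}}$ and $\|\nabla u\|_{L^p}$; rearranging and taking the $p$-th root yields
\[
\|u\|_{L^{p^\ast}(\mathbb{R}^n)}\le C(n,p)\,\|\nabla u\|_{L^p(\mathbb{R}^n)}.
\]
Density of $C_c^\infty(\mathbb{R}^n)$ in $W^{1,p}(\mathbb{R}^n)$ then extends the inequality to all of $W^{1,p}$.

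\textbf{Higher order and intermediate exponents.} For $m\ge 2$ with $mp<n$, I would iterate: if $u\in W^{m,p}(\mathbb{R}^n)$ then $\nabla u\in W^{m-1,p}(\mathbb{R}^n)\hookrightarrow L^{p_1}(\mathbb{R}^n)$ with $p_1=np/(n-p)$, so $u\in W^{1,p_1}(\mathbb{R}^n)\hookrightarrow L^{p_2}$, and so on. After $m$ steps the exponent reaches $p^\ast_m=np/(n-mp)$, which matches the claimed $p^\ast$ in the statement. To obtain the full range $p\le q\le p^\ast$, I would use the inclusion $W^{m,p}(\Omega)\hookrightarrow L^p(\Omega)$ (trivial when $\Omega$ is bounded, or via the extension for general Lipschitz $\Omega$ composed with a localization) together with the endpoint $L^{p^\ast}$ bound and H\"older interpolation $\|u\|_{L^q}\le \|u\|_{L^p}^{\theta}\|u\|_{L^{p^\ast}}^{1-\theta}$. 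Finally, the borderline case $mp=n$ (the statement's ``$mp=2$'' is evidently a typo for $mp=n$) is reached by observing that for any finite $q\ge p$ one has $q\le p^\ast_{m'}$ for some fractional scaling, or equivalently by choosing $\tilde p<p$ with $m\tilde p<n$ and $\tilde p^\ast=q$, applying the strict case, and using the inclusion $W^{m,p}\hookrightarrow W^{m,\tilde p}$ locally (or again via extension and a cutoff).

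\textbf{Anticipated obstacle.} The genuinely nontrivial analytic step is the Gagliardo--Nirenberg--Sobolev inequality in the $p=1$ case; everything else is algebraic manipulation, iteration, and interpolation. Mild technical care is also required at two non-analytic points: invoking the Calder\'on--Stein extension for Lipschitz domains (so the Lipschitz hypothesis is actually used) and handling the ``$mp=n$'' borderline, where one cannot take $q=\infty$ and must cut off at arbitrary finite $q$. Since this is a textbook result, I would simply cite the corresponding chapters of Adams--Fournier or Brezis rather than reproduce every estimate.
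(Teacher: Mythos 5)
This statement is one of the background results recalled in the paper's appendix; the paper gives no proof at all, simply citing the classical literature (the result is the standard Sobolev embedding theorem, essentially as in Adams--Fournier). Your proposal is a correct outline of the standard argument, so there is nothing in the paper to compare it against: extension to $\mathbb{R}^n$ via Calder\'on--Stein, the Gagliardo--Nirenberg--Sobolev endpoint for $p=1$, the substitution $v=|u|^{\gamma}$ for $1<p<n$, iteration on the order $m$, and H\"older interpolation for intermediate $q$. You are also right that the hypothesis ``$mp=2$'' in the statement is a typo for $mp=n$, consistent with the preceding trace-embedding theorem.

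Two small points of care, neither fatal. First, the chain $\prod_{i}\|\partial_i u\|_{L^1}^{1/n}\le \tfrac1n\|\nabla u\|_{L^1}$ is off by a constant: AM--GM gives $\tfrac1n\sum_i\|\partial_i u\|_{L^1}$, and $\sum_i|\partial_i u|\le \sqrt{n}\,|\nabla u|$, so the constant is $n^{-1/2}$ rather than $n^{-1}$ (or one can simply bound each factor by $\|\nabla u\|_{L^1}$); this is cosmetic. Second, your treatment of the borderline $mp=n$ via $W^{m,p}\hookrightarrow W^{m,\tilde p}$ with $\tilde p<p$ requires $\Omega$ to have finite measure (or a localization), since that inclusion fails on unbounded domains; the cleaner route is to iterate down to $W^{1,n}$ and invoke the known $W^{1,n}\hookrightarrow L^q$ embedding for all finite $q\ge n$. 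In the context of this paper $\Omega$ is a bounded Lipschitz domain, so your argument goes through as written.
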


The following two corollaries are consequences of Theorem \ref{Adams_1} and Theorem \ref{Adams_2}.

\begin{corollary}\label{corol_1}
Let $\Omega$ satisfy assumptions of Theorem~\ref{Adams_1}. Then, for $n=2$, we have $H^1(\Omega)\to L^q(\Omega)$ continuously, for $q\geq 2$. For $n=3$ we have $H^1(\Omega)\to L^4(\partial \Omega)$ continuously.
\end{corollary}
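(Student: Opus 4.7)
The plan is to derive Corollary \ref{corol_1} as a direct, case-by-case specialization of the two embedding theorems just stated, with $m=1$ and $p=2$ (so that $W^{m,p}(\Omega) = H^1(\Omega)$). There is essentially no analytic content beyond checking which hypothesis of which theorem applies in each dimension; the only care needed is in identifying the borderline case $mp=n$ and in computing the boundary-trace exponent $p^{\ast}$.

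First, for the case $n=2$, I would invoke Theorem~\ref{Adams_2}. Here $mp = 1\cdot 2 = 2 = n$, so we are exactly in the borderline regime. The second half of Theorem~\ref{Adams_2} then gives continuously $W^{1,2}(\Omega) \hookrightarrow L^q(\Omega)$ for every $q$ with $p \le q < \infty$, i.e., for all $q \ge 2$, which is precisely the claim.

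Second, for the case $n=3$, I would invoke Theorem~\ref{Adams_1} to obtain a boundary trace estimate. With $m=1$, $p=2$, $n=3$ we have $mp = 2 < 3 = n$, and the critical trace exponent is
\[
p^{\ast} \;=\; \frac{(n-1)p}{n-mp} \;=\; \frac{2\cdot 2}{3-2} \;=\; 4.
\]
Since $2 = p \le 4 \le p^{\ast} = 4$, Theorem~\ref{Adams_1} yields the continuous embedding $H^1(\Omega) \hookrightarrow L^4(\partial\Omega)$, as required.

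There is no genuine obstacle here; the statement is an immediate corollary of the general Sobolev and trace embeddings, included only to fix the specific exponents used elsewhere in the paper (e.g., in the estimates of Lemma~\ref{lemma4.2} and Lemma~\ref{lemma4.3}, where $\|\varphi\|_{L^4(\partial\Omega)}$ and $\|\varphi\|_{L^4(\Omega)}$ are controlled by $\|\varphi\|_V$). The only point worth being explicit about is that the result is stated dimension by dimension because the case $n=2$ uses the borderline version of the interior embedding while $n=3$ uses the subcritical version of the trace embedding; both together are what guarantees that the arguments in the preceding lemmas go through uniformly in $d\in\{2,3\}$.
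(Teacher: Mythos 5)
Your proposal is correct and matches the paper's (implicit) argument: the paper offers no written proof, stating only that the corollary is a consequence of Theorems~\ref{Adams_1} and~\ref{Adams_2}, and your case-by-case specialization with $m=1$, $p=2$ (borderline interior embedding for $n=2$, trace embedding with $p^{\ast}=4$ for $n=3$) is exactly the intended verification. The only incidental point is that the hypothesis ``$mp=2$'' in the statement of Theorem~\ref{Adams_2} is evidently a typo for ``$mp=n$,'' which you have correctly read as the borderline case.
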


\begin{corollary}\label{corol_2}
Let $\Omega$ be a domain in $\mathbb{R}^n$. If $n=2$ then we have
$H^1(\Omega) \to L^q(\Omega)$ for $q\ge 2$.  
If $n=3$ we have $H^1(\Omega) \to L^q(\Omega)$ for $2\le q\le 6
$. 
\end{corollary}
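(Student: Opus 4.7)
The plan is to invoke Theorem~\ref{Adams_2} directly with the specialization $m=1$ and $p=2$, so that the Sobolev space under consideration is exactly $W^{1,2}(\Omega)=H^1(\Omega)$ and the product appearing in the hypotheses becomes $mp=2$. With this choice, the two cases of the corollary correspond precisely to the two cases in Theorem~\ref{Adams_2}, distinguished by whether $mp<n$ or $mp=n$.

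First, I would treat the case $n=2$. Here $mp=2=n$, so the borderline alternative of Theorem~\ref{Adams_2} applies (reading the condition ``$mp=2$'' there as the intended ``$mp=n$'', consistent with the usual statement of the Sobolev embedding theorem). That alternative yields a continuous embedding $W^{1,2}(\Omega)\hookrightarrow L^q(\Omega)$ for every $p\le q<\infty$, i.e.\ for every $q\ge 2$, which is exactly the claim in dimension two.

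Next, I would handle the case $n=3$. Now $mp=2<3=n$, so the first alternative of Theorem~\ref{Adams_2} applies. The critical exponent is computed as
\[
p^{*}=\frac{np}{n-mp}=\frac{3\cdot 2}{3-1\cdot 2}=6,
\]
and the theorem yields continuous embedding $H^{1}(\Omega)\hookrightarrow L^{q}(\Omega)$ for $2\le q\le 6$, matching the stated range.

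There is no genuine obstacle: the corollary is a direct specialization of Theorem~\ref{Adams_2} obtained by plugging in $m=1$, $p=2$ and the values $n=2,3$, and identifying the critical exponent in each case. The only minor point worth flagging explicitly is the borderline case $n=2$, where one must interpret the condition on $mp$ in Theorem~\ref{Adams_2} as equality with $n$ (rather than with the literal number $2$) in order for the range $q\ge 2$ to be recovered; this is the standard form of the Sobolev embedding at the critical exponent.
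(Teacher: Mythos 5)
Your proof is correct and is exactly the argument the paper intends: Corollary~\ref{corol_2} is stated as a direct consequence of Theorem~\ref{Adams_2}, obtained by taking $m=1$, $p=2$ and specializing to $n=2$ (the borderline case $mp=n$) and $n=3$ (where $p^*=6$). Your observation that the condition ``$mp=2$'' in Theorem~\ref{Adams_2} should read ``$mp=n$'' is also correct and is the standard form of the critical Sobolev embedding.
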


The last two theorems are the Simon compactness result \cite{SIMON} and the 
Vitali convergence theorem, respectively.

\begin{theorem}
\label{Simon}
Consider a triple of Banach spaces $X\subset Y\subset Z$, where the first embedding is compact and the second is continuous. Then for a finite $T>0$ and $p>1$ the embedding $\{u\in L^\infty (0,T;X) \mid \dot{u} \in L^p(0,T;Z)\} \subset C(0,T;Y)$ is compact.
\end{theorem}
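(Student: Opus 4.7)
The plan is to apply the Arzel\`a--Ascoli theorem in $C([0,T];Y)$ to an arbitrary bounded sequence in the space
\[
\mathcal{F} = \{u\in L^\infty(0,T;X) \mid \dot{u} \in L^p(0,T;Z)\}.
\]
Two ingredients drive the argument: (i) the Ehrling--Lions interpolation lemma, which uses the compact embedding $X\hookrightarrow Y$ and the continuous embedding $Y\hookrightarrow Z$ to produce, for every $\varepsilon>0$, a constant $C_\varepsilon>0$ such that $\|v\|_Y \le \varepsilon \|v\|_X + C_\varepsilon\|v\|_Z$ for all $v\in X$; and (ii) the fact that $\dot u \in L^p(0,T;Z)\subset L^1(0,T;Z)$ forces the Bochner identity $u(t) - u(s) = \int_s^t \dot u(\tau)\,d\tau$ to hold in $Z$, so that $u$ admits a representative in $C([0,T];Z)$ that is $(1-1/p)$-H\"older continuous on $[0,T]$.

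First I would fix representatives. Since $\dot u \in L^1(0,T;Z)$, the integral formula above yields a $Z$-valued continuous representative of $u$. Combining $\|u\|_{L^\infty(0,T;X)} \le M$ with a lower-semicontinuity argument (the closed ball of radius $M$ in $X$ is weakly closed in $X$, hence its image in $Z$ is closed), one upgrades the essential pointwise-in-$t$ bound in $X$ to a bound for every $t\in[0,T]$. Continuity of this representative from $[0,T]$ into $Y$ then follows at once from the Ehrling splitting detailed below.

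Next, let $\{u_n\}\subset\mathcal{F}$ be bounded with $\|u_n\|_{L^\infty(0,T;X)}\le M$ and $\|\dot u_n\|_{L^p(0,T;Z)}\le K$. The H\"older estimate $\|u_n(t)-u_n(s)\|_Z \le K\,|t-s|^{1-1/p}$, combined with Ehrling applied to $u_n(t)-u_n(s)$, gives
\[
\|u_n(t) - u_n(s)\|_Y \;\le\; 2M\varepsilon + C_\varepsilon K\,|t-s|^{1-1/p} \qquad \text{uniformly in } n.
\]
Choosing first $\varepsilon$ small and then $|t-s|$ small yields uniform equicontinuity of $\{u_n\}$ in $C([0,T];Y)$. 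For pointwise precompactness, at each fixed $t\in[0,T]$ the set $\{u_n(t)\}$ lies in the closed $M$-ball of $X$, which is relatively compact in $Y$ by the compact embedding $X\hookrightarrow Y$. Arzel\`a--Ascoli then extracts a subsequence converging in $C([0,T];Y)$, establishing compactness of the embedding.

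The principal obstacle, in my view, is the preliminary step of passing from the almost-everywhere equivalence class $u\in L^\infty(0,T;X)$ to an honest continuous $Y$-valued representative for which the $X$-bound holds at every $t$: one must combine the Bochner fundamental theorem of calculus in $Z$ with a closedness argument to transfer the $L^\infty$-in-$X$ bound to a pointwise-in-$t$ bound. Once that subtlety is resolved, the Ehrling-plus-Arzel\`a--Ascoli mechanism is routine.
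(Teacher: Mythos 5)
The paper does not actually prove this statement: it is quoted as Theorem \ref{Simon} and attributed to Simon's paper \cite{SIMON} (Corollary 4 there), whose own argument runs through a characterization of relatively compact subsets of $C(0,T;B)$ and $L^p(0,T;B)$ in terms of time translations. Your Ehrling--Arzel\`a--Ascoli route is therefore a genuinely different, self-contained proof, and it is a standard and workable one: the H\"older bound $\|u(t)-u(s)\|_Z\le K|t-s|^{1-1/p}$ (this is exactly where $p>1$ enters), the interpolation inequality $\|v\|_Y\le\varepsilon\|v\|_X+C_\varepsilon\|v\|_Z$, and pointwise relative compactness in $Y$ coming from the compact embedding $X\hookrightarrow Y$ are the three ingredients needed, and your equicontinuity estimate $2M\varepsilon+C_\varepsilon K|t-s|^{1-1/p}$ is the right one.

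One justification should be repaired. You claim that the closed $M$-ball of $X$ is weakly closed in $X$, ``hence its image in $Z$ is closed.'' For general Banach spaces $X$ this inference is false: weak closedness of a set does not make its image under a continuous injection closed (the image of the closed unit ball of $C^1[0,1]$ in $C[0,1]$ is not closed, for instance). The implication does hold when $X$ is reflexive, since the ball is then weakly compact and its image is a convex weakly compact, hence closed, subset of $Z$; but reflexivity is not among the hypotheses of the theorem. Fortunately you do not need the bound $\|u_n(t)\|_X\le M$ for \emph{every} $t$. Let $E_n\subset[0,T]$ be the full-measure set on which $u_n(t)\in X$ with $\|u_n(t)\|_X\le M$. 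Your Ehrling estimate is valid for all $s,t\in E_n$, so $u_n$ restricted to $E_n$ is uniformly $Y$-continuous with a modulus independent of $n$; since $E_n$ is dense and $Y$ is complete, this restriction extends to a $Y$-continuous function on all of $[0,T]$ with the same modulus, and the extension coincides with the $Z$-continuous representative by the continuous embedding $Y\subset Z$. This yields the required equicontinuity, and for each fixed $t$ the values $u_n(t)$ lie in the closure in $Y$ of the $M$-ball of $X$, which is a fixed compact subset of $Y$ by the compact embedding; that is all the vector-valued Arzel\`a--Ascoli theorem requires. With this adjustment your argument is complete.
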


\begin{theorem}\label{VITALI}
Let ($\Omega,\mu)$ be a finite measure space. If the sequence of functions $\{f_n\}_{n\geq 1} \subset L^1(\Omega)$ is uniformly integrable, and $f_n\to f$ pointwise in $\mu$, then $f\in L^1(\Omega)$ and 
\[
\int_{\Omega} f_n \, d\mu \to \int_{\Omega} f \, d\mu.
\]
\end{theorem}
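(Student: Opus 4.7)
The plan is to prove the Vitali convergence theorem in two stages: first show $f\in L^1(\Omega)$ (and that it inherits uniform integrability from the sequence), then deduce $L^1$-convergence by combining Egorov's theorem on a large set with the uniform integrability control on the complement.

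First I would establish integrability. Since $(\Omega,\mu)$ is finite and $\{f_n\}$ is uniformly integrable, taking $\delta>0$ associated with $\epsilon=1$ in the definition of uniform integrability and partitioning $\Omega$ into finitely many pieces each of measure less than $\delta$ yields $\sup_n \int_\Omega |f_n|\,d\mu <\infty$. The pointwise $\mu$-a.e.\ limit $f$ is measurable as an a.e.\ limit of measurable functions, and Fatou's lemma applied to $|f_n|$ gives $\int_\Omega |f|\,d\mu \le \liminf_n \int_\Omega |f_n|\,d\mu <\infty$, so $f\in L^1(\Omega)$. Applying Fatou on an arbitrary measurable set $A$ yields $\int_A |f|\,d\mu \le \liminf_n \int_A |f_n|\,d\mu$, which shows that $f$ inherits the uniform integrability estimate: whenever $\mu(A)<\delta$, also $\int_A |f|\,d\mu$ is small.

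Next, fix $\epsilon>0$ and choose $\delta>0$ from uniform integrability so that $\mu(A)<\delta$ implies $\sup_n \int_A |f_n|\,d\mu<\epsilon/3$ and, by the previous step, $\int_A |f|\,d\mu\le \epsilon/3$. Because $\mu(\Omega)<\infty$ and $f_n\to f$ pointwise $\mu$-a.e., Egorov's theorem provides a measurable set $E\subset\Omega$ with $\mu(\Omega\setminus E)<\delta$ such that $f_n\to f$ uniformly on $E$. Split
\[
\int_\Omega |f_n-f|\,d\mu \;\le\; \int_E |f_n-f|\,d\mu \;+\; \int_{\Omega\setminus E}|f_n|\,d\mu \;+\; \int_{\Omega\setminus E}|f|\,d\mu.
\]
The last two terms are each bounded by $\epsilon/3$ by the choice of $\delta$. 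For the first, uniform convergence on $E$ gives $|f_n-f|<\epsilon/(3\mu(E)+1)$ on $E$ for all sufficiently large $n$, so this term is also below $\epsilon/3$ eventually. Hence $\int_\Omega |f_n-f|\,d\mu\to 0$, which in particular implies $\int_\Omega f_n\,d\mu\to \int_\Omega f\,d\mu$.

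The argument is classical and has no deep obstacle; the subtlety worth highlighting is the coordination between the two approximation scales. One must pick $\delta$ from uniform integrability \emph{first}, then extract the Egorov set $E$ so that $\mu(\Omega\setminus E)<\delta$ (not the other way around), and one must separately verify that the limit $f$ enjoys the same smallness-on-small-sets bound so that the $\int_{\Omega\setminus E}|f|\,d\mu$ tail can be controlled without invoking any additional hypothesis on $f$. Finiteness of $\mu$ is used twice: to guarantee boundedness of $\{f_n\}$ in $L^1$, and to legitimize Egorov's theorem.
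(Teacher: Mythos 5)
Your proof is correct, but note that the paper does not actually prove this statement: Theorem \ref{VITALI} is recalled in the appendix as classical background material (the Vitali convergence theorem) and is used without proof, so there is no argument in the paper to compare against. Your two-stage route --- first extracting $\sup_n\|f_n\|_{L^1}<\infty$ and $f\in L^1$ via Fatou, then combining Egorov with the uniform-integrability tail bound --- is one of the standard proofs, and your remark about fixing $\delta$ before choosing the Egorov set is exactly the right point of care. Two small observations. First, the partition argument for $L^1$-boundedness tacitly assumes $\Omega$ can be split into finitely many pieces of measure less than $\delta$, which can fail on a finite measure space with atoms of measure at least $\delta$; there the bound instead follows from pointwise convergence on each such atom (there are only finitely many), and in any case the spaces where the paper invokes the theorem are non-atomic. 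Second, a slightly leaner variant of the same idea shows $\{f_n\}$ is Cauchy in $L^1(\Omega)$ directly from the split $\int_E|f_n-f_m|+\int_{\Omega\setminus E}|f_n|+\int_{\Omega\setminus E}|f_m|$, which yields $f\in L^1$ and the convergence of the integrals in one stroke, without the separate Fatou step for the limit function.
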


\end{appendix}
\end{document}